\documentclass[12pt,oneside,english]{article}
\usepackage{babel}
\usepackage[T1]{fontenc}
\usepackage[latin9]{inputenc}
\usepackage{simpler-wick}
\usepackage{geometry}
\geometry{verbose,tmargin=3.0cm,bmargin=3.0cm,lmargin=2.5cm,rmargin=2.75cm}
\setlength{\parskip}{\smallskipamount}
\setlength{\parindent}{0pt}
\usepackage{amstext}
\usepackage{amsthm}
\usepackage[justification=centering]{caption}
\usepackage{amssymb}
\usepackage{stmaryrd}
\usepackage{graphicx}
\usepackage{mathtools}
\usepackage{wasysym}
\usepackage{mathrsfs}
\usepackage{bbm}
\usepackage{esint}
\usepackage{dsfont}
\usepackage{xcolor}
\usepackage{verbatim}
\usepackage{subfigure}
\usepackage{setspace}
\usepackage{url}
\usepackage[percent]{overpic}
\usepackage{tikz}
\usepackage{relsize}
\usepackage{titlesec}

\setcounter{tocdepth}{1} 

\makeatletter

\definecolor{mygray)}{rgb}{0.75, 0.75, 0.75}

\numberwithin{equation}{section}
\numberwithin{figure}{section}
\newtheorem{thm}{Theorem}[section]
\newtheorem{lemma}[thm]{Lemma}
\newtheorem{prop}[thm]{Proposition}
\newtheorem{coro}{Corollary}[thm]

\theoremstyle{definition}

\theoremstyle{remark}
\newtheorem{rmk}{Remark}[section]
\newtheorem{ex}[rmk]{Example}

\numberwithin{thm}{section}

\definecolor{kallecol}{rgb}{.99,.1,.5}
\definecolor{davidcol}{rgb}{.5,.1,.99}
\definecolor{sketchcol}{rgb}{.4,.4,.8}
\definecolor{outlinecol}{rgb}{.8,.4,.3}




\newcommand{\lftsym}{\mathrm{L}}
\newcommand{\rgtsym}{\mathrm{R}}


\newcommand{\epar}{\vspace{3pt}}
\newcommand{\Grass}[1]{\Lambda(#1)}

\newcommand{\cG}{\mathcal{G}}

\newcommand{\Ver}{\mathcal{V}}
\newcommand{\Gra}{\mathcal{G}}
\newcommand{\Edg}{\mathcal{E}}

\newcommand{\Gen}{\textrm{SyFer}_\Gra}
\newcommand{\Action}{\mathcal{S}}
\newcommand{\ParFun}{\mathcal{Z}}
\newcommand{\Dim}[1]{\mathfrak{D}(#1)}
\newcommand{\Dimm}[1]{\mathfrak{D}^2(#1)}

\newcommand{\bigCorFun}[1]{\big\langle #1 \big\rangle}
\newcommand{\BigCorFun}[1]{\Big\langle #1 \Big\rangle}

\newcommand{\Prob}{\mathbb{P}}
\newcommand{\Odd}[1]{\mathcal{E}_{\textnormal{O}}(#1)}
\newcommand{\Even}[1]{\mathcal{E}_{\textnormal{E}}(#1)}

\newcommand{\White}{\mathcal{V}_\white}
\newcommand{\Gray}{\mathcal{V}_\gray}

\newcommand{\Black}{\mathcal{V}_\black}
\newcommand{\ZWhite}{\mathbb{Z}^2_\white}
\newcommand{\ZBlack}{\mathbb{Z}^2_\black}
\newcommand{\ZGray}{\mathbb{Z}^2_\gray}

\newcommand{\PathFact}[1]{\Xi(#1)}
\newcommand{\CompWith}{\sqsubset}
\newcommand{\WithComp}{\sqsupset}
\newcommand{\FromTo}{\rightsquigarrow}
\newcommand{\Expect}{\mathbb{E}}
\newcommand{\Length}[1]{\ell(#1)}

\newcommand{\bx}{\textbf{x}}
\newcommand{\by}{\textbf{y}}

\newcommand{\mtrx}[1]{\textbf{#1}}

\newcommand{\Tail}[2]{\textnormal{tail}_{#2}(#1)}
\newcommand{\Head}[2]{\textnormal{head}_{#2}(#1)}

\newcommand{\SymDiff}{\triangle}

\newcommand{\ev}{\textnormal{ev}}
\newcommand{\LocFi}{\mathcal{F}_{\textnormal{loc}}}
\newcommand{\NuFi}{\mathcal{F}_{\textnormal{null}}}

\newcommand{\rad}{\textnormal{rad}}
\newcommand{\interior}{\textnormal{int}^\sharp}
\newcommand{\Kast}{\mathbf{K}}
\newcommand{\Ball}{\textnormal{B}^\sharp}

\newcommand{\black}{\bullet}
\newcommand{\Fields}{\mathcal{F}}
\newcommand{\NormOrd}[2]{\overset{#1}{\overleftrightarrow{#2}}}
\newcommand{\One}{\mathbbm{1}}
\newcommand{\odd}{\One_\textnormal{odd}}
\newcommand{\even}{\One_\textnormal{even}}
\newcommand{\norm}[1]{\Vert #1\Vert}

\newcommand{\NullRad}[1]{R^\mathcal{N}_{#1}}
\newcommand{\SingRad}[1]{R^\mathcal{S}_{#1}}

\newcommand{\Green}{\mathbf{G}}
\newcommand{\supp}{\textnormal{supp}^\sharp}
\newcommand{\Pierced}[1]{{#1}_{\smallsink}}
\newcommand{\white}{{\circ}}
\newcommand{\gray}{{\scriptscriptstyle\mathbf{1}}}
\newcommand{\sink}{\tikz{\draw[black,scale=0.05] (2.45,2.45) circle (70pt) ;
\draw[thin,scale=0.05] (0,0) -- (4.9,4.9) (0,4.9) -- (4.9,0);}}
\newcommand{\protectsink}{\protect\tikz{\protect\draw[black,scale=0.05] (2.45,2.45) circle (70pt) ;
\protect\draw[thin,scale=0.05] (0,0) -- (4.9,4.9) (0,4.9) -- (4.9,0);}}
\newcommand{\smallsink}{\tikz{\draw[very thin,black,scale=0.03] (2.45,2.45) circle (70pt) ; \draw[very thin,scale=0.03] (0,0) -- (4.9,4.9) (0,4.9) -- (4.9,0);}}
\newcommand{\tinysink}{\tikz{\draw[ultra thin,black,scale=0.019] (2.45,2.45) circle (70pt) ; \draw[ultra thin,scale=0.019] (0,0) -- (4.9,4.9) (0,4.9) -- (4.9,0);}}
\newcommand{\protectsmallsink}{\protect\tikz{\protect\draw[black,scale=0.03] (2.45,2.45) circle (70pt) ;\protect\draw[very thin,scale=0.03] (0,0) -- (4.9,4.9) (0,4.9) -- (4.9,0);}}

\newcommand{\Vir}{\mathbf{Vir}}
\newcommand{\nL}{\textnormal{L}}
\newcommand{\nC}{\textnormal{C}}

\newcommand{\ii}{\mathbbm{i}}
\newcommand{\re}{\textnormal{Re}\,}
\newcommand{\im}{\textnormal{Im}\,}
\newcommand{\sgn}{\mathrm{sgn}}

\newcommand{\bdry}{\partial}

\newcommand{\cconj}[1]{\overline{#1}}

\DeclareMathOperator{\sign}{sgn}

\newcommand{\SymmGrp}{\mathfrak{S}}

\newcommand{\dist}{\mathrm{dist}}








\newcommand{\id}{\mathsf{id}}

\newcommand{\indicator}[1]{\mathbbm{1}_{#1}}

\newcommand{\C}{\mathbb{C}} 
\newcommand{\Z}{\mathbb{Z}} 
\newcommand{\Znn}{\Z_{\geq 0}} 
\newcommand{\Zpos}{\Z_{> 0}} 
\newcommand{\N}{\mathbb{N}} 

\newcommand{\dcoint}[1]{\sqint_{{#1}}} 
\newcommand{\dd}[1]{\ud^\sharp {#1}}
\newcommand{\ud}{\mathrm{d}} 


\newcommand{\widthL}{\ell_\lftsym}
\newcommand{\widthR}{\ell_\rgtsym}

\newcommand{\deebar}{{\bar{\partial}}}

\newcommand{\dee}{{\partial}}














\newcommand{\set}[1]{\left\{ #1 \right\}}











\newcommand{\geneigval}{\mu}

\newcommand{\vaceigvalLR}%
	{\geneigval^{(\widthL,\widthR)}_{\emptyset,\emptyset}}

\titleformat{\subsection}[runin]
{\normalfont\bfseries}{\thesubsection}{0pt}{.$\,$}
\titleformat{\section}
{\normalfont\Large\bfseries}{\thesection}{0pt}{.$\ \ $}

\begin{document}

\title{\Large\scshape\bfseries Discrete symplectic fermions on double dimers and their Virasoro representation\vspace{0.5cm}}

\author{{David Adame$\,$-$\,$Carrillo}\vspace{0.2cm}}


\date{\textit{\normalsize Department of Mathematics and Systems Analysis, Aalto University, Espoo, Finland}}
\maketitle

\begin{abstract}
\quad A discrete version of the Conformal Field Theory of symplectic fermions is introduced and discussed. 
Specifically, discrete symplectic fermions are realised as holo\-morphic observables in the double-dimer model.
Using techniques of discrete complex a\-na\-lysis, the space of local fields of discrete symplectic fermions on the square lattice is proven to carry a representation of the Virasoro algebra with central charge $-2$.
\end{abstract}

\section{Introduction}%
\label{sec:intro}
\noindent\qquad
Over the last twenty-five years, numerous conformally invariant properties of the scaling limit of various lattice models have been rigorously established.
Nevertheless, such conformally invariant behaviour of statistical models had been studied in the Physics literature using Conformal Field Theory (CFT) ever since the founding works of Belavin, Po\-lya\-kov and Zamolodchikov in the 1980s \cite{BPZ1,BPZ2}.
Although it is a non-rigorous approach \linebreak to statistical mechanics, CFT has served as a plentiful source of insights to the Ma\-the\-ma\-tics community.
Yet, it is fair to say that it remains far from well-under\-stood from a ma\-the\-ma-\linebreak tical perspective.
\epar

\noindent\qquad
A relevant example of a statistical model is the \emph{dimer model}, in which one takes perfect matchings of the ver\-tices in a graph uniformly at random.
This model has been studied in the Physics literature since as early as 1937 \cite{Fowler};
and, in 1961, Kasteleyn \cite{Kast} and, independently, Fisher and Temperley \cite{Fisher} exactly solved the model in a statistical sense, i.e. they found an exact formula for the number of dimer configurations in finite subgraphs of the square lattice.
As for the scaling limit of the model and its conformal invariance, in the early 2000s, Kenyon established the convergence of the height function of dimers to the Gaussian Free Field \cite{Kenyon1,Kenyon2}.
It is worth pointing out that other approaches have been taken to prove such convergence results \cite{Bufetov,Berestycki}, and that similar results have been proven with more generality \cite{Russkikh}.
\epar

\noindent\qquad
The model considered in this paper is the \emph{double-dimer model}, in which one takes two independent copies of the dimer model.
A strong motivation to study this model is a conjecture by Kenyon, who predicted the loops that arise when superimposing two dimer covers on the square lattice to converge to CLE$_4$ in the scaling limit \cite{RohSch}.
There have been very relevant developments in this direction in the last ten years \cite{Kenyon3,Dubedat,BasChe,BaiWan}, though some questions about the full conjecture are still open \cite{BasChe}.
\epar

\noindent\qquad
In two-di\-men\-sional CFT, the conformal symmetries of the theory are encoded in the infinite-dimensional Lie algebra $\Vir\coloneqq\bigoplus_{n\in\Z}\C \nL_n\oplus\C \nC$ with Lie brackets
\begin{align*}
	\big[\nL_n,\nL_m\big] & =(n-m)\,\nL_{n+m}+\frac{n^3-n}{12}\,\delta_{n+m}\,\nC\,,
	\\
	\big[\nC,\Vir\big]\, & = \ 0\,,
\end{align*}
known as the \emph{Virasoro algebra}.
Algebraically, two-dimensional CFTs are studied in terms of representations of the Virasoro algebra.
In a CFT, the operator $\nC$ is proportional to the identity operator, and its eigenvalue is called the \emph{central charge} of the CFT in question.
\epar

\noindent\qquad
The question that is addressed here is whether the algebraic structure of a CFT can be found already at the lattice level.
In other words, can one build a Virasoro representation using observables on the lattice before taking scaling limits?
This question was answered positively for the discrete Gaussian Free Field and the Ising model in \cite{HKV}.
\epar

\noindent\qquad
Another important CFT shall be considered here: Symplectic fermions, which has central charge $-2$ \cite{Guraire,Kausch1,Kausch2,GaKa}.
This theory is more exotic in the sense that it is of \emph{lo\-ga\-rith\-mic type} (logCFT).
The terminology stems from the fact that logCFTs possess correlation functions with logarithmic dependencies.
On the algebraic side, logCFTs feature more intricate representations of the Virasoro algebra --- representations in which the $\nL_0$ operator cannot be diagonalised.
In particular, symplectic fermions exhibit a subrepresentation \cite{Kausch2} known as a \emph{staggered module} \cite{KytRid}.
\epar

\noindent\qquad
In order to study symplectic fermions on the lattice, a novel discretisation of the theory is introduced.
The fermionic fields $\xi$ and $\eta$ can be defined as holomorphic observables on the double-dimer model on general bi\-par\-tite finite planar graphs.
In particular, one can make precise sense of random variables of the form
$$
\eta(w_1)\xi(b_1)\eta(w_2)\xi(b_2)\cdots\eta(w_n)\xi(b_n)\,,
$$
where $w_i$ and $b_i$ are vertices of different color of the underlying bipartite graph --- see Section \ref{sec: finite} for more details.
The sui\-ta\-bility of this discretisation can be justified by considering the in\-finite volume limit, in which the observables behave in the way predicted by CFT.
The connection between symplectic fermions and double-dimers, although unprecedented, is coherent with the dimer model being studied as CFT of central charge $-2$ \cite{IPRH}.
\epar

\noindent\qquad
The main result of this paper ---Theorem \ref{Virasoro}--- can be informally stated as follows:

\textbf{Theorem.} \textit{The space of local fields of the discrete symplectic fermions on the square lattice constitutes a representation of the Virasoro algebra with central charge $-2$, where the generators $\mathbf{L}_n$ for $n\in\Z$ are defined via a Sugawara construction on the current modes of the fermions $\xi$ and $\eta$.}
\epar

\noindent\qquad
Let us elaborate a bit further on the above statement.
The space in which the Virasoro ac\-tion is defined is the space of \emph{local fields}, which are, in a sense, a generalisation of the maps $z\mapsto\eta(z)$ and $z\mapsto\xi(z)$ --- see Section \ref{sec: fields-currents} for precise definitions.
Using the techniques of discrete complex analysis developed in \cite{HKV}, one can translate the holomorphicity of the fermions into an algebraic language.
In particular, one can define the Fourier modes $\eta_n$ and $\xi_n$ for $n\in\Z$ of the symplectic fermions as operators in the space of local fields.
Discrete ho\-lo\-mor\-phicity yields, then, the exact anticommutation relations of the symplectic fermion algebra: 
$$
\big\{\eta_n,\xi_m\big\}
=
\delta_{n+m}\id\,
\mspace{50mu}
\textnormal{and}
\mspace{50mu}
\big\{\eta_n,\eta_m\big\}=\big\{\xi_n,\xi_m\big\}
=
{0}
$$
for $n,m\in\Z$.
Moreover, the operators $\eta_n$ and $\xi_n$ permit the definition of the Virasoro generators $\mathbf{L}_n$ via a Sugawara construction.

\epar

\textbf{Organisation of the paper.}
In Section \ref{sec: finite}, discrete symplectic fermions are defined on double-dimers on any (dimerable) bipartite finite planar graph and their relation with the Kas\-te\-leyn matrix is established.
Moreover, discrete symplectic fermions are proven to be equi\-va\-lently defined using Grassmann-algebra techniques via a discretisation of the action of symplectic fermions in the continuum.
In Section \ref{sec: infinite}, multipoint correlation functions of discrete symplectic fermions are studied along sequences of growing temperleyan domains of the square lattice.
In particular, the $2$-point function is proven to be closely related to the derivative of the discrete full-plane Green's function, and multipoint correlations are obtained from it by Wick's formula.
In Section \ref{sec: fields-currents}, the space of local fields of discrete sym\-plectic fermions on the the infinite square lattice is defined and discussed along with the current modes of the fermions.
Finally, in Section \ref{sec: virasoro}, the Virasoro modes are defined in terms of the current modes, and they are proven to satisfy the Virasoro commutation relations with central charge $-2$.

\vfill
{\bf Acknowledgments:}
First of all, I want to thank Kalle Kyt\"ol\"a for great ideas, discussions and suggestions at all stages of this project.
I am also very grateful to Misha Basok for se\-veral insightful and fruitful discussions about the topics discussed in Section \ref{sec: infinite}.
I also want to thank Shinji Koshida for suggesting how to write Sections \ref{sec: fields-currents} and \ref{sec: virasoro} in a way that is more natural from a VOA perspective.

\noindent\qquad
This work was supported by Academy of Finland grant 346309 (Finnish Centre of Excellence in Randomness and Structures).

\newpage

\section{Discrete symplectic fermions in finite domains}
\label{sec: finite}
\subsection{Simple paths and fermions.}\label{paths-cycles}
Let $\Gra=(\Ver,\Edg)$ be a bipartite finite planar graph partitioned into \emph{black} vertices $\Black$ and \emph{white} vertices $\White$
that admits at least one \emph{dimer cover}, i.e. there exists a subset of edges $\omega\subset\Edg$,
such that every vertex appears exactly once in $\omega$.
Then, let $\Dim{\Gra{}}$ denote the set of dimer covers on $\cG$, and $\Dimm{\Gra{}}\coloneqq\Dim{\Gra{}}\times\Dim{\Gra{}}$ is the set of \emph{double-dimer covers} on $\Gra$.
See Figure \ref{Definitions} for an example on a subgraph of the square lattice.
The (finite) set $\Dimm{\Gra{}}$ is regarded as a probability space equipped with the uni\-\linebreak form probability measure $\Prob_\Gra$, the expectation value with respect to which is denoted by $\Expect_\Gra$.
\vspace{-10pt}

A famous result by Kasteleyn {\cite{Kast}} is $\vert\Dim{\Gra}\vert=\vert\det \Kast\,\vert$, where $\Kast$ is any \emph{Kasteleyn matrix} of $\Gra$, i.e. $\Kast\colon\Black\times\White\longrightarrow \mathbb{S}^1\cup\set{0}\subset\C$ such that
$\Kast(b,w)\neq 0$ only for $\set{w,b}\in\Edg$, and that satisfies the \emph{Kasteleyn condition}:
Let $(w_1,b_1,\ldots,w_n,b_n,w_{n+1}=w_1)$ be a sequence of consecutively adjacent vertices around a single face of $\Gra$, then
$$
\Kast(b_1,w_1)\cconj{\Kast(b_2,w_1)}\cdots \Kast(b_n,w_n)\cconj{\Kast(b_1,w_n)}=(-1)^{n+1}\,.
$$

\begin{rmk}\label{square_Kast}
Let $\Z^2=\ZWhite\sqcup\ZBlack$ be colored in a chessboard fashion.
If $\Gra=(\Ver,\Edg)$ is the induced (dimerable) graph of a subset $\Ver\subset\Z^2$ with no holes ---see Section \ref{sec: infinite}---,
then, the matrix $\partial:\ZBlack\times\ZWhite\longrightarrow\C$ given by
$\partial_{b,w}\coloneqq\cconj{b-w}$ if $\vert b-w\vert=1$ and $\partial_{b,w}\coloneqq0$ otherwise, is a Kasteleyn matrix of $\Gra$.
\hfill$\diamond$
\end{rmk}

\begin{figure}[h!]
	\centering
	\begin{overpic}[angle=90,scale=0.57, tics=10]{./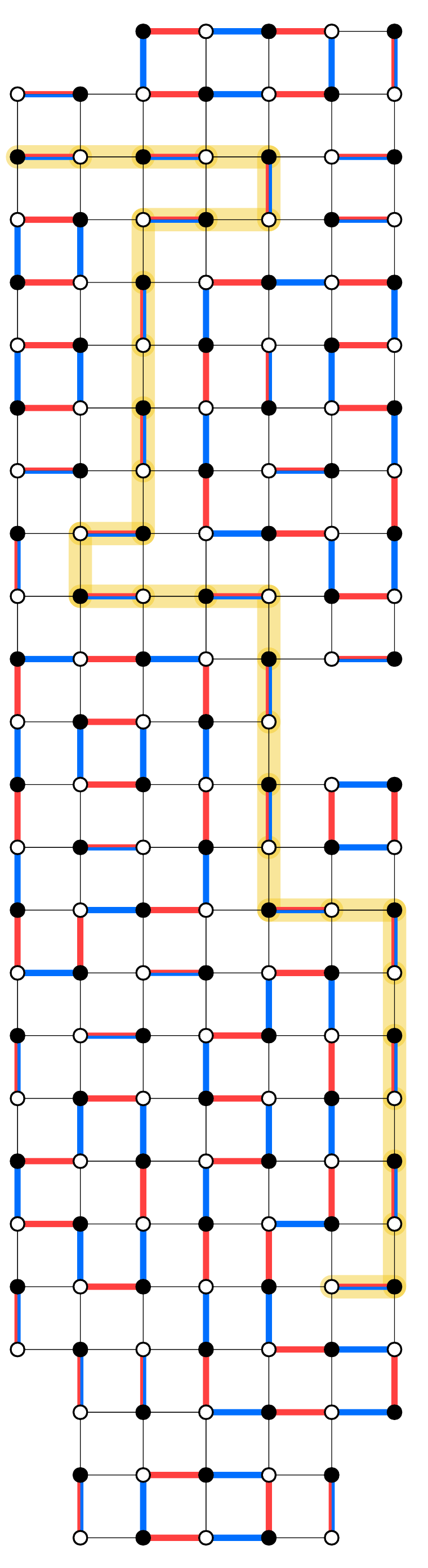}
		\put(46.2,19){$\lambda$}
		\put(11.2,1.6){$b$}
		\put(79.5,19.2){$w$}
	\end{overpic}
	\centering
	\caption{A double-dimer cover $(\omega,\cconj{\omega})$ on a subgraph of the square lattice \\ and an odd simple path $\lambda: w \FromTo b$ adapted to $(\omega,\cconj{\omega})$.}
	\label{Definitions}
\end{figure}

Let us introduce the combinatorial objects that are used to construct symplectic fermions on double dimers.
A \emph{simple path} on $\Gra{}$ is a sequence of distinct consecutively adjacent edges of $\Gra{}$ that does not cross itself,
i.e. $\lambda=(e_1,\dots,e_n)$ with $e_i\in\Edg$,
such that $e_i\cap e_j\neq \emptyset$ if and only if $|i-j|=1$ for $1\leq i\neq j\leq n$.
Then, let $\lambda:x\FromTo y$ denote that $x,y\in\Ver$ are the \emph{endpoints} of $\lambda$, i.e. the vertices that satisfy $x\in e_1$ and $x\notin e_2$, and $y\in e_n$ and $y\notin e_{n-1}$.
Abusing notation, write $e_i\in\lambda$,
and, for $v\in\Ver$, write $v\in\lambda$ if $v\in\bigcup_{i=1}^n e_i$.
Two simple paths $\lambda_1,\lambda_2$ are said not to \emph{intersect} if there is no $v\in\Ver$ satisfying $v\in\lambda_{1}$ and $v\in\lambda_2$, and it is denoted by $\lambda_1\cap\lambda_2=\emptyset$.
Also, $\lambda_1\subset{\lambda_2}$ denotes that $\lambda_1$ is a \emph{subpath} of $\lambda_2$,
i.e. for all $e\in\Edg$ such that $e\in\lambda_1$ it follows that $e\in\lambda_2$.
\epar

An \emph{odd simple path} on $\Gra{}$ is a simple path $\lambda=(e_1,\dots,e_n)$ with $n\in\N$ odd.
Note that the endpoints of odd paths are of different colors.
Let $\Odd{\lambda}\coloneqq \set{e_1,e_3,\dots,e_n}\subset\Edg{}$ and $\Even{\lambda}\coloneqq\set{e_2,e_4,\dots,e_{n-1}}\subset\Edg{}$ denote the set of odd and even edges of $\lambda$ respectively. 
Define also its \emph{odd length} as $\Length{\lambda}\coloneqq\vert\Even{\lambda}\vert$,
and its \emph{path factor} as
\epar
$$
\PathFact{\lambda}\coloneqq (-1)^{\Length{\lambda}}\prod_{e\in\Odd{\lambda}}\Kast(b_e,w_e)\prod_{e\in\Even{\lambda}}\cconj{\Kast(b_e,w_e)}\,,
$$
where $w_e\in\White$ and $b_e\in\Black$ are the white and black vertices of $e=\set{w_e,b_e}$ respectively.
\epar

An odd simple path $\lambda:w\FromTo b$ is said to be \emph{adapted} to a double-dimer cover $(\omega,\cconj{\omega})\in\Dimm{\Gra{}}$ if $\Odd{\lambda}\subset\omega\cap\,\cconj{\omega}$ --- see Figure \ref{Definitions}.
Note that multiple paths with the same endpoints can be adapted to the same double-dimer cover.
Let such adaptedness be denoted by $\lambda\CompWith(\omega,\cconj{\omega})$, and let the indicator function of the event $\set{(\omega,\cconj{\omega})\,\vert\, \lambda\CompWith (\omega,\cconj{\omega})}\subset\Dimm{\Gra{}}$ be denoted by $\indicator{\lambda}$.
\vspace{-10pt}

Then $n$ \emph{pairs of fermions} at $w_1,\ldots,w_n\in\White$ and $b_1,\ldots,b_n\in\Black$ are defined to be the random variable
$$
\eta(w_1)\xi(b_1)\cdots\eta(w_n)\xi(b_n)\coloneqq
\sum_{\sigma\in\SymmGrp_n} \sign\sigma\,
\prod_{i=1}^n\,
\sum_{{\lambda_i:w_i\FromTo b_{\sigma(i)}}}
\PathFact{\lambda_i}\indicator{\lambda_i}\,,
$$
where, for each $i$ and $\sigma$, the second sum runs over the set of simple paths from $w_i$ to $b_{\sigma(i)}$.
\epar

The following proposition states that, in the above sum, terms arising from configurations of $n$ paths with intersections add up to $0$.

\begin{prop}
	\label{disjoint}
	For $w_1,\ldots,w_n\in\White$ and $b_1,\ldots,b_n\in\Black$,
	$$
	\eta(w_1)\xi(b_1)\cdots\eta(w_n)\xi(b_n)
	=
	\sum_{\sigma\in\SymmGrp_n}\sign{\sigma}
	\sum_{\underset{{\lambda_i}\cap{\lambda_j}=\emptyset\ i\neq j\,}{\lambda_i:w_i\FromTo b_{\sigma(i)}}}
	\PathFact{\lambda_1}\cdots\PathFact{\lambda_n}
	\indicator{\lambda_1}\cdots \indicator{\lambda_n}\,,
	$$
	where the second sum runs over the set of $n$ simple paths from $w_i$ to $b_{\sigma(i)}$ that do not intersect each other.
\end{prop}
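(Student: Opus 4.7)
My plan is to prove Proposition~\ref{disjoint} by exhibiting a sign-reversing, weight-preserving involution on the ``bad'' configurations, in the spirit of the Lindström--Gessel--Viennot lemma. Expanding the defining formula gives
\[
\eta(w_1)\xi(b_1)\cdots\eta(w_n)\xi(b_n)=\sum_{\sigma\in\SymmGrp_n}\sign\sigma\sum_{(\lambda_1,\ldots,\lambda_n)}\prod_{i=1}^n\PathFact{\lambda_i}\,\indicator{\lambda_i}\,,
\]
where the inner sum is over all $n$-tuples of simple paths with $\lambda_i\colon w_i\FromTo b_{\sigma(i)}$. The proposition's right-hand side is obtained by restricting the same sum to tuples with $\lambda_i\cap\lambda_j=\emptyset$ for all $i\neq j$, so the goal is to show that the contributions of the intersecting tuples cancel pairwise.

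The structural fact I would first establish is that, because $\omega\cap\cconj\omega$ is a sub-matching of $\Gra$, any two adapted paths $\lambda_i,\lambda_j$ that share a vertex $v$ must both use the unique matching edge through $v$, and hence share both its endpoints. Contracting each matching edge into a ``super-vertex'' turns each adapted path into a sequence of super-vertices joined by single even edges, and intersections of adapted paths translate into shared super-vertices. Using this, I would define the involution $\iota$ as follows: given a bad configuration $(\sigma,\underline{\lambda})$, take $(i,j)$ with $i<j$ to be the lex-smallest pair with $\lambda_i\cap\lambda_j\neq\emptyset$, let $X$ be the first super-vertex along $\lambda_i$ that also lies on $\lambda_j$, swap the tails of $\lambda_i,\lambda_j$ at $X$ to produce new simple paths $\lambda_i'\colon w_i\FromTo b_{\sigma(j)}$ and $\lambda_j'\colon w_j\FromTo b_{\sigma(i)}$, and replace $\sigma$ by $\sigma\circ(i,j)$. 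Sign reversal is immediate from the transposition, while the bipartite parity of positions along each path ensures that the multisets of odd and even edges pooled across $\lambda_i\cup\lambda_j$ are preserved under the swap, so that $\PathFact{\lambda_i}\PathFact{\lambda_j}=\PathFact{\lambda_i'}\PathFact{\lambda_j'}$ and $\indicator{\lambda_i}\indicator{\lambda_j}=\indicator{\lambda_i'}\indicator{\lambda_j'}$.

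The main obstacle I expect to meet is verifying that $\iota$ genuinely produces simple paths and is involutive. When the super-vertices shared by $\lambda_i,\lambda_j$ appear in the same order along both paths, the swap at $X$ is clean and applying $\iota$ twice recovers the original configuration, since the same pair $(i,j)$ and the same super-vertex $X$ are re-identified. But when the shared super-vertices appear in interleaved orderings along the two paths, the naïve swap may revisit a super-vertex in $\lambda_j'$ and thereby break simplicity; to handle this case one must refine the canonical choice of swap point, or perform a compound uncrossing that uses the planar embedding of $\Gra$ together with the Kasteleyn condition to ensure a well-defined sign-reversing partner. Making this interleaving step watertight is where I expect the bulk of the technical work to lie.
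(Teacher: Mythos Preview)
Your approach coincides with the paper's: fix a double-dimer cover $(\omega,\cconj\omega)$, expand the sum over $(\sigma;\lambda_1,\dots,\lambda_n)$ with each $\lambda_i$ adapted, and cancel the intersecting configurations by a sign-reversing tail-swap involution. Your super-vertex reduction (contracting the edges of $\omega\cap\cconj\omega$) is a clean way to see why two adapted paths meeting at a vertex must share the matching edge through it; the paper uses this implicitly. Your argument that $\Xi(\lambda_i)\Xi(\lambda_j)=\Xi(\tilde\lambda_i)\Xi(\tilde\lambda_j)$ because the pooled odd/even edge multisets are preserved under the swap is exactly the paper's as well.

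The only substantive difference is the choice of swap point, and this is precisely where the obstacle you anticipate lives. The paper does \emph{not} swap at the first shared super-vertex along $\lambda_{i_1}$ but at the \emph{last} one: with $i_1$ the smallest index meeting another path, it takes $i_2$ to be the (smallest-index) $j>i_1$ whose last common vertex with $\lambda_{i_1}$ lies farthest along $\lambda_{i_1}$, and swaps tails there. The virtue of this choice is that the tail of $\lambda_{i_1}$ beyond the swap point is disjoint from every other $\lambda_j$ with $j>i_1$, which is the ingredient one needs to verify that the swapped path ending at $b_{\sigma(i_1)}$ is simple and that the same triple $(i_1,i_2,\text{swap vertex})$ is re-selected when the rule is applied again, so that $\iota\circ\iota=\id$. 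In short, the refinement you are looking for is simply a different canonical swap point; there is no need for a compound uncrossing, and neither planarity nor the Kasteleyn condition plays any role in the involution itself --- the Kasteleyn condition only enters elsewhere, in the computation of path factors.
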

\begin{proof}
	Fix a double-dimer cover $(\omega,\cconj{\omega})$ throughout the proof.
	By its definition,
	\begin{equation}\label{n-pairs_ev}
	\begin{split}
\big[\eta(w_1)\xi(b_1)\cdots\eta(w_n)\xi(b_n)\big](\omega,\cconj{\omega})&=\\
\sum_{\sigma\in\SymmGrp_n}\sum_{{\lambda_i:w_i\FromTo b_{\sigma(i)}}} &
\sign{\sigma}\ 
\PathFact{\lambda_1}\cdots\PathFact{\lambda_n}\,
\indicator{\lambda_1}(\omega,\cconj{\omega})\cdots \indicator{\lambda_n}(\omega,\cconj{\omega})
	\end{split}
	\end{equation}
	with no restrictions on the sum over paths $\lambda_1,\ldots,\lambda_n$. 
	There is one non-vanishing term in the sum on the right-hand side of Equation (\ref{n-pairs_ev}) for every element of the set
	$$
	S\coloneqq
	\big\{(\sigma;\lambda_1,\ldots,\lambda_n)\,\big\vert\, \sigma\in\SymmGrp_n,\ (\omega,\cconj{\omega})\WithComp\lambda_i:w_i\FromTo b_{\sigma(i)}\textnormal{ for }1\leq i\leq n\big\}\,.
	$$
	Let us build an involution $\iota$ on $S$.
	Say that an element $(\sigma;\lambda_1,\ldots,\lambda_n)$ is \emph{non-intersecting} if it satisfies ${\lambda_i}\cap{\lambda_j}=\emptyset$ for all $1\leq i\neq j\leq n$.
	Then $\iota$ maps non-intersecting elements to themselves.
	Now consider an element $(\sigma;\lambda_1,\ldots,\lambda_n)$ that is not non-intersecting.
	Take $i_1$ as the smallest $1\leq i\leq n$ such that ${\lambda_{i_1}}\cap{\lambda_j}\neq\emptyset$ for some $i_1< j\leq n$, and let $J$ be the set of such $j$'s.
	Let $\Head{\lambda_{j}}{i}:w_j\FromTo x$ be the longest subpath of $\lambda_{j}$ that satisfies that $x\in{\lambda_i}$;
	and take
	$$
	i_2\coloneqq\min\big\{ j\in J\ \big\vert\ \forall k\in J,\  \,\Head{\lambda_{i_1}}{k}\subset\Head{\lambda_{i_1}}{j}\big\}\,.
	$$
	Similarly, for $\lambda_i\cap\lambda_j\neq\emptyset$, let $\Tail{\lambda_{j}}{i}\colon x\FromTo b_{\sigma(j)}$ be the shortest subpath of $\lambda_{j}$ that satisfies $x\in\lambda_{i}$.
	Note it is an empty path when $b_{\sigma(j)}\in\lambda_i$.
	Define, then, $\tilde{\sigma}\coloneqq\sigma\circ[i_1\,i_2]$,
	where $[i_1\,i_2]\in\SymmGrp_n$ is the permutation of $i_1$ and $i_2$,
	and the odd simple paths $\tilde{\lambda}_{i_1}\colon w_{i_1}\FromTo b_{\tilde{\sigma}(i_1)}$ and $\tilde{\lambda}_{i_2}\colon w_{i_2}\FromTo b_{\tilde{\sigma}(i_2)}$ by swapping the tails of $\lambda_{i_1}$ and $\lambda_{i_2}$ around --- see Figure \ref{Invo_1}.
	Note that, since $\lambda_{i_1},\lambda_{i_2}\CompWith(\omega,\cconj{\omega})$, necessarily $\tilde{\lambda}_{i_1},\tilde{\lambda}_{i_2}\CompWith(\omega,\cconj{\omega})$.
	Then, take
	$$
	\iota(\sigma;\lambda_1,\ldots,\lambda_{i_1},\ldots,\lambda_{i_2},\ldots,\lambda_n)=(\tilde{\sigma};\lambda_1,\ldots,\tilde{\lambda}_{i_1},\ldots,\tilde{\lambda}_{i_2},\ldots,\lambda_n)\,.
	$$
	Indeed, $\iota\circ \iota=\id$.
	The result follows by observing that $\PathFact{\lambda_{i_1}}\,\PathFact{\lambda_{i_2}}=\PathFact{\tilde{\lambda}_{i_1}}\,\PathFact{\tilde{\lambda}_{i_2}}$ and $\sign{\sigma}=-\sign{\tilde{\sigma}}$,
	i.e. the terms in the sum in Equation (\ref{n-pairs_ev}) not arising from non-intersecting elements of $S$ cancel out pairwise.
\end{proof}
\begin{figure}[t!]
	\centering
	\begin{overpic}[angle=90,scale=0.57, tics=10]{./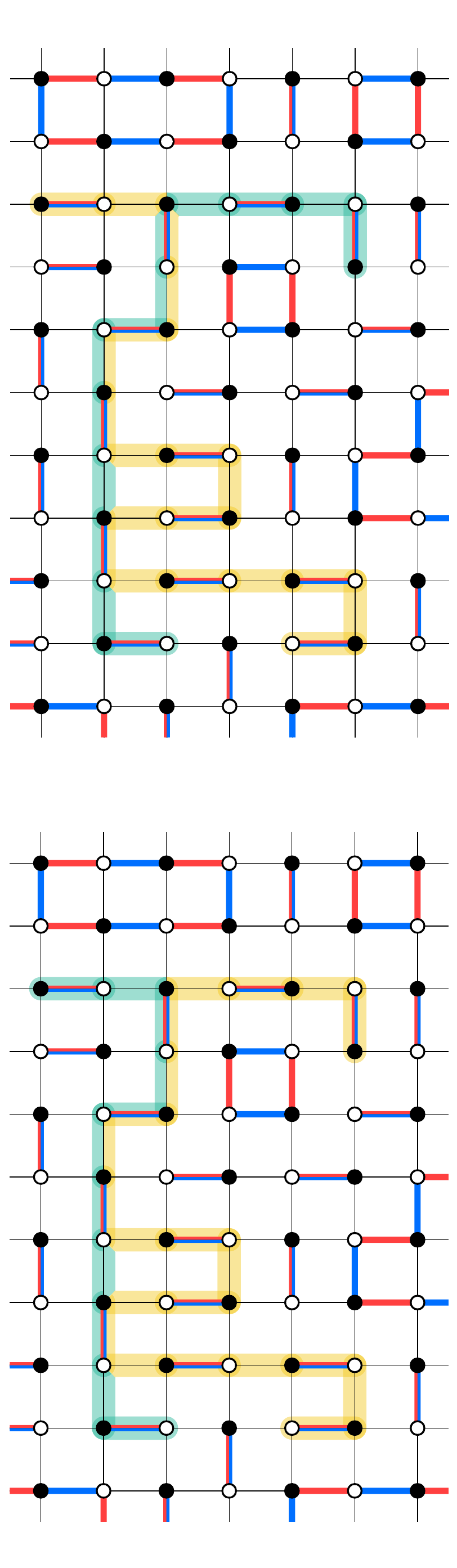}
		\put(30,4){\footnotesize $\lambda_{i_1}$}
		\put(25.7,15.7){\footnotesize $\lambda_{i_2}$}
		\put(48.3,15.8){$\longmapsto$}
		\put(48.3,13.3){\rotatebox{180}{$\longmapsto$}}
		\put(49.8,18){$\iota$}
		\put(75.7,15.7){\footnotesize $\tilde{\lambda}_{i_2}$}
		\put(80,3.6){\footnotesize $\tilde{\lambda}_{i_1}$}
		\put(41.5,11.6){\footnotesize $w_{i_1}$}
		\put(41.5,16.8){\footnotesize $w_{i_2}$}
		\put(91.5,11.6){\footnotesize $w_{i_1}$}
		\put(91.5,16.8){\footnotesize $w_{i_2}$}
		\put(15,24.2){\footnotesize $b_{\sigma(i_1)}$}
		\put(11,0.6){\footnotesize $b_{\sigma(i_2)}$}
		\put(65.05,24.2){\footnotesize $b_{\tilde{\sigma}(i_2)}$}
		\put(61.05,0.6){\footnotesize $b_{\tilde{\sigma}(i_1)}$}
	\end{overpic}
	\caption{\centering Involution $\iota$ in the proof of Proposition \ref{disjoint}.}
	\label{Invo_1}
\end{figure}

\begin{coro}\label{null-fermions}
	If $w_i=w_j$ or $b_i=b_j$ for some $i\neq j$, then $\eta(w_1)\xi(b_1)\cdots\eta(w_n)\xi(b_n)\equiv 0$.
\end{coro}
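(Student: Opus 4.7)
The plan is to read the corollary as an essentially immediate consequence of Proposition \ref{disjoint}. After invoking that proposition, the random variable $\eta(w_1)\xi(b_1)\cdots\eta(w_n)\xi(b_n)$ is expressed as a signed sum, over $\sigma\in\SymmGrp_n$, of contributions from $n$-tuples of pairwise \emph{non-intersecting} odd simple paths $\lambda_k\colon w_k\FromTo b_{\sigma(k)}$. Since a path $\lambda_k:w_k\FromTo b_{\sigma(k)}$ contains, by definition, both of its endpoints as vertices, two distinct paths sharing an endpoint necessarily intersect. The strategy is therefore to argue that, under the hypothesis $w_i=w_j$ or $b_i=b_j$ for some $i\neq j$, no non-intersecting $n$-tuple can exist, so every term in the sum is zero.

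Concretely, I would first treat the case $w_i=w_j$ with $i\neq j$. For any permutation $\sigma\in\SymmGrp_n$, any admissible family $\lambda_1,\ldots,\lambda_n$ must include $\lambda_i\colon w_i\FromTo b_{\sigma(i)}$ and $\lambda_j\colon w_j\FromTo b_{\sigma(j)}$. The common starting vertex $w_i=w_j$ lies on both $\lambda_i$ and $\lambda_j$, so $\lambda_i\cap\lambda_j\neq\emptyset$, contradicting the non-intersecting condition of Proposition \ref{disjoint}. Hence the inner sum is empty for each $\sigma$, and the total expression vanishes.

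Next I would handle the symmetric case $b_i=b_j$ with $i\neq j$. For any $\sigma\in\SymmGrp_n$, consider the indices $i'\coloneqq\sigma^{-1}(i)$ and $j'\coloneqq\sigma^{-1}(j)$; note $i'\neq j'$ since $\sigma$ is a bijection. Any admissible family contains $\lambda_{i'}\colon w_{i'}\FromTo b_i$ and $\lambda_{j'}\colon w_{j'}\FromTo b_j$, both of which contain the common endpoint $b_i=b_j$, again forcing $\lambda_{i'}\cap\lambda_{j'}\neq\emptyset$ and hence emptying the inner sum. Combining the two cases gives the claim.

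I do not expect any serious obstacle here: the whole content of the corollary has already been packaged into Proposition \ref{disjoint}, and what remains is the obvious observation that endpoints belong to their paths, so coincident endpoints preclude the non-intersection property. The only minor bookkeeping is the use of $\sigma^{-1}$ in the $b_i=b_j$ case to identify which paths share the target.
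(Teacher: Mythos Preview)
Your proposal is correct and follows exactly the paper's own approach: invoke Proposition~\ref{disjoint} and observe that when two white (or two black) endpoints coincide, no non-intersecting family of paths can exist, so the sum is empty. The paper's proof is a one-liner to the same effect, merely phrased in terms of the set $S$ from the proof of Proposition~\ref{disjoint}.
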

\begin{proof}
	If not all $w_1,\ldots,w_n$ and $b_1,\ldots,b_n$ are distinct, there are no non-intersecting elements in $S$, so the whole sum adds up to $0$.
\end{proof}

\subsection{Discrete holomorphicity of fermions.} 
On $\Gra{}$, the Kasteleyn matrix provides a notion of differentiation:
Let $V$ be a vector space --- usually $\C$.
For $f\colon\Ver\longrightarrow V$,
the $V$-valued function $\dee^\Kast f$ is defined, on black and white vertices respectively, by
\epar
\epar
$$
\dee^\Kast f(b)\coloneqq\sum_{\underset{\set{b,w}\in\Edg{}}{w\in\White}}\Kast(b,w)f(w)
\mspace{50mu}
\textnormal{and}
\mspace{50mu}
\dee^\Kast f(w)\coloneqq-\sum_{\underset{\set{b,w}\in\Edg{}}{b\in\Black}}\Kast(b,w)f(b)\,.
$$
Similarly, $\deebar^\Kast f$ is defined by
$$
\deebar^\Kast f(b)\coloneqq\sum_{\underset{\set{b,w}\in\Edg{}}{w\in\White}}\cconj{\Kast(b,w)}f(w)
\mspace{50mu}
\textnormal{and}
\mspace{50mu}
\deebar^\Kast f(w)\coloneqq-\sum_{\underset{\set{b,w}\in\Edg{}}{b\in\Black}}\cconj{\Kast(b,w)}f(b)\,.
$$

For an expression depending on of several vertices, the notation $\dee_x^\Kast$ and $\deebar_x^\Kast$ is used to clarify with respect to which variable the $\Kast$-derivative is taken.
\epar

The following result proves $\Expect_\Gra[\eta(\cdot)\xi(\cdot)]$ to be (the complex conjugate of) the non-zero entries of the \emph{coupling function} ---in Kenyon's sense \cite{Kenyon1}--- of double dimers on $\Gra$,
i.e. $\Expect_\Gra[\eta(w)\xi(b)]=\cconj{\Kast^{-1}(w,b)}$.
A purely combinatorial proof shall be presented here, although a shorter proof can be written using Grassmann algebra techniques --- see Subsections \ref{GrassandWick} and \ref{SFandDD}.

\begin{thm}\label{discrete_holomorphicity}
Fix $w_0\in\White$ and $b_0\in\Black$. Then,
\epar
$$
\deebar^\Kast_w\Expect_\Gra\big[\,\eta(w_0)\xi(w)\,\big]=\delta_{w,w_0}
\mspace{50mu}
\textit{and}
\mspace{50mu}
\deebar^\Kast_b\Expect_\Gra\big[\,\eta(b)\xi(b_0)\,\big]=-\delta_{b,b_0}\,.
$$

for all $w\in\White$ and $b\in\Black$.
\end{thm}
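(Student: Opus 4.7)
The plan is to derive both identities from a single algebraic identification,
\begin{equation*}
\Expect_\Gra\big[\eta(w_0)\xi(b)\big] = \cconj{\Kast^{-1}(w_0,b)}, \qquad b\in\Black,
\end{equation*}
together with its mirror $\Expect_\Gra[\eta(w)\xi(b_0)] = \cconj{\Kast^{-1}(w,b_0)}$. Once these are in hand, the theorem is a direct consequence of the defining relation $\sum_b \Kast(b,w)\Kast^{-1}(w_0,b)=\delta_{w,w_0}$, the opposite signs in the two claimed identities reflecting the asymmetric minus sign that appears in the definition of $\deebar^\Kast$ at white versus black vertices.

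To establish the identification combinatorially, I would first expand, via Proposition~\ref{disjoint} with $n=1$,
\begin{equation*}
\Expect_\Gra[\eta(w_0)\xi(b)] = \sum_{\lambda:w_0\FromTo b}\PathFact{\lambda}\,\Prob_\Gra\big[\lambda\CompWith(\omega,\cconj\omega)\big],
\end{equation*}
and then use the independence of $\omega$ and $\cconj\omega$ together with Kenyon's dimer edge-probability formula and $|\Kast(b,w)|=1$ to rewrite each term as a minor of $\Kast^{-1}$ times unimodular Kasteleyn weights along $\lambda$. A Lindstr\"om--Gessel--Viennot-type telescoping should then collapse the signed sum over all odd simple paths $w_0\FromTo b$ to the single entry $\cconj{\Kast^{-1}(w_0,b)}$.

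A more direct route --- closer in spirit to the involutive proof of Proposition~\ref{disjoint} --- would be to apply $\deebar^\Kast_w$ directly to the path expansion of $\Expect_\Gra[\eta(w_0)\xi(b)]$ and construct a sign-reversing involution on the set of triples $((\omega,\cconj\omega),b,\lambda)$ with $b\sim w$ and $\lambda:w_0\FromTo b$ adapted to $(\omega,\cconj\omega)$. I would pair each triple with the one obtained by either appending the edge $e=\{w,b\}$ to $\lambda$, when this produces a simple even-length path from $w_0$ to $w$, or rerouting $\lambda$ through $e$ at $w$ when $w$ already lies on $\lambda$; the sign cancellation comes from the $(-1)^{\Length{\lambda}}$ prefactor of $\PathFact{\lambda}$ combined with $|\Kast(b,w)|^2=1$. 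The only unpaired triples force $w=w_0$, and their cumulative contribution should reproduce the announced $\delta_{w,w_0}$. The main obstacle in this approach is verifying that the rerouting is well-defined across all self-intersection patterns of $\lambda$ at $w$ --- particularly when $w$ is an interior vertex of $\lambda$ and the naive replacement would destroy either simpleness or adaptedness to $(\omega,\cconj\omega)$; the shorter Grassmann-algebra proof alluded to in the paragraph preceding the theorem bypasses this delicate bookkeeping entirely.
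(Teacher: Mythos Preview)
Your second strategy --- a sign-reversing involution on triples $((\omega,\cconj\omega),b,\lambda)$ with $b\sim w$ and $\lambda:w_0\FromTo b$ adapted --- is exactly the paper's approach, but the specific involution you sketch does not work. Appending the edge $\{w,b\}$ to $\lambda$ produces an \emph{even} path $w_0\FromTo w$, which lies outside the index set (every triple carries an odd path ending at a neighbor of $w$), so the map you describe is not an involution on that set. The paper's involution instead uses the double-dimer structure at $w$: when $w\notin\lambda$, the alternation between $\omega$ and $\cconj\omega$ starting from $w$ determines a canonical odd path $\lambda_w:w\FromTo b_1$ ending at some other neighbor $b_1$ of $w$; one concatenates $\lambda$, the edge $\{b,w\}$, and $\lambda_w$ to obtain an odd path $\tilde\lambda:w_0\FromTo b_1$, and --- crucially --- one also modifies $\omega$ along $\lambda_w$ so that $\tilde\lambda$ remains adapted to the new cover. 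When $w\in\lambda$, one truncates $\lambda$ at the neighbor $b_0$ of $w$ immediately preceding $w$ on $\lambda$, again flipping $\omega$ along the discarded tail. The Kasteleyn condition then forces the paired contributions to differ only in sign. Your description omits both the modification of $\omega$ and the use of the dimer alternation to select the new endpoint, and these are precisely what make adaptedness and involutivity hold; the ``self-intersection'' worry you raise is a symptom of not having the right pairing. The case $w=w_0$ is handled not by fixed points of the involution but by a separate bijection $S^{(w_0)}\to\Dimm\Gra$, each term contributing $1$ by the Kasteleyn condition.

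Your first strategy is logically sound, and the paper explicitly notes that the identification $\Expect_\Gra[\eta(w_0)\xi(b)]=\cconj{\Kast^{-1}(w_0,b)}$ yields a shorter route. However, the paper's development runs the other way: Theorem~\ref{discrete_holomorphicity} is proved combinatorially first, and the $\Kast^{-1}$ identification is then read off from it (or obtained independently via the Grassmann formalism of Subsections~\ref{GrassandWick}--\ref{SFandDD}). The LGV-type telescoping you allude to is not carried out anywhere and would need substantial work to stand alone.
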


\begin{figure}[b!]
	\centering
	\begin{overpic}[scale=0.75, tics=10]{./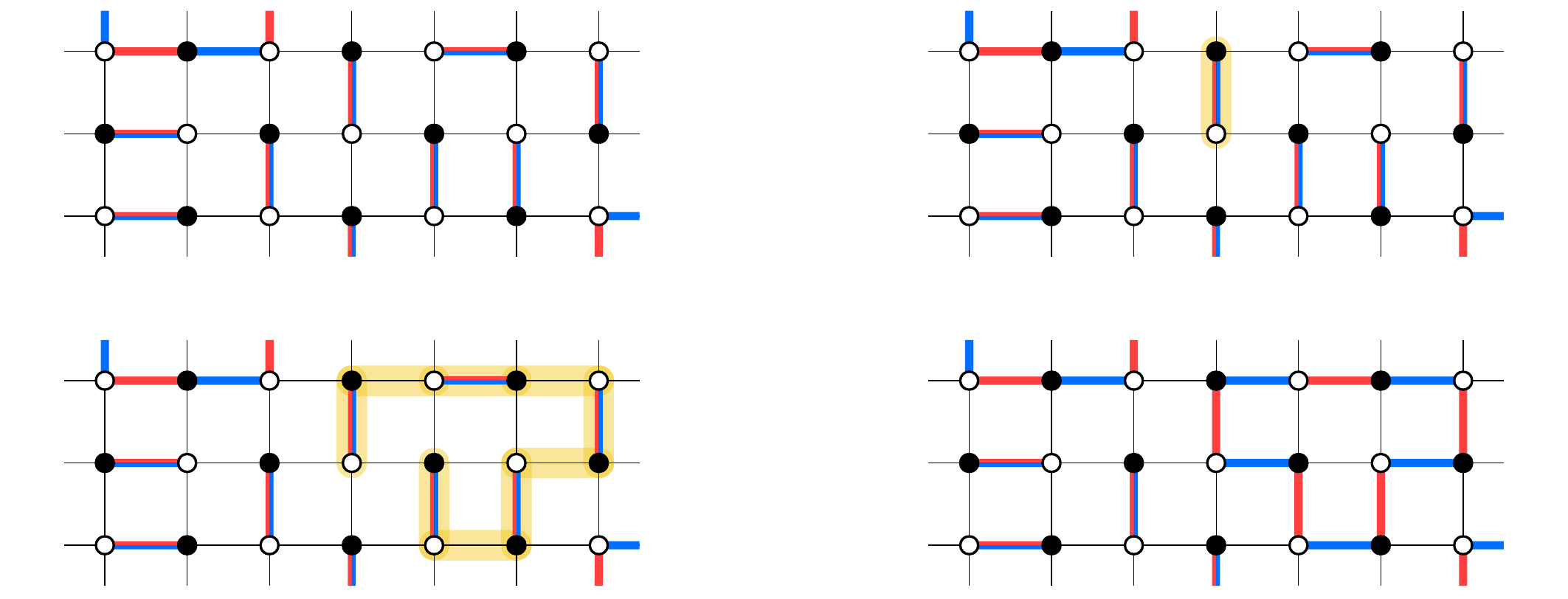}
		\put(49.7,24.5){$\phi$}
		\put(47.7,31){{\Large$\longmapsto$}}
		\put(47.5,28.8){\rotatebox{180}{\Large$\longmapsto$}}
		\put(49.7,33.6){$\phi^{-1}$}
		\put(49.7,12.8){$\phi$}
		\put(47.7,10){{\Large$\longmapsto$}}
		\put(47.5,7.8){\rotatebox{180}{\Large$\longmapsto$}}
		\put(49.7,3.2){$\phi^{-1}$}
		\put(19.6,6.5){$w$}
		\put(19.6,27.5){$w$}
		\put(74.6,6.5){$w$}
		\put(74.6,27.5){$w$}
		\put(29,9.6){$b$}
		\put(20.6,36){$b$}
		\put(84,9.6){$b$}
		\put(75.6,36){$b$}
		\put(24.3,15.5){$\lambda$}
		\put(79.3,31.5){$\lambda_w$}
		\put(39,37){$(\omega,\cconj{\omega})$}		\put(53,37){$(\omega_w,\cconj{\omega})$}		\put(39,16){$(\omega,\cconj{\omega})$}		\put(53,16){$(\omega_\lambda,\cconj{\omega})$}
	\end{overpic}
	\caption{The bijection $\phi$ in the proof of Theorem \ref{discrete_holomorphicity}.}
	\label{Bijec_w=w0}
\end{figure}

\begin{proof}
Explicitly,
\begin{equation}\label{deebar_2p}
\deebar^\Kast_w\Expect_\Gra\big[\,\eta(w_0)\xi(w)\,\big]=
\frac{1}{\vert\Dimm{\Gra{}}\vert}\sum_{(\omega,\cconj{\omega})\in\Dimm{\Gra{}}}
\sum_{b\sim w}
\sum_{\lambda:w_0\FromTo b}
\cconj{\Kast(b,w)}\,{\PathFact{\lambda}\indicator{\lambda}(\omega,\cconj{\omega})}\,.
\end{equation}
There is a non-zero term in the above sum for each element of the set $S^{(w_0)}\coloneqq\bigsqcup_{b\sim w}S_b^{(w_0)}$, where
$$
S_b^{(w_0)}\coloneqq \big\{(\omega,\cconj{\omega};\lambda)\,\big\vert\, (\omega,\cconj{\omega})\in\Dimm{\Gra{}},\,(\omega,\cconj{\omega})\WithComp\lambda:w_0\FromTo b\,\big\}\,.
$$
Assume first $w=w_0$, and let us build a bijection $\phi:S^{(w_0)}\longrightarrow\Dimm{\Gra{}}$ --- see Figure \ref{Bijec_w=w0}.

-- For $(\omega,\cconj{\omega};\lambda)\in S_b^{(w)}$ for some $\set{w,b}\eqqcolon e_w\in\Edg{}$, take the dimer cover
$$
\omega_\lambda\coloneqq \big(\omega\setminus\Odd{\lambda}\big)\cup \Even{\lambda}\cup \set{e}\,.
$$
Note if $\Length{\lambda}=0$ then ${\omega}_\lambda={\omega}$.
Set then $\phi(\omega,\cconj{\omega};\lambda)\coloneqq(\omega_\lambda,\cconj{\omega})$.

-- Conversely, for a double-dimer cover $(\omega,\cconj{\omega})\in\Dimm{\Gra{}}$,
let $e_1,\ldots,e_n\in\omega$ and $\bar{e}_1,\ldots,\bar{e}_{n+1}\in\cconj{\omega}$ be the edges that make a simple path
$(\bar{e}_1,e_1,\ldots,e_n,\bar{e}_{n+1})\eqqcolon\lambda_w\colon w \FromTo b$ for some $b\in\Black$ adjacent to $w$.
Take the dimer cover $\omega_w\coloneqq(\omega\setminus\set{e_1,\ldots,e_n})\cup\set{\bar{e}_1,\ldots,\bar{e}_{n+1}}$
and set $\phi^{-1}(\omega,\cconj{\omega})\coloneqq(\omega_w,\cconj{\omega};\lambda_w)$.
Note $n$ can be $0$ and, in that case, $\omega_w=\omega$.

Indeed, $\phi\,\circ\,\phi^{-1}=\id_{\Dimm{\Gra{}}}$.
Moreover, the term  $\cconj{\Kast(b,w)}\,\PathFact{\lambda_w}$ in Equation \eqref{deebar_2p} arising from $\phi^{-1}(\omega,\cconj{\omega})=(\omega_w,\cconj{\omega};\lambda_w)$ is $1$ by virtue of the Kasteleyn condition.
It follows that $\deebar^\Kast_w\Expect_\Gra\big[\eta(w_0)\xi(w)\big]{}=1$ when $w=w_0$.

Assume now $w\neq w_0$, and let us build an involution $\iota$ on $S^{(w_0)}$ --- see Figure \ref{Invo_wneqw0}.
Consider an element $(\omega,\cconj{\omega};\lambda)\in S^{(w_0)}_b\subset S^{(w_0)}$.

-- If $w\in\lambda$, there exists $\{w,b_0\}\in\Even{\lambda}$.
Then, let $\textnormal{head}_\lambda:w_0\FromTo b_0$ and $\textnormal{tail}_\lambda:w \FromTo b$ be subpaths of $\lambda$.
Consider $\omega_\lambda\coloneqq\big(\omega\setminus\Odd{\textnormal{tail}_\lambda}\big)\cup\Even{\textnormal{tail}_\lambda}\cup\big\{\set{w,b}\big\}$ and set $\iota(\omega,\cconj{\omega};\lambda)\coloneqq(\omega_\lambda,\cconj{\omega};\textnormal{head}_\lambda)$.
Note $\omega_\lambda=\omega$ if $\Length{\textnormal{tail}_\lambda}=1$.

-- Otherwise, if $w\notin\lambda$,
let $e_1,\ldots,e_n\in\omega$ and $\bar{e}_1,\ldots,\bar{e}_{n+1}\in\cconj{\omega}$ be the edges that make a simple path
$(\bar{e}_1,e_1,\ldots,e_n,\bar{e}_{n+1})\eqqcolon\lambda_w\colon w \FromTo b_1$ for some $b_1\in\Black$ adjacent to $w$.
Let $\tilde{\lambda}\colon w_0\FromTo b_1$ be the concatenation of $\lambda$ with $(\set{b,w})$ and $\lambda_w$, and consider the dimer cover
$
\omega_w\coloneqq\omega\setminus\big(\Even{\lambda_w}\cup \big\{\set{w,b_1}\big\}\big)\cup\Odd{\lambda_w}\,.
$
In this case, set $\iota(\omega,\cconj{\omega};\lambda)\coloneqq(\omega_w,\cconj{\omega};\tilde \lambda)$.
Note $\omega_w=\omega$ if $n=0$.

\begin{figure}[b!]
	\centering
	\begin{overpic}[scale=0.75, tics=10]{./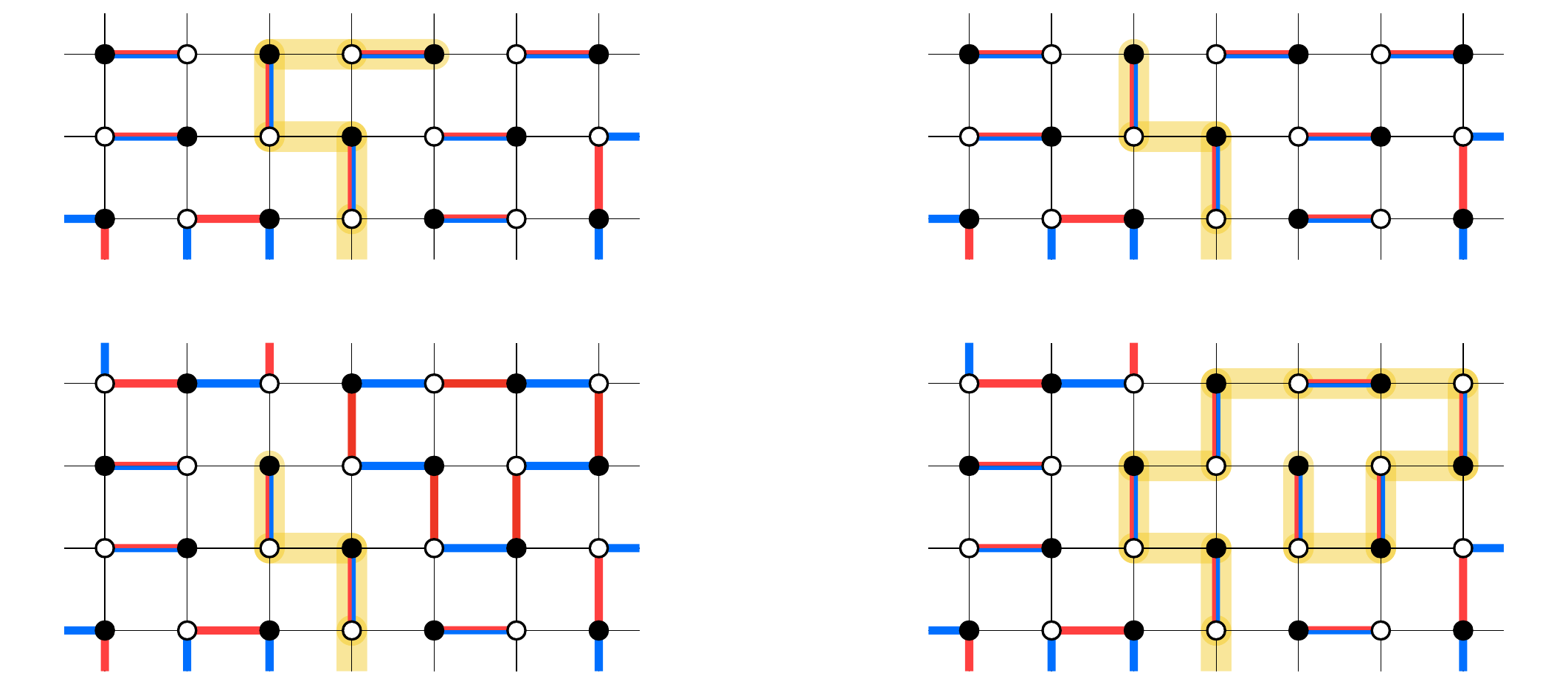}
		\put(50,38.8){$\iota$}
		\put(47.7,36){{\Large$\longmapsto$}}
		\put(47.7,33.8){\rotatebox{180}{\Large$\longmapsto$}}
		\put(50.5,29.2){$\iota$}
		\put(50,14.8){$\iota$}
		\put(47.7,12){{\Large$\longmapsto$}}
		\put(47.7,9.8){\rotatebox{180}{\Large$\longmapsto$}}
		\put(50.5,5.2){$\iota$}
		\put(20,41.3){$w$}
		\put(14.3,41.3){$b_0$}
		\put(28.3,41.1){$b$}
		\put(24,26.5){$\lambda$}
		\put(75,41.3){$w$}
		\put(84,41.1){$b$}
		\put(79,26.5){$\tilde\lambda$}
		\put(23,11.5){$w$}
		\put(14.3,15.1){$b$}
		\put(28.3,14.8){$b_1$}
		\put(24,0){$\lambda$}
		\put(78.3,11.3){$w$}
		\put(70.3,14.7){$b$}
		\put(79,0){$\tilde\lambda$}
		\put(39,42){$(\omega,\cconj{\omega})$}		\put(53,42){$(\omega_\lambda,\cconj{\omega})$}		\put(39,21){$(\omega,\cconj{\omega})$}		\put(53,21){$(\omega_w,\cconj{\omega})$}
	\end{overpic}
	\caption{The involution $\iota$ in the proof of Theorem \ref{discrete_holomorphicity}.}
	\label{Invo_wneqw0}
\end{figure}
Indeed $\iota\,\circ\,\iota=\id_{S^{(w_0)}}$.
Moreover, consider the latter scenario, when $w\notin\lambda$.
The term of $(\omega,\cconj{\omega};\lambda)$ in Equation \eqref{deebar_2p}  is
$$
\cconj{\Kast(b,w)}\,\PathFact{\lambda}
=
\cconj{\Kast(b,w)}(-1)^{\Length{\lambda}}
\prod_{e\in\Odd{\lambda}}\Kast(b_e,w_e)
\prod_{e\in\Even{\lambda}}\cconj{\Kast(b_e,w_e)}\,
$$
whereas the term of $\iota(\omega,\cconj{\omega};\lambda)=(\omega_w,\cconj{\omega};\tilde\lambda)$ is
\begin{align*}
\cconj{\Kast(b_1,w)}\,\PathFact{\tilde\lambda}\ 
= & \ 
\cconj{\Kast(b_1,w)}(-1)^{\Length{\tilde\lambda}}
\prod_{e\in\Odd{\tilde\lambda}}{\Kast(b_e,w_e)}
\prod_{e\in\Even{\tilde\lambda}}\cconj{\Kast(b_e,w_e)}\,
\\
=&\ 
(-1)^{\Length{\lambda}+1+\Length{{\lambda_w}}}\,
\cconj{\Kast(b_1,w)}
\prod_{e\in\Odd{\lambda}}\Kast(b_e,w_e)
\prod_{e\in\Even{\lambda}}\cconj{\Kast(b_e,w_e)}
\\
& \mspace{50mu}\times
\cconj{\Kast(b,w)}
\prod_{e\in\Odd{\lambda_w}}{\Kast(b_e,w_e)}
\prod_{e\in\Even{\lambda_w}}\cconj{\Kast(b_e,w_e)}\,
\\
=&\ 
-\cconj{\Kast(b,w)}\,\PathFact{\lambda} \Bigg[\cconj{\Kast(b_1,w)}\,(-1)^{\ell(\lambda_w)}\prod_{e\in\Odd{\lambda_w}}{\Kast(b_e,w_e)}
\prod_{e\in\Even{\lambda_w}}\cconj{\Kast(b_e,w_e)}\Bigg]\,.
\end{align*}
Again, the factor in square brackets is $1$ by virtue of the Kasteleyn condition.
Therefore, when $w\neq w_0$, all the terms in Equation \eqref{deebar_2p} cancel out pairwise,
i.e. $\deebar^\Kast_w\Expect_\Gra\big[\eta(w_0)\xi(w)\big]{}=0$.
The same arguments can be used to prove $\deebar^\Kast_b\Expect_\Gra\big[\eta(b)\xi(b_0)\big]=-\delta_{b,b_0}$.
\end{proof}

Using combinatorial arguments similar to the coupling $\phi$ in the above proof one can prove the following result, that is stated here as a corollary.

\begin{coro}\label{probabilistic}
	Let $e_1=\set{w_1,b_1},\ldots, e_n=\set{w_{n},b_{n}}\in\Edg$ and be $n$ edges.
	Then,
	$$
	\Expect_\Gra\big[\,\eta(w_1)\xi(b_1)\cdots\eta(w_n)\xi(b_n)\,\big]
	=
	\frac
	{\Expect_\Gra\big[\,\indicator{e_1}\cdots\indicator{e_n}\,\big]}
	{\cconj{\Kast(b_1,w_1)}\cdots\cconj{\Kast(b_n,w_n)}}
	\,
	$$
	where $\indicator{e}$ is the indicator function $e$ being open, i.e. the event $\set{(\omega,\cconj{\omega})\,\vert\,e\in\omega}$.
\end{coro}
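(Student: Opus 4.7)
My plan is to extend the bijection $\phi$ from the proof of Theorem \ref{discrete_holomorphicity} to $n$ pairs of fermions and $n$ distinguished edges simultaneously. First I would use Proposition \ref{disjoint} to rewrite
$$
|\Dimm{\Gra}|\,\Expect_\Gra\big[\eta(w_1)\xi(b_1)\cdots\eta(w_n)\xi(b_n)\big]
\;=\!\!\!\sum_{(\omega,\cconj{\omega};\,\sigma;\,\lambda_1,\dots,\lambda_n)\in S}\!\sign\sigma\,\prod_{i=1}^n\PathFact{\lambda_i},
$$
where $S$ is the set of tuples with $(\omega,\cconj\omega)\in\Dimm{\Gra}$, $\sigma\in\SymmGrp_n$, and $\lambda_i:w_i\FromTo b_{\sigma(i)}$ a family of pairwise non-intersecting odd simple paths adapted to $(\omega,\cconj\omega)$. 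After multiplying by $|\Dimm{\Gra}|$ the right-hand side of the Corollary is simply the cardinality of $T:=\{(\omega^*,\cconj\omega)\in\Dimm{\Gra}:e_1,\dots,e_n\in\omega^*\}$, so the goal is a weight-preserving bijection $\phi:S\to T$ under which every element of $S$ contributes the weight $+1$.

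The natural definition is
$$
\phi(\omega,\cconj\omega;\sigma;\lambda_1,\dots,\lambda_n):=(\omega^*,\cconj\omega),\quad
\omega^*:=\big(\omega\setminus{\textstyle\bigcup_i}\Odd{\lambda_i}\big)\cup\bigcup_i\Even{\lambda_i}\cup\{e_1,\dots,e_n\}.
$$
Pairwise non-intersection makes the flips along distinct $\lambda_i$'s independent; after flipping, the only unmatched vertices are $\{w_1,\ldots,w_n\}\cup\{b_{\sigma(1)},\ldots,b_{\sigma(n)}\}=\{w_i\}\cup\{b_i\}$, and the added edges $e_i$ re-pair them to produce a valid cover containing every $e_i$. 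To invert $\phi$ I would walk: given $(\omega^*,\cconj\omega)\in T$, each cycle of $\omega^*\triangle\cconj\omega$ passing through distinguished vertices splits into maximal alternating arcs (using first a $\cconj\omega$-edge and then the $\omega^*$-edge at each subsequent vertex), and each such arc, starting at some $w_i$ and ending at the next distinguished black $b_{\sigma(i)}$, defines both $\lambda_i$ and $\sigma$; the cover $\omega$ is then reconstructed by reversing the flip formula.

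The crucial step is verifying that the weight $\sign\sigma\prod_i\overline{\Kast(b_i,w_i)}\PathFact{\lambda_i}$ equals $1$ on all of $S$. The weight factorises over the cycles of $\sigma$: for a cycle $(j_1,\dots,j_k)$ of $\sigma$ the paths $\lambda_{j_l}$ and the closing edges $e_{j_l}$ glue into a single simple cycle $\Gamma:=\lambda_{j_1}\cdot e_{j_2}\cdot\lambda_{j_2}\cdot\cdots\cdot\lambda_{j_k}\cdot e_{j_1}$ in $\Gra$ of length $2L$ with $L=k+\sum_l\Length{\lambda_{j_l}}$. Up to the sign $(-1)^{\sum_l\Length{\lambda_{j_l}}}$ extracted from the path factors, the cycle contribution is exactly the alternating $\Kast/\overline{\Kast}$ product along $\Gamma$. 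The iterated Kasteleyn condition evaluates this as $(-1)^{L+1+V^{\mathrm{int}}}$, where $V^{\mathrm{int}}$ is the number of vertices enclosed by $\Gamma$; since $\Gra$ admits a perfect matching, every cycle encloses equal numbers of whites and blacks, so $V^{\mathrm{int}}$ is even and each $\sigma$-cycle contributes $(-1)^{k+1}$. Multiplying over all cycles of $\sigma$ and combining with $\sign\sigma=(-1)^{n-n^\sigma}$ produces $(-1)^{2n}=1$.

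The main obstacle I expect is twofold: making the inverse map rigorous when several distinguished vertices lie on the same cycle of $\omega^*\triangle\cconj\omega$ (so that $\sigma$ is non-trivial and the alternating arcs must be identified in the correct cyclic order), and carrying out the sign bookkeeping carefully in the last identity. Once the per-$\sigma$-cycle cancellation is in place, summing $+1$ over all preimages gives $|S|=|T|$, which is the Corollary.
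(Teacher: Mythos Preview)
Your approach is exactly what the paper intends: it explicitly says the corollary follows ``using combinatorial arguments similar to the coupling $\phi$'' from the proof of Theorem~\ref{discrete_holomorphicity}, and your multi-path extension of $\phi$ is precisely that generalisation. The bijection, its inverse via alternating walks in $\omega^*\cup\cconj\omega$, and the per-$\sigma$-cycle sign analysis are all the right ingredients.

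One step in the sign computation needs a sharper justification. The claim ``since $\Gra$ admits a perfect matching, every cycle encloses equal numbers of whites and blacks'' is false in general: take a hexagon $1\text{--}2\text{--}\cdots\text{--}6\text{--}1$ with a pendant $7$ attached to $1$ drawn inside and a pendant $8$ attached to $4$ drawn outside; this bipartite planar graph has the perfect matching $\{1\text{--}7,\,2\text{--}3,\,4\text{--}8,\,5\text{--}6\}$, yet the hexagonal cycle encloses the single vertex $7$. What actually makes $V^{\mathrm{int}}$ even for \emph{your} cycles $\Gamma$ is the adapted condition $\Odd{\lambda_{j_l}}\subset\cconj\omega$: these odd edges already match every vertex of $\Gamma$ to another vertex of $\Gamma$, so $\cconj\omega$ restricts to a perfect matching of the interior of $\Gamma$, forcing $V^{\mathrm{int}}$ to be even. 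With this correction the per-cycle contribution is indeed $(-1)^{k+1}$ and the rest of your argument goes through.
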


\subsection{Grassmann formalism and Wick's theorem.}\label{GrassandWick}
The \emph{Grassmann algebra} $\Grass{G}$ of a finite set of \emph{generators} $G$ is the quotient of the free non-commutative ring $\C\langle G\rangle$ by the two-sided ideal generated by $g_1g_2+g_2g_1$ for all $g_1,g_2\in G$.
Note $\Grass{G}$ is a finite-dimensional algebra over $\C$.
Given an \emph{order of the generators} $\sigma$, i.e. a injective $\sigma:G\longrightarrow \set{1,\dots,|G|}$,
one can construct a basis of $\Grass{G}$ indexed by the subsets of $G$:
For a subset $S\subset G$,
write $S=\set{g_1,g_2\ldots,g_n}$ so that $\sigma(g_1)<\sigma(g_2) \cdots <\sigma(g_n)$
and define $v_{S}^\sigma\coloneqq g_1g_2\cdots g_n\in\Grass{G}$ and $v_{\emptyset}^\sigma\coloneqq 1\in\Grass{G}$.
Then, the set $\set{v_{S}^\sigma}_{S\subset G}$ constitutes a basis of $\Grass{G}$.
\begin{rmk}\label{nilpotent}
In the basis $\set{v_{S}^\sigma}_{S\subset G}$, any element $v\in\Grass{G}$ that has vanishing projection onto the subspace spanned by $v_{\emptyset}^\sigma$ satisfies $v^{|G|+1}=0$.
For such elements, the exponential map ---given by the usual power series--- is well-defined.
\hfill$\diamond$
\end{rmk}
Given an order $\sigma$ of $G$, one can canonically construct the bilinear form $\langle\cdot,\cdot\rangle_\sigma^{}$ determined by $\langle v_{S_1}^\sigma,v_{S_2}^\sigma\rangle_\sigma^{}=\delta_{S_1,S_2}$ for $S_1,S_2\subset G$.
Then, the \emph{Berezin integral} of $v\in\Grass{G}$ with respect to the order $\sigma$ is defined as
$$
\int v\ \ud G_\sigma\coloneqq \langle v_G^\sigma,v\rangle_\sigma^{}.
$$
\vspace{1pt}

Note that, given two orders $\sigma,\varsigma$ of $G$, the Berezin integrals with respect to each of them differ by an overall factor $\sgn(\varsigma\circ\sigma^{-1})$.
\epar

Take the set of generators $G=\set{x_i,y_i}_{i=1}^n$ and a matrix $\mtrx{A}=\set{A_{ij}}_{i,j=1}^n\in\textnormal{GL}_n(\C)$.
Consider the \emph{action}
$$
\Action[\bx,\by]\coloneqq \sum_{i=1}^n\sum_{j=1}^nx_i A_{ij} y_j\in\Grass{G}\,.
$$
For a given order $\sigma$ of $G$, the \emph{partition function} is defined as $\ParFun_{\Action{}}\coloneqq\int e^{\Action[\bx,\by]}\ud G_\sigma$.
Note it is well-defined complex number by Remark \ref{nilpotent}.

\begin{rmk}\label{part_corr}
	Fixing the order to be $\sigma(x_i)=2i-1$ and $\sigma(y_i)=2i$, one gets {\cite{CSS}}
	$$
	\ParFun_{\Action{}}= \det{\mtrx{A}}\,.
	$$
	For $1\leq i_1<\cdots<i_m\leq n$ and $1\leq j_1<\cdots<j_m\leq n$, \emph{correlation functions} are defined by
	$$
	\BigCorFun{\prod_{k=1}^m x_{i_k}y_{j_k}}
	\coloneqq
	\frac{1}{\ParFun_\Action{}}\int\prod_{k=1}^m x_{i_k}y_{j_k}\ e^{\Action[\bx,\by]}\ud G_\sigma
	$$
	and can be proven {\cite{CSS}} to be given by
	$$
	\BigCorFun{\prod_{k=1}^m x_{i_k}y_{j_k}}
	=
	\varepsilon_{IJ} \det \mtrx{A}_{\hat{I}\hat{J}}\,,
	$$
	where $\mtrx{A}_{\hat{I}\hat{J}}$ is the matrix obtained by removing the columns $I=\set{i_1,\ldots,i_m}$ and rows $J=\set{j_1,\ldots,j_m}$ from $\mtrx{A}$, and $\varepsilon_{IJ}=(-1)^{i_1+\cdots+i_m+j_1+\cdots+j_m}$.\hfill$\diamond$
\end{rmk}

Therefrom, one can prove Wick's formula, which is stated here as a proposition.

\begin{prop}[Wick's theorem]\label{WickGrass}
	Let $x_{i_1},\ldots,x_{i_m},y_{j_1},\ldots,y_{j_m}\in G$ be $2m$ generators.
	Then,
	$$
	\bigCorFun{\,x_{i_1}y_{j_i}\cdots x_{i_m}y_{j_m}\,}=
	\sum_{\sigma\in\SymmGrp_m}\sign{\sigma}\,\bigCorFun{x_{i_1}y_{j_{\sigma(1)}}}\cdots \bigCorFun{x_{i_m}y_{j_{\sigma(m)}}}\,.
	$$
\end{prop}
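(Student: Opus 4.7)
The plan is to prove Wick's theorem as a direct consequence of the cofactor formula of Remark \ref{part_corr} combined with the Jacobi complementary minor identity (also known as the Sylvester--Jacobi identity). In a nutshell, that remark already expresses both the left-hand side of the proposition and each of the two-point factors on the right-hand side as cofactors of $\mtrx{A}$; Jacobi's identity then turns the $m$-fold cofactor of $\mtrx{A}$ into an $m\times m$ minor of $\mtrx{A}^{-1}$, and the Leibniz expansion of that determinant produces exactly the signed sum over $\SymmGrp_m$ appearing in the statement.

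The first step is a reduction: one may assume all $2m$ generators are distinct (else both sides vanish by nilpotency of Grassmann generators) and that the indices $\{i_k\}$ and $\{j_k\}$ are each in increasing order, up to antisymmetry signs that match on both sides after relabelling the summation index $\sigma$. Writing $I=\{i_1,\ldots,i_m\}$ and $J=\{j_1,\ldots,j_m\}$, the two-point formula of Remark \ref{part_corr} identifies
$$\bigCorFun{x_{i_k}\, y_{j_l}} \;=\; \frac{\varepsilon_{i_k j_l}\,\det \mtrx{A}_{\hat{i}_k\hat{j}_l}}{\det \mtrx{A}}$$
with the matrix entry $(\mtrx{A}^{-1})_{j_l,\, i_k}$; applying the Jacobi complementary minor identity to the $m$-point formula from the same remark then rewrites its value as $\det\bigl((\mtrx{A}^{-1})_{j_l,\, i_k}\bigr)_{k,l=1}^m$. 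Expanding this $m\times m$ determinant by the Leibniz formula yields the signed sum $\sum_{\sigma\in\SymmGrp_m} \sign{\sigma}\,\prod_{k=1}^m \bigCorFun{x_{i_k}\, y_{j_{\sigma(k)}}}$, which is the desired right-hand side.

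The main obstacle is the bookkeeping of signs. Concretely, one must check that the global cofactor sign $\varepsilon_{IJ}$ arising in the $m$-point formula, together with the cofactor signs $\varepsilon_{i_k j_{\sigma(k)}}$ appearing in the $m$ individual two-point factors, combine into precisely the permutation sign $\sign{\sigma}$ supplied by the Leibniz expansion. This cancellation is exactly the content of the Jacobi identity; carrying the reduction to ordered, distinct tuples through in parallel on both sides requires some care but introduces no new ideas.
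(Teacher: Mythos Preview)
Your proposal is correct and follows exactly the route the paper indicates: the paper does not spell out a proof but merely writes ``Therefrom, one can prove Wick's formula'' after Remark~\ref{part_corr}, and your argument via the Jacobi complementary minor identity is precisely the natural way to carry this out. The reduction to distinct, ordered indices and the sign bookkeeping you describe are standard and go through without difficulty.
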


\subsection{Discrete symplectic fermions and double dimers.}\label{SFandDD}
The observables on double dimers described above can be alternatively found using the Grassmann formalism by considering the appropriate discretisation of the continuum action of symplectic fermions {\cite{Kausch2}}.
Let us build such discretisation on a dimerable bipartite graph $\Gra=(\White\sqcup\Black,\Edg)$ equipped with a Kasteleyn matrix $\Kast$.
Then, take the set of generators 
$$
\Gen\coloneqq
\big\{\,
\eta(w),\bar\eta(w),\xi(b),\bar{\xi}(b)
\ \big\vert\ 
w\in\White,\,b\in\Black
\,\big\}\,,
$$
and consider its Grassmann algebra $\Grass{\xi,\eta}\coloneqq\Grass{\Gen}$.
Abusing notation, $\xi$ and $\bar{\xi}$ are viewed as $\Grass{\xi,\eta}$-valued functions on $\Black$
as well as $\eta$ and $\bar{\eta}$ as $\Grass{\xi,\eta}$-valued functions on $\White$,
so that we can apply the operators $\dee^\Kast$ and $\deebar^\Kast$ on them.
For Berezin integration, take any order $\varsigma$ of the vertices, i.e. $\varsigma\colon \Ver\longrightarrow \set{1,\ldots,\vert\Ver\vert}$ injective, and take the order of the generators given by $\xi(b)\longmapsto 2\varsigma(b)$, $\cconj\xi(b)\longmapsto 2\varsigma(b)-1$, $\eta(w)\longmapsto 2\varsigma(w)-1$ and $\cconj\eta(w)\longmapsto 2\varsigma(w)$.
Note any order $\varsigma$ of the vertices yields the same sign for Berezin integration, and let such integration be denoted by $\int\cdot\,\ud\eta\ud\cconj\eta\ud\cconj\xi\ud\xi$.
\epar

The discretised version of the continuum action in {\cite{Kausch2}} given by
$$
\Action[\,\eta,\xi,\cconj\eta,\cconj\xi\,]\coloneqq
\sum_{w\in\White}\Big(\eta(w)\deebar^\Kast\xi(w)+\cconj{\eta}(w)\dee^\Kast\cconj{\xi}(w)\Big)
$$
leads to the same observables as the ones defined on double dimers.

\begin{prop}\label{Equivalence}
	Let $w_1,\ldots,w_n\in\White$ and $b_1,\ldots,b_n\in\Black$ be $2n$ vertices.
	Then
	$$
	\Big\langle{\,\eta(w_1)\xi(b_1)\cdots\eta(w_n)\xi(b_n)\,}\Big\rangle
	=
	\Expect_\Gra\Big[\,\eta(w_1)\xi(b_1)\cdots\eta(w_n)\xi(b_n)\,\Big]\,.
	$$
\end{prop}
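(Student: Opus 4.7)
My plan is to show that both sides of the claimed identity equal the determinant $\det\bigl[\overline{\Kast^{-1}(w_i,b_j)}\bigr]_{i,j=1}^n$ and match them through it.

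For the Grassmann side, I would first observe that the action $\Action$ is a sum of two quadratic forms in disjoint sets of generators, namely $\Action_1\coloneqq\sum_w\eta(w)\deebar^\Kast\xi(w)$ and $\Action_2\coloneqq\sum_w\cconj\eta(w)\dee^\Kast\cconj\xi(w)$. Consequently $\ParFun_\Action$ factorises and the Berezin integrations over $\cconj\eta,\cconj\xi$ in numerator and denominator of $\CorFun{\cdot}$ cancel out, reducing the correlator to a Gaussian Grassmann integral in the $(\eta,\xi)$ sector alone, with quadratic form matrix read off from
\[
\Action_1 \;=\; -\sum_{w\in\White}\sum_{b\sim w}\eta(w)\,\overline{\Kast(b,w)}\,\xi(b).
\]
Wick's theorem (Proposition \ref{WickGrass}) then yields
\[
\BigCorFun{\eta(w_1)\xi(b_1)\cdots\eta(w_n)\xi(b_n)} \;=\; \det\Bigl[\bigCorFun{\eta(w_i)\xi(b_j)}\Bigr]_{i,j=1}^n,
\]
and Remark \ref{part_corr} identifies the $2$-point function via Cramer's rule as $\bigCorFun{\eta(w)\xi(b)} = \overline{\Kast^{-1}(w,b)}$, after tracking the signs introduced by the chosen order $\varsigma$ of the generators and checking that they cancel uniformly across all pairs.

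For the double-dimer side, Theorem \ref{discrete_holomorphicity} already gives $\Expect_\Gra[\eta(w)\xi(b)] = \overline{\Kast^{-1}(w,b)}$, since $\deebar^\Kast$ is an invertible operator between the free $\C$-modules on $\White$ and $\Black$ (of equal cardinality by dimerability) and $\Expect_\Gra[\eta(w_0)\xi(\cdot)]$ is characterised by its $\deebar^\Kast$-derivative being $\delta_{\cdot,w_0}$. To upgrade from the $2$-point to the $2n$-point case, I would start from Proposition \ref{disjoint} and generalise the bijection of Theorem \ref{discrete_holomorphicity} to tuples: for each fixed $\sigma\in\SymmGrp_n$, a bijection couples an $n$-tuple of pairwise non-intersecting adapted odd paths $\lambda_i\colon w_i\FromTo b_{\sigma(i)}$ together with a double-dimer cover to a reference double-dimer cover, with the path factor $\prod_i\PathFact{\lambda_i}$ telescoping face-by-face via the Kasteleyn condition precisely as in the one-path proof. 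Summing with $\sgn\sigma$ reassembles the determinant $\det[\overline{\Kast^{-1}(w_i,b_j)}]_{i,j}$ and matches the Grassmann side.

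The main obstacle is the multi-path bijection: one must define the coupling globally on non-intersecting tuples and verify that sign contributions from toggles along different paths compose consistently with the Kasteleyn condition on all enclosed faces. A cleaner alternative that avoids Proposition \ref{disjoint} is to use Kenyon's classical determinantal formula for edge statistics on each of the two independent copies in $\Dimm\Gra$, combined with Corollary \ref{probabilistic}: this writes $\Expect_\Gra[\prod_i\eta(w_i)\xi(b_i)]$ as a determinant in $\Kast^{-1}$-entries directly whenever $w_i\sim b_i$, and the general case follows by expanding each $\eta(w_i)\xi(b_i)$ as a signed sum over simple paths (as in its very definition) and reducing each summand to a product of nearest-neighbour correlators already covered. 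Either route concludes the equivalence of the two definitions.
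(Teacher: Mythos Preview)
Your overall strategy---show both sides equal $\det[\overline{\Kast^{-1}(w_i,b_j)}]$---is different from the paper's, and the Grassmann half of it is fine. The probabilistic half, however, has a genuine gap. Your first route (the multi-path bijection) is, as you yourself flag, the ``main obstacle,'' and you do not carry it out; what it would amount to is a direct combinatorial proof of Wick's theorem on the double-dimer side, which in the paper's logic is \emph{derived from} Proposition~\ref{Equivalence} (see Proposition~\ref{WickProb}), so you cannot borrow it. Your second route via Corollary~\ref{probabilistic} plus Kenyon's edge formula only establishes the determinantal identity when every $w_i$ is adjacent to $b_i$. The proposed extension---``expanding each $\eta(w_i)\xi(b_i)$ as a signed sum over simple paths and reducing each summand to a product of nearest-neighbour correlators''---does not work: the $n$-pair random variable is defined with a global sum over $\SymmGrp_n$ and is \emph{not} a product of one-pair random variables, so there is no way to reduce its expectation to products of nearest-neighbour ones without already having Wick.

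The paper bypasses determinants entirely and matches the two sides through a common combinatorial sum. Expanding the Berezin integral, the surviving monomials are indexed by a pair consisting of a full dimer cover of $\Gra$ (from the $\cconj\eta\cconj\xi$ factors) and a dimer cover of $\Gra$ with $w_1,b_1,\dots,w_n,b_n$ removed (from the $\eta\xi$ factors). Superimposing these two matchings produces loops together with $n$ non-intersecting odd paths joining the $w_i$ to the $b_{\sigma(i)}$; the Kasteleyn condition makes every loop contribute $1$, and the paths contribute exactly $\sign\sigma\prod_i\PathFact{\lambda_i}$. Dividing by $\ParFun_\Action=|\Dim\Gra|^2$ one recognises precisely the expression in Proposition~\ref{disjoint} for $\Expect_\Gra[\eta(w_1)\xi(b_1)\cdots\eta(w_n)\xi(b_n)]$. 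This direct identification avoids both the multi-path bijection and any appeal to $\Kast^{-1}$.
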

\begin{proof}
	By Remark \ref{part_corr}, the partition function is the determinant of a block-diagonal matrix with two blocks: $\Kast$ and $\cconj\Kast$.
	The partition function is then $\det\Kast\,\cconj{\det \Kast}$, which, by virtue of Kasteleyn's theorem, equals $\vert\Dim{\Gra}\vert^2$.
	The action $\Action[\xi,\eta]$ contains the terms $\cconj{\Kast(b,w)}\,\eta(w)\xi(b)$ and $\Kast(b,w)\,\cconj\eta(w)\cconj\xi(b)$ for each edge $\set{b,w}\in\Edg$.
	Therefore, the terms that contribute to $\int \eta(w_1)\xi(b_1)\cdots$ $\eta(w_n)\xi(b_n)e^{\Action[\eta,\xi,\cconj\eta,\cconj\xi]}\,\ud\eta\ud\cconj\eta\ud\cconj\xi\ud\xi$ are in the $(\vert\Ver\vert/2-n)$-th term in the series of the exponential:
	\begin{equation}
		\eta(w_1)\xi(b_1)\cdots\eta(w_n)\xi(b_n)\frac{\Action[\xi,\eta]^{\vert\Ver\vert/2-n}}{(\vert\Ver\vert/2-n)!}\,.
	\end{equation}
	Each monomial appears $(\vert\Ver\vert/2-n)!$ times, and can be identified with a pair consisting of a dimer cover on $\Gra$ ---from the factors $\cconj\eta\cconj\xi$--- and a dimer cover on the graph $\Gra$ with the vertices $w_1,b_1,\ldots,w_n,b_n$ removed.
	That is, there is a term that survives Berezin integration for each configuration consisting of $n$ non-intersecting paths $\lambda_i\colon w_i\FromTo b_{\sigma(i)}$ with $\sigma\in\SymmGrp_n$ for $1\leq i\leq n$, and a double-dimer cover on the complement of those paths in $\Gra$.
	In light of the Kasteleyn condition, the factors arising from loops are always $1$, and the contribution of such term is $\sign{\sigma}\prod_{i=1}^n\PathFact{\lambda_i}$.
	The statement becomes clear by writing the right-hand side using the expression in Proposition \ref{disjoint} and noting one has the same sum as on the left-hand side.
\end{proof}

\begin{prop}[Wick's theorem]\label{WickProb}
	Fix $w_1,\ldots,w_n\in\White$ and $b_1,\ldots,b_n\in\Black$.
	Then,
	\epar
	$$
	\Expect_\Gra\big[\,\eta(w_1)\xi(b_1)\cdots\eta(w_n)\xi(b_n)\,\big]=\sum_{\sigma\in\SymmGrp}\sign{\sigma}\prod_{i=1}^n\Expect_\Gra\big[\,\eta(w_i)\xi(b_{\sigma(i)})\,\big]\,.
	$$
\end{prop}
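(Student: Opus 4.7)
The plan is to reduce the statement to Wick's theorem in the Grassmann formalism, Proposition \ref{WickGrass}, via the bridge already provided by Proposition \ref{Equivalence}, which equates the probabilistic expectation on double dimers with the Grassmann correlation function associated to the discretised symplectic-fermion action. In other words, once both sides of the target identity are translated into Grassmann correlations, the statement is literally the fermionic Wick formula applied to $2n$ generators drawn from the $\eta$ and $\xi$ sectors of $\Grass{\xi,\eta}$.

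Concretely, I would proceed as follows. First, Proposition \ref{Equivalence} rewrites the left-hand side as
$$
\Expect_\Gra\big[\eta(w_1)\xi(b_1)\cdots\eta(w_n)\xi(b_n)\big]
=
\BigCorFun{\eta(w_1)\xi(b_1)\cdots\eta(w_n)\xi(b_n)},
$$
and its $n=1$ case turns each factor on the right-hand side of the target identity into a two-point correlation $\CorFun{\eta(w_i)\xi(b_{\sigma(i)})}$. Second, I would note that the action
$$
\Action
=
\sum_{w\in\White}\big(\eta(w)\deebar^\Kast\xi(w)+\cconj{\eta}(w)\dee^\Kast\cconj{\xi}(w)\big)
$$
is block diagonal: the $\eta,\xi$ generators do not couple to the $\cconj\eta,\cconj\xi$ generators. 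Hence the coupling matrix $\mtrx{A}$ of Remark \ref{part_corr} splits into two blocks, and any correlation function of a monomial in $\eta$ and $\xi$ alone is computed entirely from the $\cconj{\Kast}$-block. Third, Proposition \ref{WickGrass} with $x_i=\eta(w_i)$ and $y_j=\xi(b_j)$ yields
$$
\BigCorFun{\eta(w_1)\xi(b_1)\cdots\eta(w_n)\xi(b_n)}
=
\sum_{\sigma\in\SymmGrp_n}\sign{\sigma}\,\prod_{i=1}^n\CorFun{\eta(w_i)\xi(b_{\sigma(i)})},
$$
and a final application of Proposition \ref{Equivalence} to each two-point correlation on the right converts everything back into expectations, finishing the argument.

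The only subtlety I expect is the degenerate case in which some $w_i$'s coincide or some $b_j$'s coincide. On the probabilistic side, Corollary \ref{null-fermions} forces the left-hand side to vanish; on the Grassmann side, the corresponding monomial contains a repeated anticommuting generator and therefore vanishes in $\Grass{\xi,\eta}$, so one has to check that the right-hand side of Wick's formula vanishes as well --- which it does, either by reading it as the expansion of a determinant with repeated rows or columns, or by an involution pairing each permutation $\sigma$ with $\sigma\circ[i\,j]$ where $i\neq j$ are indices of coinciding vertices. This is the only place where a little care is needed; otherwise the argument is just the concatenation of Propositions \ref{Equivalence} and \ref{WickGrass}, and I do not anticipate any further obstacle.
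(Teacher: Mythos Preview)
Your proposal is correct and matches the paper's own proof, which simply states that the result is a corollary of Propositions~\ref{Equivalence} and~\ref{WickGrass}. Your additional discussion of the block-diagonal structure and the degenerate case of repeated vertices is more detailed than what the paper writes, but it is consistent with the same approach.
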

\begin{proof}
It is a corollary of Propositions \ref{Equivalence} and \ref{WickGrass}.
\end{proof}

\newpage
\section{Symplectic fermions on $\Z^2$}
\label{sec: infinite}

\renewcommand{\gray}{{\bullet\scriptscriptstyle{\mathbf{1}}}}
\renewcommand{\white}{\circ}
\renewcommand{\black}{{\bullet\scriptscriptstyle{\mathbf{0}}}}
\newcommand{\bblack}{{\protect\tikz[baseline=-0.4ex]{\protect\draw[black,fill] (0,0) circle (1.8pt) ;}}}

This section is devoted to the study of correlation functions $\Expect_\Gra[\eta(w_1)\xi(b_1)\cdots \eta(w_n)\xi(b_n)]$ as one lets $\Gra$ grow to cover the whole plane.
By virtue of Wick's theorem ---Proposition \ref{WickProb}---,
it suffices to study the behaviour of $\Expect_\Gra[\eta(w)\xi(b)]$.
\epar

In particular, such thermodynamic limit is considered along sequences of subgraphs of the infinite square lattice that grow to cover the whole $\Z^2\subset\C$.
For that purpose, consider the following partition of $\Z^2$ --- see Figure \ref{temperleyan}:
elements with both coordinates even are called \emph{even black vertices}, elements with both coordinates odd are called \emph{odd black vertices}, and the rest are called \emph{white vertices}.
Then, let $\ZBlack$, $\ZGray$ and $\ZWhite$ denote the set of even black vertices, odd black vertices and white vertices respectively.
This way, the infinite square lattice is bipartite between black vertices $\Z^2_\bullet=\ZBlack\sqcup\ZGray$ and white vertices $\ZWhite$.

\begin{figure}[h!]
	\centering
	\begin{overpic}[scale=0.757, tics=10]{./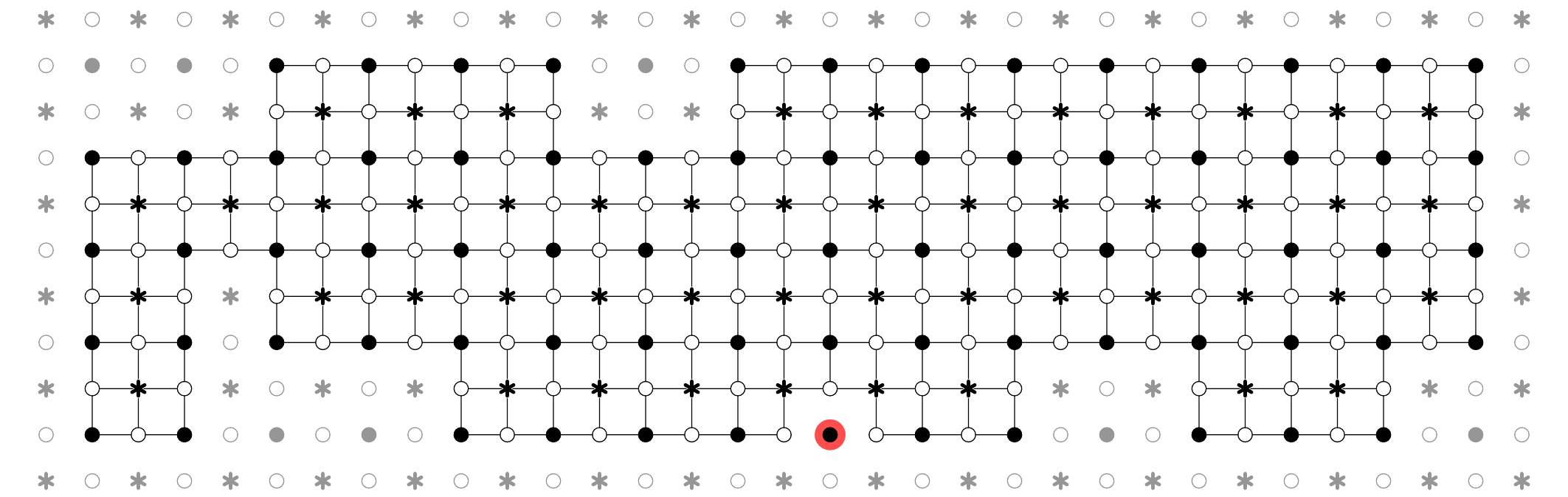}
	\end{overpic}
	\caption{{A temperleyan domain $\Ver=\Gray\sqcup\Black\sqcup\White$ with a distinguished \\ boundary point $\protectsink$ (red) and its associated dimerable graph $\Gra_{\protectsmallsink}$.}}
	\label{temperleyan}
\end{figure}

\subsection{Temperleyan domains.}\label{temper_domains}
A subset of vertices $\Ver\subset\Z^2$ is said be \emph{connected}, 
if every two vertices in $\Ver$ can be connected by a $\Z^2$-nearest-neighbour path within $\Ver$.
A finite con\-nec\-ted subset of vertices $\Ver\subset\Z^2$ is said to be a \emph{simply-connected domain} if there exists a Jordan curve made of a concatenation of length $1$ segments between nearest neighbours in $\Ver$ such that all the vertices in $\Z^2\setminus\Ver$ and no vertex in $\Ver$ lie in its exterior.
The points where this Jordan curve is not smooth are called the \emph{corners} of $\Ver$.
Note that it follows that all the vertices of a simply-connected domain have at least $2$ nearest neighbours within the domain.
A simply-connected domain is called a \emph{temperleyan domain} if all of its corners are even black --- see Figure \ref{temperleyan}.
From the partition of $\Z^2$,
a temperleyan domain inherits a partition $\Ver=\Black\sqcup\Gray\sqcup\White$.
Note $\vert\Black\sqcup\Gray\vert=\vert\White\vert + 1$.
\epar

Let $\Ver$ be a temperleyan domain and let $\sink\in\Black$ be a distinguished ---even--- black \emph{boundary point}, i.e. a black vertex such that there exists $w\in\ZWhite\setminus\Ver^{}_{\white}$ satisfying $\Vert \sink-w\Vert=1$.
Define $\Pierced{\Ver}\coloneqq\Ver\setminus\set{\sink}$ and let $\Pierced{\Gra}=(\Pierced{\Ver},\Pierced{\Edg})$ denote the ---dimerable--- graph induced by $\Pierced{\Ver}$,
i.e. $\Pierced{\Edg}\coloneqq\set{\set{z,w}\subset\Pierced{\Ver}\,\colon \norm{z-w}=1}$.
Recall from Remark \ref{square_Kast} that, on subgraphs of $\Z^2$ such as $\Pierced{\Gra}$, the holomorphic and antiholomorphic derivatives give rise to a Kasteleyn matrix. 
Those derivatives are denoted by $\dee$ and $\deebar$ and act on functions as follows: For $f\colon\Pierced{\Ver}\longrightarrow\C$
$$
\dee f(z)\coloneqq\sum_{\underset{\set{z,w}\in\Edg_{\tinysink}}{w\in\Ver_{\tinysink}}}\frac{f(w)}{w-z}
\mspace{50mu}
\textnormal{and}
\mspace{50mu}
\deebar f(z)\coloneqq\sum_{\underset{\set{z,w}\in\Edg_{\tinysink}}{w\in\Ver_{\tinysink}}}\frac{f(w)}{\cconj{w}-\cconj{z}}\,.
$$


\subsection{Green's functions and $2$-point function.}
For any  graph $\Gra=(\Ver,\Edg)$ and any function on its vertices $f:\Ver\longrightarrow\C$,
the \emph{laplacian} of $f$ at $v\in\Ver$ is defined as
$$
\Delta^\Gra f(v)\coloneqq\sum_{\underset{\set{u,v}\in\mathrel{\raisebox{0.6pt}{$_{\mathcal{E}}$}}}{u\in\mathrel{\raisebox{1.2pt}{$_{\mathcal{V}}$}}}}\big(f(u)-f(v)\big)\,.
$$
The \emph{$2$-point function} $\Expect_{\Gra_{\smallsink}}[\,\eta(w)\xi(z)\,]$ on $\Gra_{\smallsink}$ will be proven to be closely related to the Green's function of the laplacian in two associated graphs --- see Proposition \ref{2pt-Green}.
Let us build those --- see Figure \ref{temperleyan_2}.
\epar

\begin{figure}[b!]
	\centering
	\begin{overpic}[scale=0.757, tics=10]{./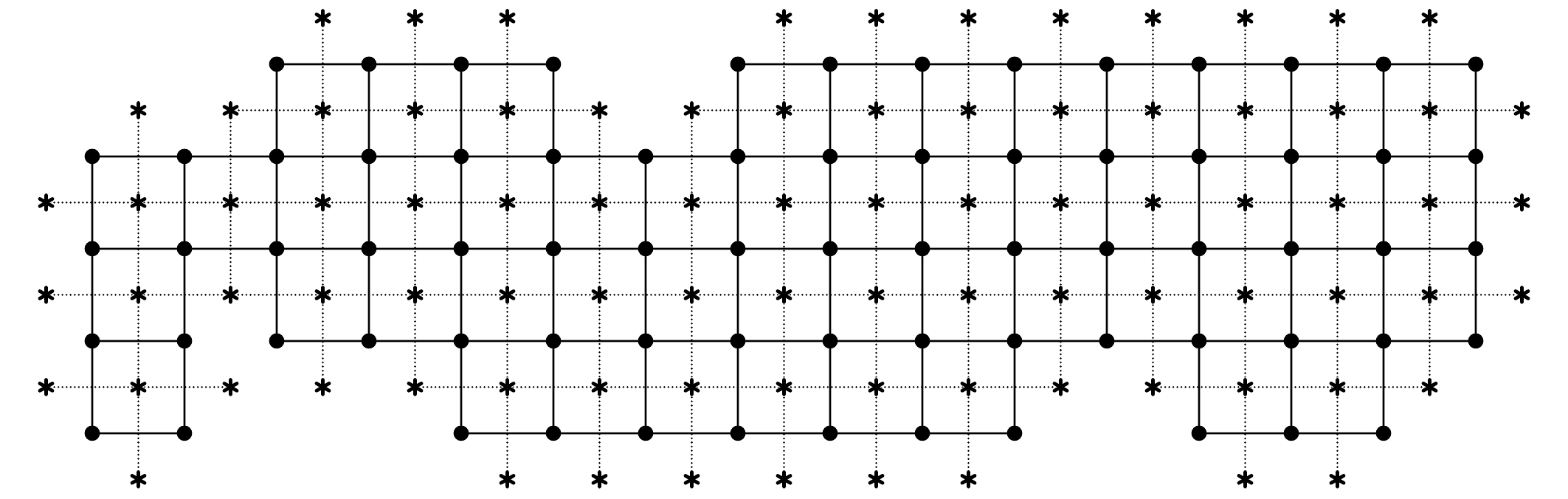}
	\end{overpic}
	\caption{The graphs $\Gra_\black$ (solid) and $\Gra_\gray$ (dotted) associated to \\ the temperleyan domain $\Ver$ in Figure \ref{temperleyan}.}
	\label{temperleyan_2}
\end{figure}

The graph $\Gra_\black=(\Black,\Edg_\black)$ is the one induced by $\Black$,
i.e. $\Edg_\black$ contains all pairs $\set{b_1,b_2}\subset\Black$ such that $\Vert b_1-b_2\Vert=2$.
Then, for $b_0\in\Black\setminus\set{\sink}$, let $\Green_\black(b_0,\,\cdot\,)$ be the Green's function on $\Gra_\black$ with Dirichlet boundary conditions at $\sink$,
i.e. the unique solution $f\colon\Black\longrightarrow\C$ to $\Delta^{\black}f(b)\coloneqq\Delta^{\Gra_\black}f(b)=-\delta_{b,b_0}$ for all $b\in\Black\setminus\set{\sink}$ with $f(\sink)=0$.
Essentially, $\Green_\black$ is the Green's function on $\Gra_\black$ with Neumann boundary conditions everywhere except at $\sink$.
\epar

As for the odd side, define the \emph{boundary of $\Gray$} as the set of odd vertices $b^*\in\ZGray\setminus\Gray$ at Manhattan distance $1$ from $\Ver$, and let it be denoted by $\bdry\Gray$.
Similarly as in the even case, let $\Gra_\gray\coloneqq(\Gray\cup\partial\Gray,\Edg_\gray^{})$ be the graph induced at distance $2$ by $\Gray$ decorated with edges from $\Gray$ to $\partial\Gray$.
For $b_0^*\in\Gray$, let $\Green_{\gray}(b_0^*,\,\cdot\,)$ be the Green's function on $\Gra_\gray$ with Dirichlet boundary conditions on $\partial\Gray$,
i.e. the unique solution $f\colon\Gray\cup\partial\Gray\longrightarrow\C$ to $\Delta^{\gray} f(b^*)\coloneqq\Delta^{\Gra_\gray}f(b^*)=-\delta_{b^*,b_0^*}$ for all $b^*\in\Gray$ and satisfying $f(b^*)= 0$ for all $b^*\in\bdry\Gray$.
\epar

Note $\Gra_\black$ and $\Gra_\gray$ are conjugate to each other in the sense that each edge of $\Gra_\black$ crosses perpendicularly an edge of $\Gra_\gray$ and vice versa.

\begin{rmk}\label{d-dbar-lap}
	Consider a function $f\colon\Pierced{\Ver}\longrightarrow\C$.
	For $b\in\Black\setminus\set{\sink}$, a simple computation yields $\dee\deebar f(b)=\Delta^\black {f}_\black(b)$ where ${f}_\black$ is the restriction of $f$ to $\Black\setminus\set{\sink}$ and extended to $\sink$ by zero.
	Similarly, for $b^*\in\Gray$, one gets $\dee\deebar f(b^*)=\Delta^\gray {f}_\gray(b^*)$ where ${f}_\gray$ is the restriction of $f$ to $\Gray$ and extended to $\bdry\Gray$ by zero.
	\hfill$\diamond$
\end{rmk}

\begin{prop}\label{2pt-Green}
Let $w\in\White$ be a white vertex not adjacent to $\sink$.
Let $b^*_1,b^*_2\in\Gray$ and $b_1,b_2\in\Black$ be the four $\Z^2$-nearest neighbours of $w$.
Then, for $z\in\Gray\cup\Black\setminus\set{\sink}$,
$$
\frac{1}{2}\,
\Expect_{\Gra_{\Ver}^\times}[\,\eta(w)\xi(z)\,]\,
=
\begin{cases}
\mspace{10mu}
\frac{1}{b_1-b_2}\,
\Big(
\Green_\black(b_1,z)-\Green_\black(b_2,z)
\Big)
\mspace{20mu}
&
\textnormal{if}
\mspace{10mu}
z\in\Black\setminus\set\sink
\\
\ 
\\
\mspace{10mu}
\frac{1}{b^*_1-b^*_2}\,
\Big(
\Green_\gray(b^*_1,z)-\Green_\gray(b^*_2,z)
\Big)
\mspace{20mu}
&
\textnormal{if}
\mspace{10mu}
z\in\Gray
\end{cases}\,.
$$
\end{prop}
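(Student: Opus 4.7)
Let $F_w(z) := \Expect_{\Pierced{\Gra}}[\eta(w)\xi(z)]$, viewed as a $\C$-valued function on $\Pierced{\Ver}$. The plan is to exhibit $F_w$ restricted to $\Black$ and to $\Gray$ as Dirichlet Poisson kernels on $\Gra_\black$ and $\Gra_\gray$ respectively, each with two point-sources located at the even-black (resp.\ odd-black) $\Z^2$-nearest-neighbors of $w$.

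The first step is to use Theorem \ref{discrete_holomorphicity}, together with Remark \ref{square_Kast} (which identifies the Kasteleyn matrix on $\Pierced{\Gra}$ with $\Kast(b,w)=\cconj{b-w}$) and the sign relating $\deebar$ of Section \ref{temper_domains} to $\deebar^\Kast$, to show that
$$
\deebar F_w(w') = \delta_{w',w} \qquad \text{for all } w' \in \White.
$$
Then I would apply $\dee$ to both sides. On the right-hand side, the elementary formula $\dee(\delta_{\,\cdot\,,\,w})(z) = (w-z)^{-1}$ if $z \sim w$ and $0$ otherwise; on the left-hand side, Remark \ref{d-dbar-lap} identifies $\dee\deebar F_w(z)$ with $\Delta^\black(F_w)_\black(z)$ for $z \in \Black\setminus\set{\sink}$ and with $\Delta^\gray(F_w)_\gray(z)$ for $z \in \Gray$. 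This yields, on $\Black\setminus\set{\sink}$,
$$
\Delta^\black (F_w)_\black \;=\; \frac{1}{w-b_1}\,\delta_{\,\cdot\,,\,b_1} \;+\; \frac{1}{w-b_2}\,\delta_{\,\cdot\,,\,b_2},
$$
together with $(F_w)_\black(\sink)=0$, and the analogous Poisson equation for $(F_w)_\gray$ on $\Gra_\gray$ with sources at $b_1^{*},b_2^{*}$ and Dirichlet data $0$ on $\bdry\Gray$ (this is automatic, since by Remark \ref{d-dbar-lap} the function $(F_w)_\gray$ is the extension by zero of $F_w$ to $\bdry\Gray$, and $\bdry\Gray$ lies outside $\Pierced{\Ver}$). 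The hypothesis $w \not\sim \sink$ guarantees that none of the four sources lies on the Dirichlet boundary.

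By uniqueness of the Dirichlet problem, superposition of the defining Green's functions yields
$$
(F_w)_\black(z) \;=\; -\frac{\Green_\black(b_1,z)}{w-b_1} \;-\; \frac{\Green_\black(b_2,z)}{w-b_2},
$$
and the analogous formula for $(F_w)_\gray$. Since $w$ is the $\Z^2$-midpoint of $b_1,b_2$ (and of $b_1^{*},b_2^{*}$), the identity $w - b_1 = -(w-b_2) = \tfrac{1}{2}(b_2-b_1)$ reduces the above to the claimed form $\tfrac{1}{2}F_w(z) = (b_1-b_2)^{-1}\big(\Green_\black(b_1,z)-\Green_\black(b_2,z)\big)$, and likewise on the odd sublattice.

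The main obstacle I anticipate is the first step: carefully reconciling the abstract $\deebar^\Kast$ of Section \ref{sec: finite} with the concrete $\deebar$ of Section \ref{temper_domains}, and tracking the minus sign hidden in the definition of $\deebar^\Kast$ at white vertices, so as to pin down both the location and the sign of the delta on the right-hand side of $\deebar F_w = \delta_{\,\cdot\,,\,w}$. Once that is settled, the remainder reduces to standard discrete potential theory: $\dee\deebar=\Delta$ on each colour class followed by uniqueness of the Dirichlet Green's function on $\Gra_\black$ and on $\Gra_\gray$.
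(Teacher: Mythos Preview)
Your proposal is correct and follows essentially the same route as the paper's proof: both invoke Theorem~\ref{discrete_holomorphicity} to obtain $\deebar F_w = \delta_{\cdot,w}$ on $\White$, apply $\dee$ and use Remark~\ref{d-dbar-lap} to convert $\dee\deebar$ into the graph Laplacian on each black sublattice, then conclude by uniqueness of the Dirichlet Green's function together with the midpoint identity $w-b_1 = -(w-b_2)$. The sign reconciliation between $\deebar^\Kast$ and the $\deebar$ of Section~\ref{temper_domains} that you flag as the main obstacle is indeed the only subtlety, and the paper itself glosses over it; once settled, the two arguments are line-for-line the same.
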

\begin{proof}
Let $f_\gray$ be the restriction of $z\mapsto\Expect_{\Gra_{\tinysink}}[\eta(w)\xi(z)]$ to $\Gray$ extended to $\bdry\Gray$ by zero.
Theorem \ref{discrete_holomorphicity} states $\deebar_{\tilde{w}}\Expect_{\Gra_{\tinysink}}[\eta(w)\xi(\tilde{w})]=\delta_{w,\tilde{w}}$ for $\tilde{w},w\in\White$.
Hence, by Remark \ref{d-dbar-lap}, for any $b^*\in\Gray$,
$$
\Delta^\gray f_\gray(b^*)
=
(\dee\deebar)_{b^*}\Expect_{\Gra_{\tinysink}}[\eta(w)\xi(b^*)]
=
\sum_{\underset{\set{b^*,\tilde w}\in\Edg_{\tinysink}}{\tilde w\in\Ver_{\tinysink}}}\frac{\deebar_{\tilde w}\Expect_{\Gra_{\tinysink}}\big[\eta(w)\xi(\tilde w)\big]}{\tilde w-b^*}
=
\frac{\delta_{b^*_1,b^*}}{w-b^*_1}
+
\frac{\delta_{b^*_2,b^*}}{w-b^*_2}\,.
$$
Moreover, $f_\gray$ satisfies the right boundary conditions, i.e. $f_\gray(b^*)=0$ for $b^*\in\bdry\Gray$, and so,
the claim follows from the uniqueness of the Green's function and $w-b^*_1=b^*_2-w$.
Since $w$ is not adjacent to $\sink$, the proof for even black vertices is identical.
\end{proof}

\begin{figure}[h!]
	\centering
	\begin{overpic}[scale=0.75, tics=10]{./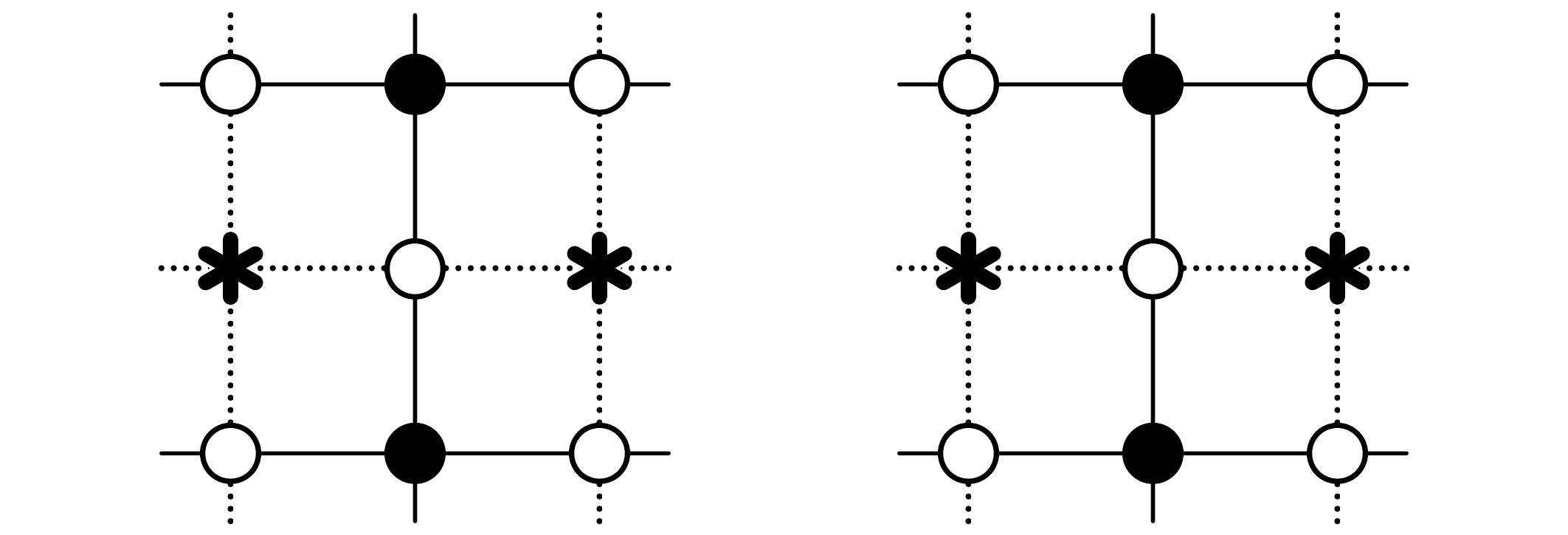}
		\put(34,20){\large $b^*_R$}
		\put(10.5,20){\large $b^*_L$}
		\put(22.2,31.7){\large $b_+$}
		\put(22.2,8){\large $b_-$}
		\put(22.9,19.5){\large $\tilde{w}$}
		\put(82,20){\large $b^*_1$}
		\put(58.5,20){\large $b^*_2$}
		\put(70.2,31.7){\large $b_1$}
		\put(70.2,8){\large $b_2$}
		\put(70.2,19.5){\large ${w}$}
	\end{overpic}
	\caption{\centering The four $\Z^2$-nearest neighbours of a white vertex.}
	\label{neighbours}
\end{figure}

\begin{rmk}\label{harm_conj-finite}
Take the vertices $b^*_1,b^*_2,b_1,b_2$ in Proposition \ref{2pt-Green} to satisfy the relation $b^*_1-b^*_2=-\ii(b_1-b_2)$ --- see Figure \ref{neighbours}.
The functions $f_\black(b)\coloneqq \Green_\black(b_1,b)-\Green_\black(b_2,b)$
and $f_\gray(b^*)\coloneqq \Green_\gray(b^*_1,b^*)-\Green_\gray(b^*_2,b^*)$ defined on $\Black$ and $\Gray \cup\bdry\Gray$,
respectively,
are \emph{harmonic conjugates} of each other on $\White\setminus\set{w}$ in the following sense:
For any $\tilde{w}\in\White\setminus\set{w}$, let $b_+,b_-\in\Black$ be the $\Z^2$-nearest neighbours of $\tilde{w}$,
and let $b^*_L,b^*_R\in\Gray\cup\bdry\Gray$ be the $\Z^2$-nearest neighbours of $\tilde{w}$ sitting on the left and right, respectively, when going from $b_-$ to $b_+$ through $\tilde{w}$ --- see Figure \ref{neighbours}.
Then,
$$
\mspace{200mu}
f_\black(b_+)-f_\black(b_-)
=
f_\gray(b^*_R)-f_\gray(b^*_L)\,.
\mspace{200mu}
\diamond
$$
\end{rmk}

\subsection{Thermodynamic limit.}
From Proposition \ref{2pt-Green}, one expects the $2$-point function to converge to the derivative of the full-plane Green's function on the two sublattices $\ZBlack$ and $\ZGray$ as one lets the temperleyan domain $\Ver$ grow to cover the whole $\Z^2$.
The full-plane Green's func\-tion $\Green$ is the unique function on $\ZBlack$ satisfying $\Green(0)=0$, $\Delta \Green(z)=-\delta_{z,0}$ with asymptotic behaviour
$$
\Green(z) = - \frac{1}{2\pi}\log{\vert z\vert}+C+O\bigg(\frac{1}{\vert z\vert^{2}}\bigg)
$$
as $\vert z\vert\to\infty$, where $C=-(\gamma+\frac{3}{2}\log 2)/2\pi$ and $\gamma$ is Euler's constant \cite{LawLim}.
\epar

\begin{rmk}\label{harm_conj-infinite}
Similarly as in Remark \ref{harm_conj-finite}, one can build two functions in terms of the full-plane Green's function that are harmonic conjugates of each other on $\ZWhite$ except for one point.
Fix a white vertex $w\in\ZWhite$, and let $b_1,b_2\in\ZBlack$ and $b^*_1,b^*_2\in\ZGray$ be its four $\Z^2$-nearest neighbours satisfying $b^*_1-b^*_2=-\ii(b_1-b_2)$ as in Remark \ref{harm_conj-finite} --- see Figure \ref{neighbours}.
Then the functions $F_\black(b)\coloneqq \Green(b_1-b)-\Green(b_2-b)$
and $F_\gray(b^*)\coloneqq \Green(b^*_1-b^*)-\Green(b^*_2-b^*)$ defined, respectively, on $\ZBlack$ and $\ZGray$ are harmonic conjugate of each other on $\ZWhite\setminus\set{w}$ in the same sense as in Remark \ref{harm_conj-finite}
\hfill$\diamond$
\end{rmk}

For the rest of this section, fix a sequence $(\Ver^{n})_{n\in\N}$ of temperleyan domains with distinguished black boundary points $\sink_n\in\Black^n$ that {converges to $\Z^2$},
i.e. for every $k\in\N$ there exists $N\in\N$ such that $\Ball(0;k)\subset\Ver^n$ for all $n\geq N$,
where $\Ball(z;r)\coloneqq\set{w\in\Z^2\,\colon \norm{z-w}<r}$.
Let $\Ver^n\uparrow\Z^2$ denote such convergence.
For such a se\-quence, let $\Gra_{\smallsink}^n$, $\Gra_\black^n$, $\Gra_\gray^n$, $\Green_{\black}^{(n)}$ and $\Green_{\gray}^{(n)}$ denote the objects described in the previous subsections for the temperleyan domain $\Ver^n$.

\begin{thm}\label{the_convergence}
For any $w\in\ZWhite$ and $z\in\ZBlack\sqcup\ZGray$,
$$
\frac{1}{2}\,
\Expect_{\Gra_{\tinysink}^n}\big[\,\eta(w)\xi(z)\,\big]
\ \ \xrightarrow{\ \ \ n\to\infty\ \ \ }\ \ \ 
\frac{\Green(z-w_1)-\Green(z-w_2)}{w_1-w_2}\,
$$
where $w_1,w_2\in\ZBlack\sqcup\ZGray$ are the two $\Z^2$-nearest neighbours of $w$ of the same parity as $z$.
\end{thm}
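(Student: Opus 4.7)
By Proposition \ref{2pt-Green}, the two-point function equals $(w_1 - w_2)^{-1}\bigl(\Green^{(n)}_\black(w_1,z) - \Green^{(n)}_\black(w_2,z)\bigr)$ when $z$ is even black, and the analogous expression with $\Green^{(n)}_\gray$ when $z$ is odd black, where $w_1, w_2$ are the two $\Z^2$-nearest neighbours of $w$ on the same sublattice as $z$. The theorem therefore reduces to showing, in each parity, that this finite-domain Green's function difference converges to the full-plane difference $\Green(w_1 - z) - \Green(w_2 - z)$.

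For the $\gray$ sublattice, where $\Green^{(n)}_\gray$ has genuine Dirichlet boundary data on $\bdry\Gray^n$, set $H^\gray_n(b^*) := \Green^{(n)}_\gray(w_1, b^*) - \Green^{(n)}_\gray(w_2, b^*)$ and $H^\gray(b^*) := \Green(w_1 - b^*) - \Green(w_2 - b^*)$. The Poisson sources of $H^\gray_n$ and $H^\gray$ coincide on $\Gray^n$, so $H^\gray_n - H^\gray$ is discrete harmonic on $\Gray^n$ with boundary data $-H^\gray\big|_{\bdry\Gray^n}$. From the expansion $\Green(z) = -\frac{1}{2\pi}\log|z| + C + \OO(|z|^{-2})$ one obtains $H^\gray(b^*) = \OO(1/|b^*|)$ as $|b^*|\to\infty$ with $w_1, w_2$ fixed. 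Since $\bdry\Gray^n$ recedes to infinity, the discrete maximum principle gives $\sup_{\Gray^n} |H^\gray_n - H^\gray| = \OO\bigl(1/\dist(0, \bdry\Gray^n)\bigr) \to 0$.

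For the $\black$ sublattice a direct maximum-principle argument fails, because $\Green^{(n)}_\black$ is Dirichlet only at $\sink_n$ and Neumann-like on the rest of its outer boundary. The plan is to reduce to the $\gray$ case via the harmonic-conjugacy relations of Remarks \ref{harm_conj-finite} and \ref{harm_conj-infinite}: both the finite pair $(H^\black_n, H^\gray_n)$ and the full-plane pair $(H^\black, H^\gray)$ are discrete harmonic conjugates on white vertices away from $w$. Thus $H^\black_n$ is determined by $H^\gray_n$ up to an additive constant, fixed by $H^\black_n(\sink_n) = 0$, while $H^\black$ is determined by $H^\gray$ up to a constant, fixed by the decay $H^\black(b) = \OO(1/|b|) \to 0$ at infinity. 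Writing $H^\black_n(b) = H^\black_n(b) - H^\black_n(\sink_n)$ as a telescoping sum of increments $H^\black_n(b_+) - H^\black_n(b_-)$ along a lattice path of black vertices from $\sink_n$ to $b$, and applying the harmonic-conjugacy relation to replace each such increment by $H^\gray_n(b^*_R) - H^\gray_n(b^*_L)$, one obtains an expression in terms of $H^\gray_n$ alone. The analogous identity for $H^\black(b)$, using a path that extends out to infinity, is absolutely convergent because the discrete gradient of $H^\gray$ decays as $\OO(1/|b^*|^2)$. Pointwise convergence $H^\black_n(b) \to H^\black(b)$ then follows from the $\gray$-case uniform convergence, from path-independence, and from $H^\black(\sink_n) \to 0$.

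The main obstacle is the $\black$ case: the mostly-Neumann outer boundary rules out the clean maximum-principle approach used in the $\gray$ case, and the harmonic-conjugacy detour requires passing to the limit in a path sum whose length grows with $n$. The crucial inputs are the uniform convergence rate in the $\gray$ case (to control the inner part of the sum) and the $\OO(1/|b^*|^2)$ decay of the gradient of $H^\gray$ near infinity (to control the tail near $\sink_n$); both are in hand.
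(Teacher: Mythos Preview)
Your treatment of the $\gray$ (Dirichlet) sublattice matches the paper exactly. For the $\black$ sublattice, your harmonic-conjugacy idea is the right starting point and is what the paper uses, but there is a genuine gap when the distinguished point $\sink_n$ is far from the closest boundary point to $b$.

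Concretely: the path from $\sink_n$ to $b$ has length at least $\Vert\sink_n - b\Vert$, and you need to control the increments of $h^\gray_n \coloneqq H^\gray_n - H^\gray$ along it. Your two claimed inputs --- the uniform bound $\sup|h^\gray_n| = O(1/d_n)$ with $d_n = \dist(0,\bdry\Gray^n)$, and the $O(1/|b^*|^2)$ gradient decay of the full-plane $H^\gray$ --- do not control the increments of $H^\gray_n$ (nor of $h^\gray_n$) near $\bdry\Gray^n$. There, each increment can be of order $1/d_n$ individually, and the hypothesis $\Ver^n\uparrow\Z^2$ puts no upper bound on $\Vert\sink_n\Vert/d_n$: for a sequence of $d_n\times d_n^2$ rectangles with $\sink_n$ at a far corner, the tail of your path sum is of order $d_n$ and does not vanish. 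Path independence does not help, because any path in $\Black^n$ from $b$ to $\sink_n$ must eventually run along or near the boundary.

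The paper circumvents this in two stages. First (Proposition~\ref{Neumann-close_sink}) it proves the result only for the special choice $\tilde\sink_n = $ closest boundary point to $b$; your telescoping/Harnack argument does work there, because the path is essentially radial of length $d_n$ and the Harnack bound $|h^\gray_n(b^*_R)-h^\gray_n(b^*_L)|\leq C\sup|h^\gray_n|/\dist(b^*,\bdry\Gray^n)$ gives a harmonic series $\sum_{i=1}^{d_n} C/(d_n\cdot i)\sim (\log d_n)/d_n\to 0$. Second, it removes the dependence on the sink location by a probabilistic argument (Lemma~\ref{coupling}): the Temperley bijection couples dimers on $\Gra^n_{\smallsink}$ and $\tilde\Gra^n_{\smallsink}$ through uniform spanning trees on $\Gra^n_\black$, and the Beurling estimate shows that the probability the branch from $\sink_n$ to $\tilde\sink_n$ passes near $w$ tends to zero. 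This pins down the limit of $H^\black_n$ at the single point $b_1$ adjacent to $w$, after which harmonic conjugacy along a \emph{fixed}, $n$-independent path from $b_1$ to $b$ (together with the already-established $\gray$-case convergence) transfers it to all $b$. The coupling step is essential and absent from your proposal.
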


As a straight-forward corollary, one gets the existence of all correlation functions of discrete symplectic fermions in the thermodynamic limit along temperleyan domains.

\begin{coro}
Fix $w_1,\ldots,w_k\in\ZWhite$ and $z_1,\ldots,z_k\in\Z^2_\bullet$.
$$
\lim_{n\to\infty}
\Expect_{\Gra_{\tinysink}^n}\big[\,\eta(w_1)\xi(z_1)\cdots\eta(w_k)\xi(z_k)\,\big]
=
\sum_{\sigma\in{\SymmGrp_n}}
\sign\sigma\prod_{i=1}^k
\bigg(
\lim_{n\to\infty}
\Expect_{\Gra_{\tinysink}^n}\big[\,\eta(w_i)\xi(z_{\sigma(i)})\,\big]
\bigg)\,.
$$
\end{coro}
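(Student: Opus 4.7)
The plan is to combine Proposition \ref{WickProb} (Wick's theorem) at each finite level $n$ with the two-point convergence statement of Theorem \ref{the_convergence}, swapping the (finite) sum and product with the limit at the end.

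First, I would fix $n$ large enough that $\Ver^n$ contains every vertex in $\set{w_1,\ldots,w_k,z_1,\ldots,z_k}$. Since $\Ver^n\uparrow\Z^2$, such an $N$ exists and for all $n\geq N$ the graph $\Gra_{\smallsink}^n$ makes sense as a host for all the insertions. For such $n$, Proposition \ref{WickProb} applied to the $2k$ insertion points gives
$$
\Expect_{\Gra_{\tinysink}^n}\big[\,\eta(w_1)\xi(z_1)\cdots\eta(w_k)\xi(z_k)\,\big]
=
\sum_{\sigma\in\SymmGrp_k}\sign\sigma\prod_{i=1}^k
\Expect_{\Gra_{\tinysink}^n}\big[\,\eta(w_i)\xi(z_{\sigma(i)})\,\big].
$$
Note that strictly speaking Proposition \ref{WickProb} is stated for white/black tuples, but Corollary \ref{null-fermions} ensures that any configuration where $z_{\sigma(i)}$ has the wrong colour parity to pair with $w_i$ contributes $0$ in the limit (or $0$ identically if one interprets $\eta,\xi$ with their respective colour supports); thus the sum may be taken over $\SymmGrp_k$ without loss.

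Second, I would take $n\to\infty$ on both sides. The left-hand side is the object whose limit we wish to compute. On the right-hand side, the outer sum is finite ($\vert\SymmGrp_k\vert=k!$) and each inner product has exactly $k$ factors, each of which is a sequence in $n$ converging by Theorem \ref{the_convergence} to the corresponding full-plane two-point function $\tfrac{2}{w_1^i-w_2^i}\big(\Green(z_{\sigma(i)}-w_1^i)-\Green(z_{\sigma(i)}-w_2^i)\big)$, where $w_1^i,w_2^i$ denote the $\Z^2$-nearest neighbours of $w_i$ of the same parity as $z_{\sigma(i)}$. Since finite sums and finite products commute with limits of convergent sequences, one obtains the claimed identity directly.

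There is no serious obstacle: the statement is genuinely a corollary in the strong sense, as both of its two ingredients (Wick's formula valid at finite volume, and convergence of the $2$-point function) have been established earlier. The only mild care is verifying that for large $n$ all prescribed vertices lie in $\Pierced{\Ver}^n$ and away from the boundary point $\sink_n$, so that the correlations are defined and Theorem \ref{the_convergence} applies unimpeded; this follows immediately from $\Ver^n\uparrow\Z^2$ together with the fact that $\sink_n$ may be taken on the boundary of $\Ver^n$, hence escaping to infinity.
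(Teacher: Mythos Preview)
Your proof is correct and follows exactly the same approach as the paper's own proof, which simply states it is a consequence of Wick's theorem (Proposition~\ref{WickProb}) and Theorem~\ref{the_convergence}. Your added remarks about colour parity are unnecessary here since all $z_i\in\Z^2_\bullet$ are black by hypothesis, so Proposition~\ref{WickProb} applies directly with $b_i=z_i$; otherwise the argument is precisely the intended one.
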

\begin{proof}
It is a consequence of Wick's theorem ---Proposition \ref{WickProb}--- Theorem \ref{the_convergence}
\end{proof}

Moreover, note that, if one fixes $w\in\ZWhite$, the asymptotic behaviour of $\lim_{n\to\infty}\Expect_{\Gra_{\tinysink}^n}[\eta(w)\xi(z)]$ as $\vert z\vert\to\infty$ is proportional to $\vert z-w\vert^{-1}+O(\vert z-w\vert^{-2})$,
which justifies the discretisation of symplectic fermions as a suitable one \cite{Kausch2}.

\subsection{Proof of Theorem \ref{the_convergence}.}
Because of the different boundary conditions in each of the graphs $\Gra_\black^n$ and $\Gra_\gray^n$, the proof of Theorem \ref{the_convergence} does not follow the same lines in both cases.
Let us treat first the simpler case: The odd half, in which the boundary conditions are Dirichlet everywhere on the boundary of $\Gra_\gray^n$.

\begin{prop}\label{converg-dirichlet}
Fix two odd black vertices $b^*_1,b^*_2\in\ZGray$ satisfying $\Vert b^*_1-b^*_2\Vert=2$.
Then, for all $b^*\in\ZGray$,
$$
\Green_\gray^{(n)}(b^*_1,b^*)-\Green_\gray^{(n)}(b^*_2,b^*)
\ \ \xrightarrow{\ \ \ n\to\infty\ \ \ }\ \ 
\Green(b^*-b^*_1)-\Green(b^*-b^*_2).
$$
\end{prop}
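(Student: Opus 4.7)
The strategy is to write the desired limit as a discrete Poisson problem on the growing graph $\Gra_\gray^n$ and compare it with its full-plane analogue via the discrete maximum principle. Set
$$
g_n(b^*) \coloneqq \Green_\gray^{(n)}(b^*_1,b^*)-\Green_\gray^{(n)}(b^*_2,b^*),
\qquad
g(b^*) \coloneqq \Green(b^*-b^*_1)-\Green(b^*-b^*_2).
$$
For $n$ large enough that $b^*_1,b^*_2\in\Gray^n$, both $g_n$ and $g$ satisfy the same discrete Poisson equation $\Delta^\gray\,\cdot\, = -\delta_{\cdot,b^*_1}+\delta_{\cdot,b^*_2}$ on $\Gray^n$. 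Hence $h_n\coloneqq g-g_n$ is discrete harmonic on $\Gray^n$, with boundary values $h_n|_{\bdry\Gray^n}=g|_{\bdry\Gray^n}$, since by construction $g_n$ vanishes on $\bdry\Gray^n$.

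The next step is to control $g$ at infinity. Using the asymptotic $\Green(z)=-\frac{1}{2\pi}\log|z|+C+O(|z|^{-2})$ cited from \cite{LawLim}, together with the first-order expansion $\log|z-b|=\log|z|-\re(b/z)+O(|z|^{-2})$ as $|z|\to\infty$, one finds
$$
g(b^*) \;=\; \frac{1}{2\pi}\,\re\!\left(\frac{b^*_1-b^*_2}{b^*}\right) + O(|b^*|^{-2}) \;=\; O(|b^*|^{-1})
$$
as $|b^*|\to\infty$. Moreover, the hypothesis $\Ver^n\uparrow\Z^2$ forces $\min\{|b^*|:b^*\in\bdry\Gray^n\}\to\infty$: for every $R>0$, eventually $\Ball(0;R)\subset\Ver^n$, so $\bdry\Gray^n$ is disjoint from $\Ball(0;R)$. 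Combining these two observations yields $\max_{b^*\in\bdry\Gray^n}|g(b^*)|\to 0$.

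The conclusion then follows from the discrete maximum principle applied to $h_n$ on the finite graph $\Gra_\gray^n$: since $h_n$ is discrete harmonic on $\Gray^n$, its supremum modulus over $\Gray^n$ is bounded by its supremum modulus over $\bdry\Gray^n$, which tends to $0$. Therefore $h_n\to 0$ uniformly on $\Gray^n$, and in particular $g_n(b^*)\to g(b^*)$ for every fixed $b^*\in\ZGray$, which is the claim.

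The main obstacle, and really the only step with any content, is verifying that the finite difference $g(b^*)=\Green(b^*-b^*_1)-\Green(b^*-b^*_2)$ genuinely exhibits dipole-type decay $O(|b^*|^{-1})$ rather than monopole-type $O(1)$. This cancellation is precisely what motivates taking the finite difference in the statement of the proposition (and in Proposition \ref{2pt-Green}), and it is what allows the Dirichlet boundary at $\bdry\Gray^n$ to ``forget itself'' in the thermodynamic limit despite the full-plane Green's function having a logarithmic divergence.
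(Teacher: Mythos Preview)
Your proof is correct and follows essentially the same approach as the paper: define the difference $h_n$ between the finite-domain and full-plane dipole Green's functions, observe it is discrete harmonic on $\Gray^n$ with boundary values equal to the full-plane dipole $g$, use the asymptotics of $\Green$ to get $g(b^*)=O(|b^*|^{-1})$, and conclude by the maximum principle together with $\Ver^n\uparrow\Z^2$. The only cosmetic difference is a sign convention (the paper sets $h_\gray^{(n)}=g_n-g$ rather than $g-g_n$).
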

\begin{proof}
Consider the function $h_\gray^{(n)}$
on $\Gray^n\cup\bdry\Gray^n$ given by $h_\gray^{(n)}\coloneqq f^{(n)}_\gray-F_\gray$ with $f_\gray^{(n)}$ as in Remark \ref{harm_conj-finite} and $F_\gray$ as in Remark \ref{harm_conj-infinite}.
It satisfies $\Delta^\gray h_\gray^{(n)}(b^*)=0$ for $b^*\notin\bdry\Gray^n$ and $h_\gray^{(n)}(b^*)=-\Green(b^*-b^*_1)+\Green(b^*-b^*_2)$ for all $b^*\in\partial\Gray^n$.
The maximum principle dictates, for any $b^*\in\Gray$,
$$
\vert h_\gray^{(n)}(b^*)\vert
\leq
\max_{b^*\in\bdry\Ver_\gray^n}\vert h_\gray^{(n)}(b^*)\vert
=
\max_{b^*\in\bdry\Ver_\gray^n}\vert \Green(b^*-b^*_1)-\Green(b^*-b^*_2)\vert\,,
$$
and the asymptotic behaviour of $\Green$ makes the function $b^*\mapsto \Green(b^*-b^*_1)-\Green(b^*-b^*_2)$ be $O(1/\vert b^*\vert)$ as $\vert b^*\vert \to\infty$.
Then, the property $\Ver^n\uparrow\Z^2$ ensures $\vert h_\gray^{(n)}(b^*)\vert\to 0$ as $n\to\infty$ since the boundary $\partial\Gray^n$ only gets farther from $b^*_1$ and $b^*_2$ as $n$ grows.
\end{proof}

The same statement for the graphs $\Gra_\black^n$ is slightly more intricate to prove as a conse\-quence of the boundary conditions, which are Neumann everywhere except at $\sink_n$, where they are Dirichlet.
The proof is simple if one takes a clever choice of distinguished points.
For a distinguished point $\tilde{\sink}_n$ possibly different from ${\sink}_n$, let $\tilde\Gra_{\smallsink}^n$ and $\tilde\Green_{\black}^{(n)}$ denote the objects described in the previous subsections.

\begin{prop}\label{Neumann-close_sink}
	Fix two even black vertices $b_1,b_2\in\ZBlack$ satisfying $\Vert b_1-b_2\Vert=2$.
	For $b\in\ZBlack$, let $\tilde{\sink}_n\in\Black^n$ be any of the black boundary vertices that is closest to $b$.
	Then,
	$$
	\tilde\Green_\black^{(n)}(b_1,b)-\tilde\Green_\black^{(n)}(b_2,b)
	\ \ \xrightarrow{\ \ \ n\to\infty\ \ \ }\ \ \ 
	\Green(b-b_1)-\Green(b-b_2).
	$$
\end{prop}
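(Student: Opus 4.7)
The plan is to adapt the maximum-principle argument of Proposition \ref{converg-dirichlet} to the mostly-Neumann boundary of $\Gra_\black^n$. Set $F_\black(b) := \Green(b - b_1) - \Green(b - b_2)$ on $\ZBlack$ and define, on $\Black^n$,
\[
h^{(n)}(b) := \bigl(\tilde\Green_\black^{(n)}(b_1, b) - \tilde\Green_\black^{(n)}(b_2, b)\bigr) - F_\black(b),
\]
so that the goal becomes $h^{(n)}(b) \to 0$ as $n \to \infty$.

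I would first record the data driving $h^{(n)}$. At every interior vertex of $\Black^n$ distinct from $\tilde{\sink}_n$ the operator $\Delta^\black$ agrees with its full-plane counterpart, so the point sources at $b_1, b_2$ cancel between the two summands and $h^{(n)}$ is discrete harmonic. At $\tilde{\sink}_n$ the Dirichlet datum reads $h^{(n)}(\tilde{\sink}_n) = -F_\black(\tilde{\sink}_n) = O(|\tilde{\sink}_n|^{-1})$, which vanishes since $\Ver^n \uparrow \Z^2$ forces $|\tilde{\sink}_n| \to \infty$. At every remaining boundary vertex $b' \in \bdry\Black^n$ the full-plane neighbors missing from $\Gra_\black^n$ introduce an inhomogeneity
\[
\Delta^\black h^{(n)}(b') = \sum_{\substack{b'' \in \ZBlack \setminus \Black^n \\ b'' \sim b'}} \bigl(F_\black(b'') - F_\black(b')\bigr),
\]
which is a first-order difference of $F_\black$ and hence $O(|b'|^{-2})$ by the asymptotics of $\Green$.

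I would then split $h^{(n)} = h^{(n)}_{\mathrm{sink}} + h^{(n)}_{\mathrm{bdy}}$ into the part carrying the Dirichlet datum at $\tilde{\sink}_n$ with zero interior source and the part carrying the boundary inhomogeneity with zero Dirichlet datum. A Dirichlet energy argument shows that constants are the only discrete-harmonic functions on $\Gra_\black^n$ vanishing at $\tilde{\sink}_n$, which identifies $h^{(n)}_{\mathrm{sink}}$ with the constant $-F_\black(\tilde{\sink}_n) = o(1)$. Writing $h^{(n)}_{\mathrm{bdy}}$ as a discrete convolution of the boundary source against $\tilde\Green_\black^{(n)}$ and invoking the reversibility symmetry $\tilde\Green_\black^{(n)}(b', b) = \tilde\Green_\black^{(n)}(b, b')$ of the reflecting walk on $\Gra_\black^n$ with absorption at $\tilde{\sink}_n$, one obtains
\[
\bigl|h^{(n)}_{\mathrm{bdy}}(b)\bigr| \le \sum_{b' \in \bdry\Black^n} \tilde\Green_\black^{(n)}(b, b')\cdot O(|b'|^{-2}).
\]
Since $\bdry\Black^n$ carries $O(n)$ vertices at distance $\sim n$ while $\tilde\Green_\black^{(n)}(b, b')$ grows at most polylogarithmically in $n$, this sum tends to $0$.

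The principal obstacle, and where the choice of $\tilde{\sink}_n$ as the boundary vertex closest to $b$ pays off, is the quantitative control of $\tilde\Green_\black^{(n)}(b, b')$ for $b' \in \bdry\Black^n$: the reflecting walk with a single killing site is recurrent up to killing, so a priori its Green's function could accumulate. Placing the absorbing site at the nearest boundary vertex to $b$ ensures that the walk started from $b$ is absorbed in a time comparable to $\dist(b, \bdry\Ver^n)^2$, and a standard comparison with the full-plane random-walk potential then yields the required logarithmic bound, closing the argument.
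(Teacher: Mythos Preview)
Your setup is sound: the computation of the Neumann source $\Delta^\black h^{(n)}(b') = \sum_{b''\sim b',\,b''\notin\Black^n}\bigl(F_\black(b'')-F_\black(b')\bigr)=O(|b'|^{-2})$ is correct, the splitting $h^{(n)}=h^{(n)}_{\mathrm{sink}}+h^{(n)}_{\mathrm{bdy}}$ is legitimate, and the identification of $h^{(n)}_{\mathrm{sink}}$ with the constant $-F_\black(\tilde{\sink}_n)\to 0$ is fine.

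The gap is in the last paragraph. First, the assertion that the reflecting walk from $b$ is absorbed at $\tilde{\sink}_n$ in time comparable to $\dist(b,\bdry\Ver^n)^2$ is false: hitting a single lattice point in two dimensions is slow, and the walk will typically explore a macroscopic portion of the domain before absorption, regardless of how close the starting point is to the sink. The expected absorption time is of order $|\Black^n|\cdot R_{\mathrm{eff}}(b,\tilde{\sink}_n)$, not $\dist(b,\tilde{\sink}_n)^2$. Second, even granting a polylogarithmic bound $\tilde\Green_\black^{(n)}(b,b')=O(\log(\mathrm{diam}\,\Ver^n))$, the sum $\sum_{b'\in\bdry\Black^n}|b'|^{-2}$ is not controlled: the hypothesis is only $\Ver^n\uparrow\Z^2$, with no regularity on the boundary, so $\bdry\Black^n$ need not have $O(n)$ vertices and the vertices need not all sit at distance $\sim n$. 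Your estimate therefore does not close.

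The paper avoids this entirely by exploiting the harmonic conjugate. One passes from $h^{(n)}_\black$ to its conjugate $h^{(n)}_\gray$, which lives on the odd sublattice where the boundary condition is genuinely Dirichlet; Harnack's inequality then controls nearest-neighbour increments of $h^{(n)}_\gray$ by $\max_{\bdry}|F_\gray|$ divided by the distance to the boundary. Converting each step of a shortest $\Black^n$-path from $b$ to $\tilde{\sink}_n$ into an increment of $h^{(n)}_\gray$ via conjugacy, and summing, gives a harmonic-series factor $\sum_{i=1}^{k}\frac{1}{k-i+1}\sim\log k$ (with $k=\dist(b,\bdry\Ver^n)$) multiplied by $\max_{\bdry}|F_\gray|=O(1/\dist(0,\bdry\Ver^n))$. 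This is where the choice of $\tilde{\sink}_n$ as the \emph{closest} boundary point is actually used: it makes the path length equal to the distance to the boundary, so the two competing factors balance to $(\log k)/k\to 0$ without any assumption on the size or shape of $\bdry\Black^n$.
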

\begin{proof}
	Let the vertices $b^*_1,b^*_2\in\ZGray$ and the functions $f_\gray^{(n)}$ and $\tilde{f}_\black^{(n)}$ be as in Remark \ref{harm_conj-finite}, and let the functions $F_\black$ and $F_\gray$ be as in Remark \ref{harm_conj-infinite}.
	Consider the functions $h_\gray^{(n)}$ and $h_\black^{(n)}$ on $\Ver^n_\gray$ and $\Ver^n_\black$ given by $h_\black^{(n)}\coloneqq f_\black^{(n)}-F_\black$ and $h_\gray^{(n)}\coloneqq f_\gray^{(n)}-F_\gray$, respectively.
	Note they are harmonic conjugates of each other except at the edges $\set{b^*_1,b^*_2}$ and $\set{b_1,b_2}$.
	Note, too, that they are harmonic on their domains.
	Thus, by Harnack's estimate, there exists a constant $C>0$ such that
	$$
	\big\vert h_\gray^{(n)}(\tilde b^*_1)-h_\gray^{(n)}(\tilde b^*_2)\big\vert
	\leq
	C\,\frac{\max_{b^*\in\bdry\Ver^n_\gray}\vert h_\gray^{(n)}(b^*)\vert}{\min_{b^*\in\bdry\Ver^n_\gray}\Vert \tilde b^*_1-b^*\Vert}
	=
	C\,\frac{\max_{b^*\in\bdry\Ver^n_\gray}\vert F_\gray(b^*)\vert}{\min_{b^*\in\bdry\Ver^n_\gray}\Vert \tilde b^*_1-b^*\Vert}
	$$
	for any $\tilde b^*_1,\tilde b^*_2\in\Ver^{n}_\gray$ satisfying $\Vert \tilde b^*_1-\tilde b^*_2\Vert=2$.
	Consider any of the shortest $\ZBlack$-nearest-neighbour path $\gamma_n\coloneqq(b_0,b_1,\ldots,b_k)$ in $\Ver^n_\black$ from $b_0=b$ to $b_k=\tilde\sink_n$ --- see Figure \ref{neighbours}. 
	Note $k=\Vert \tilde\sink_n-b\Vert$.
	Let $b^*_{L,i},b^*_{R,i}\in\ZGray$ be the odd black vertices respectively to the left and right when going from $b_{i-1}$ to $b_i$.
	Then,	
	\begin{align*}
	\big\vert h^{(n)}_\black(\tilde\sink_n)\, - & \,h^{(n)}_\black(b)\big\vert =\,
	\Bigg\vert
	\sum_{i=1}^{\Vert \tilde\smallsink_n-b\Vert}
	\Big(h^{(n)}_\black(b_i)-h^{(n)}_\black(b_{i-1})\Big)
	\Bigg\vert
	=
	\Bigg\vert
	\sum_{i=1}^{\Vert \tilde\smallsink_n-b\Vert}
	\Big(h^{(n)}_\gray(b^*_{R,i})-h^{(n)}_\gray(b^*_{L,i})\Big)
	\Bigg\vert
	\\
	\leq \,&
	\sum_{i=1}^{\Vert \tilde\smallsink_n-b\Vert}
	\Big\vert h^{(n)}_\gray(b^*_{R,i})-h^{(n)}_\gray(b^*_{L,i})\Big\vert
	\leq
	C\sum_{i=1}^{\Vert \tilde\smallsink_n-b\Vert}
	\frac{\max_{b^*\in\bdry\Ver^n_\gray}\vert F_\gray(b^*)\vert}{\Vert \tilde\sink_n-b\Vert-i+1}
	\ \ \xrightarrow{\ \ \ n\to\infty\ \ \ }\ \ \ 0
	\end{align*}
since, for large $n$, the sum of inverses grows like $\log(\Vert \tilde\sink_n\Vert)$ as $n\to\infty$ but $F_\gray$ evaluated on the boundary of $\Ver^n$ goes to $0$ as $1/\Vert \tilde\sink_n\Vert$ by the asymptotic behaviour of $\Green$.
Moreover, $h^{(n)}_\black(\tilde\sink_n)\longrightarrow 0$ as $n\to\infty$ again by the asymptotic behaviour of $\Green$, which completes the proof.
\end{proof}

\begin{figure}[t!]
	\centering
	\begin{overpic}[scale=0.75, tics=10]{./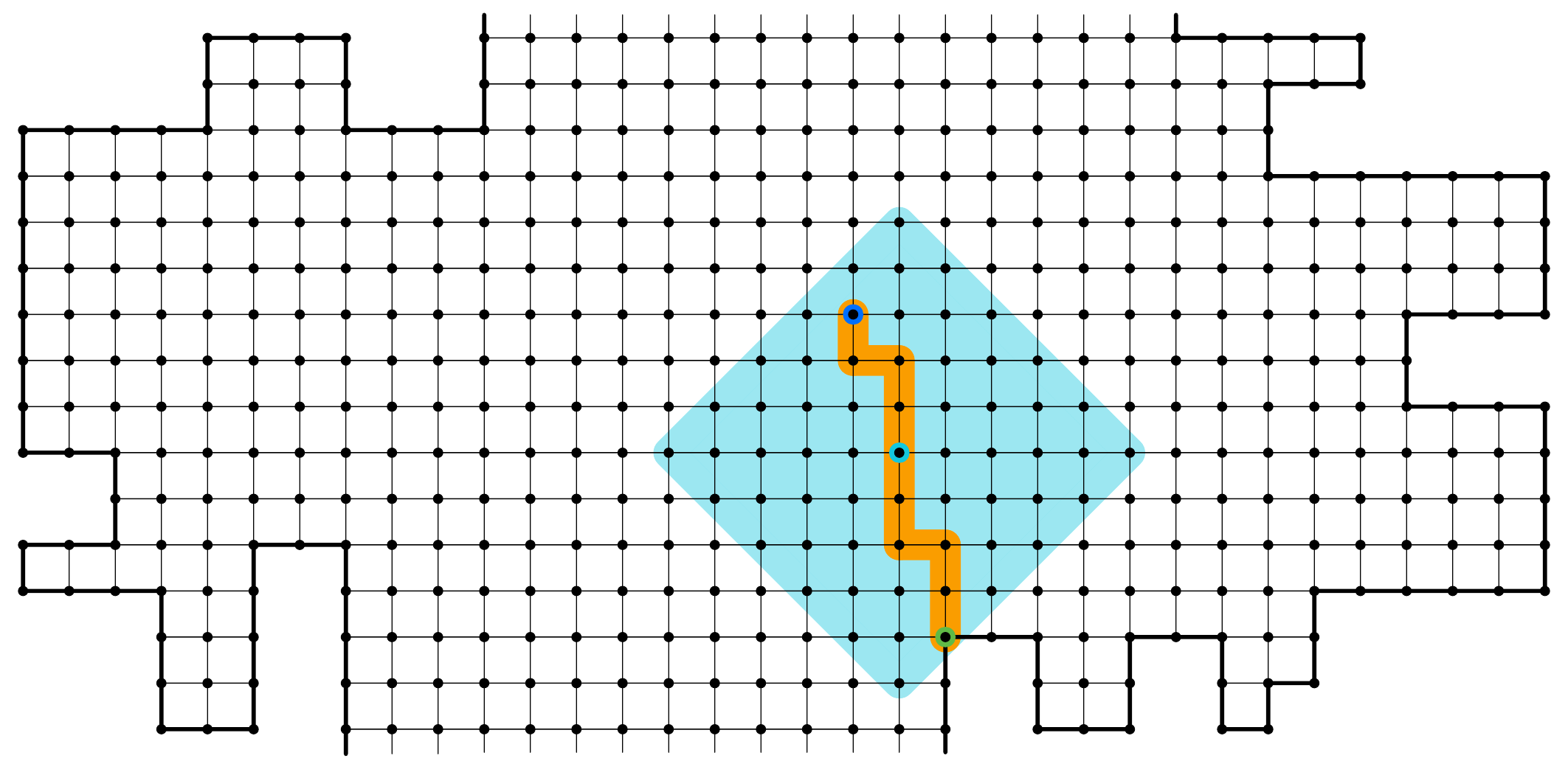}
	\end{overpic}
	\caption{A path $\gamma_n$ (orange) from $b\in\Ver_\black^n$ (dark blue) to the closest \\ boundary point $\tilde\protectsink_n\in\Ver_\black^n$ (green), and a point $b_i\in\Black^n$ (light blue) \\ and the largest ball centered thereat contained in $\Black^n$.}
	\label{distinguished}
\end{figure}

So as to prove the same statement for arbitrary choices of distinguished boundary points $\sink_n$, one needs to first prove an auxiliary result about dimers on temperleyan domains.
In a dimer configuration $\omega\in\Dim{\Gra}$, say that an edge $e$ is \emph{open} if $e\in\omega$.

\begin{coro}
	Fix two $\Z^2$-nearest neighbours $w\in\ZWhite$ and $b\in\ZBlack$.
	Then
	$$
	\Prob_{\tilde\Gra_{\tinysink}^n}\big[\set{w,b}\ \textnormal{open}\,\big]
	\ \ \xrightarrow{\ \ \ n\to\infty\ \ \ }\ \ \ 
	\frac{1}{4}\,.\mspace{80mu}
	$$
\end{coro}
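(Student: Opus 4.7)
The strategy is to rewrite the edge probability as (a multiple of) the $2$-point function $\Expect[\eta(w)\xi(b)]$, take the thermodynamic limit using the results already proved in this section, and then evaluate the limit explicitly thanks to rotational symmetry of the full-plane Green's function on $\ZBlack$.

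First, I would invoke Corollary \ref{probabilistic} with a single edge $e=\set{w,b}$. Combined with the formula $\Kast(b,w)=\cconj{b-w}$ for the standard Kasteleyn matrix on $\Z^2$ (Remark \ref{square_Kast}), it gives
$$
\Prob_{\tilde\Gra^n_{\smallsink}}\big[\set{w,b}\text{ open}\big]
=
(b-w)\,\Expect_{\tilde\Gra^n_{\smallsink}}\big[\,\eta(w)\xi(b)\,\big]\,.
$$
Next, I would apply Proposition \ref{2pt-Green} with $z=b\in\ZBlack$. Denoting by $b_2\in\Black^n$ the even black $\Z^2$-neighbour of $w$ opposite to $b$ (so that $b_2=2w-b$ and hence $b-b_2=2(b-w)$), this gives
$$
\tfrac{1}{2}\,\Expect_{\tilde\Gra^n_{\smallsink}}\big[\,\eta(w)\xi(b)\,\big]
=
\frac{1}{b-b_2}\Big(\tilde\Green^{(n)}_\black(b,b)-\tilde\Green^{(n)}_\black(b_2,b)\Big)
=
\frac{1}{2(b-w)}\Big(\tilde\Green^{(n)}_\black(b,b)-\tilde\Green^{(n)}_\black(b_2,b)\Big).
$$

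Now the key point: since $\tilde\sink_n$ is chosen to be a black boundary vertex closest to $b$, Proposition \ref{Neumann-close_sink} applies directly with the pair $(b_1,b_2)=(b,2w-b)$, yielding
$$
\tilde\Green^{(n)}_\black(b,b)-\tilde\Green^{(n)}_\black(b_2,b)
\ \ \xrightarrow{\ n\to\infty\ }\ \
\Green(b-b)-\Green(b-b_2)
=
-\Green(b-b_2)\,.
$$
The vector $b-b_2=2(b-w)$ has modulus $2$, so $b-b_2$ is one of the four $\Gra_\black$-nearest neighbours of $0$ in $\ZBlack$. By the $4$-fold rotational symmetry of $\Green$ on $\ZBlack$ and the identity $\Delta\Green(0)=-1$ together with $\Green(0)=0$, each of those four nearest neighbours takes the common value $\Green(b-b_2)=-\tfrac{1}{4}$.

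Putting everything together,
$$
\Expect_{\tilde\Gra^n_{\smallsink}}\big[\,\eta(w)\xi(b)\,\big]
\ \ \xrightarrow{\ n\to\infty\ }\ \
\frac{1}{b-w}\cdot\frac{-\Green(b-b_2)}{2}
=
\frac{1}{4(b-w)}\,,
$$
and therefore $\Prob_{\tilde\Gra^n_{\smallsink}}[\set{w,b}\text{ open}]\to\tfrac{1}{4}$, as claimed. There is no real obstacle here: the statement follows by assembling the already established ingredients (the Kasteleyn-matrix entry on $\Z^2$, Corollary \ref{probabilistic}, Proposition \ref{2pt-Green}, Proposition \ref{Neumann-close_sink}), and the only nontrivial numerical input is the elementary identity $\Green(\text{n.n.\ of }0)=-\tfrac{1}{4}$ on $\ZBlack$, which comes from the defining relation $\Delta\Green=-\delta_0$ and rotational symmetry.
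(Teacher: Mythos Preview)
Your argument is correct and is exactly the route the paper takes: combine Corollary~\ref{probabilistic} with Proposition~\ref{2pt-Green} and Proposition~\ref{Neumann-close_sink}, and read off the limit from the nearest-neighbour value $\Green(\pm 2)=\Green(\pm 2\ii)=-\tfrac14$ of the full-plane Green's function on $\ZBlack$. One cosmetic slip: in your ``putting everything together'' display the factor should be $\frac{1}{b-w}\cdot\big(-\Green(b-b_2)\big)$, not $\frac{1}{b-w}\cdot\frac{-\Green(b-b_2)}{2}$ (the two $\tfrac12$'s from Proposition~\ref{2pt-Green} and from $b-b_2=2(b-w)$ cancel when you solve for $\Expect[\eta(w)\xi(b)]$); your stated final value $\tfrac{1}{4(b-w)}$ is nonetheless the right one.
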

\begin{proof}
	It follows from Corollary \ref{probabilistic} and the values of the full-plane Green's function in a neighbourhood of the origin: $\Green(0)=0$ and $\Green(z)=1/4$ for all $z\in\ZBlack$ with $\Vert z\Vert =2$.
\end{proof}

Note the following lemma differs from the previous corollary by just one tilde, i.e. in the following result one allows for any choice of distinguished points $\sink_n\in\Black^n$.

\begin{lemma}\label{coupling}
	Fix two $\Z^2$-nearest neighbours $w\in\ZWhite$ and $b\in\ZBlack$.
	Then
	$$
	\Prob_{\Gra_{\tinysink}^n}\big[\set{w,b}\ \textnormal{open}\,\big]
	\ \ \xrightarrow{\ \ \ n\to\infty\ \ \ }\ \ \ 
	\frac{1}{4}\,.\mspace{80mu}
	$$
\end{lemma}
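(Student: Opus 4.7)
The strategy is to reduce to the previous corollary, which handles the specific distinguished point $\tilde\sink_n$ closest to $b$ as in Proposition \ref{Neumann-close_sink}, by showing that the edge probability is asymptotically insensitive to the choice of $\sink_n$.
By Corollary \ref{probabilistic} and Proposition \ref{2pt-Green}, for $n$ large enough that $w$ is not adjacent to $\sink_n$ one has
\begin{equation*}
\Prob_{\Gra_\smallsink^n}\big[\set{w,b}\textnormal{ open}\big]
=\frac{2\,\cconj{\Kast(b,w)}}{b_1-b_2}\Big(\Green_\black^{(n)}(b_1,b)-\Green_\black^{(n)}(b_2,b)\Big),
\end{equation*}
where $b_1,b_2\in\ZBlack$ are the two even nearest neighbours of $w$, and an analogous formula holds on $\tilde\Gra_\smallsink^n$ in terms of $\tilde\Green_\black^{(n)}$. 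It therefore suffices to prove that
\begin{equation*}
\Psi_n(b)\coloneqq\big[\Green_\black^{(n)}(b_1,b)-\Green_\black^{(n)}(b_2,b)\big]-\big[\tilde\Green_\black^{(n)}(b_1,b)-\tilde\Green_\black^{(n)}(b_2,b)\big]
\end{equation*}
vanishes as $n\to\infty$.

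The key observation is that $b\mapsto\Psi_n(b)$ is a \emph{constant} function of $b$.
Indeed, $\Psi_n$ is discretely harmonic on $\Black^n\setminus\set{\sink_n,\tilde\sink_n}$ with Neumann conditions on the remainder of $\bdry\Black^n$, so it is determined up to an additive constant by its two ``Dirichlet-type'' values $\Psi_n(\sink_n)$ and $\Psi_n(\tilde\sink_n)$. These two values coincide by a reciprocity argument: the function
\begin{equation*}
\phi_n(b)\coloneqq\Green_\black^{(n)}(\tilde\sink_n,b)+\tilde\Green_\black^{(n)}(\sink_n,b)
\end{equation*}
is discretely harmonic on \emph{all} of $\Black^n$ with Neumann boundary (the source of each summand is exactly balanced by the sink of the other, both at $\sink_n$ and at $\tilde\sink_n$), hence constant and equal to the effective resistance $R^{(n)}(\sink_n,\tilde\sink_n)$; the equality $\Psi_n(\sink_n)=\Psi_n(\tilde\sink_n)$ then reduces via symmetry of the discrete Green's function to $\phi_n(b_1)=\phi_n(b_2)$, which holds automatically. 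By the same symmetry we rewrite the constant as
\begin{equation*}
\Psi_n\equiv\Green_\black^{(n)}(\tilde\sink_n,b_1)-\Green_\black^{(n)}(\tilde\sink_n,b_2).
\end{equation*}

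It remains to estimate this discrete gradient. The function $b\mapsto\Green_\black^{(n)}(\tilde\sink_n,b)$ is discretely harmonic on $\Black^n\setminus\set{\sink_n,\tilde\sink_n}$, and since both $\sink_n$ and $\tilde\sink_n$ lie on $\bdry\Ver^n$, it is harmonic on a ball around $b_1$ of radius $r_n\to\infty$. Its supremum is attained at the source $\tilde\sink_n$ and equals $R^{(n)}(\sink_n,\tilde\sink_n)=O(\log n)$ by standard estimates on $\bZ^2$. A discrete gradient estimate for harmonic functions then gives $|\Psi_n|=O(\log n/r_n)\to 0$, and combining with the previous corollary yields $\Prob_{\Gra_\smallsink^n}[\set{w,b}\textnormal{ open}]\to 1/4$. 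The main difficulty is the reciprocity identity that makes $\phi_n$ globally harmonic: the cancellation at $\sink_n$ and $\tilde\sink_n$ is subtle because these points support Dirichlet rather than Laplacian equations in the individual Green's functions, so one must verify the cancellation by an explicit current-conservation computation. The effective-resistance bound, while folklore on $\bZ^2$, also requires a mild transfer argument to the Neumann-type subdomain $\Gra_\black^n$.
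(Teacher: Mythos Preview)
Your approach is analytic, whereas the paper's is probabilistic (Temperley bijection + Wilson's algorithm + Beurling estimate), so the two routes are genuinely different. The reduction to the constant $\Psi_n$ is sound, and indeed $\Psi_n$ is constant --- though the cleanest way to see this is not via the equality $\Psi_n(\sink_n)=\Psi_n(\tilde\sink_n)$ but simply by checking that $\Delta^\black\Psi_n$ vanishes at \emph{both} $\sink_n$ and $\tilde\sink_n$ (the unit sinks in $\Green_\black^{(n)}(b_1,\cdot)$ and $\Green_\black^{(n)}(b_2,\cdot)$ cancel, and likewise for $\tilde\Green_\black^{(n)}$), so $\Psi_n$ is globally harmonic on the finite connected graph $\Gra_\black^n$.

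The genuine gap is the last step. The hypothesis $\Ver^n\uparrow\Z^2$ only asserts that every fixed ball is eventually contained in $\Ver^n$; it imposes no control on the shape of $\Ver^n$ outside that ball or on the location of $\sink_n$. In particular, the effective resistance $R^{(n)}(\sink_n,\tilde\sink_n)$ on $\Gra_\black^n$ need \emph{not} be $O(\log n)$: take $\Ver^n=\Ball(0;n)$ with a thin arm of length $e^n$ attached, and put $\sink_n$ at the tip of the arm. Rayleigh monotonicity goes the wrong way (restricting to a subgraph only increases resistance), so the ``mild transfer'' you allude to does not exist; in this example $R^{(n)}\asymp e^n$. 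The gradient bound $|\Psi_n|\lesssim \sup|g|/r_n$ then gives something of order $e^n/n$, which is useless. Even the sharper Harnack-type bound $|\Psi_n|\lesssim g(b_1)/r_n$ for the nonnegative harmonic $g=\Green_\black^{(n)}(\tilde\sink_n,\cdot)$ fails, since in the same example $g(b_1)\asymp e^n$ as well (almost all the potential drop occurs along the arm).

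This is exactly the obstruction the paper's proof is designed to avoid. By passing through the Temperley bijection, the question becomes whether the UST branch between $\sink_n$ and $\tilde\sink_n$ uses the edge $\{b,b'\}$, and via the dual tree and Wilson's algorithm this is controlled by a Beurling-type hitting estimate that depends only on the distance from $b$ to $\bdry\Ver^n_\gray$ --- a quantity that \emph{does} go to infinity under $\Ver^n\uparrow\Z^2$, regardless of arms or the position of $\sink_n$. Your analytic route can likely be repaired (for instance by exploiting the harmonic conjugate $f_\gray^{(n)}$ as in the proof of Proposition~\ref{Neumann-close_sink}, rather than bounding $g$ itself), but not via the effective-resistance argument as written.
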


\begin{figure}[t!]
	\centering
	\begin{overpic}[scale=0.75, tics=10]{./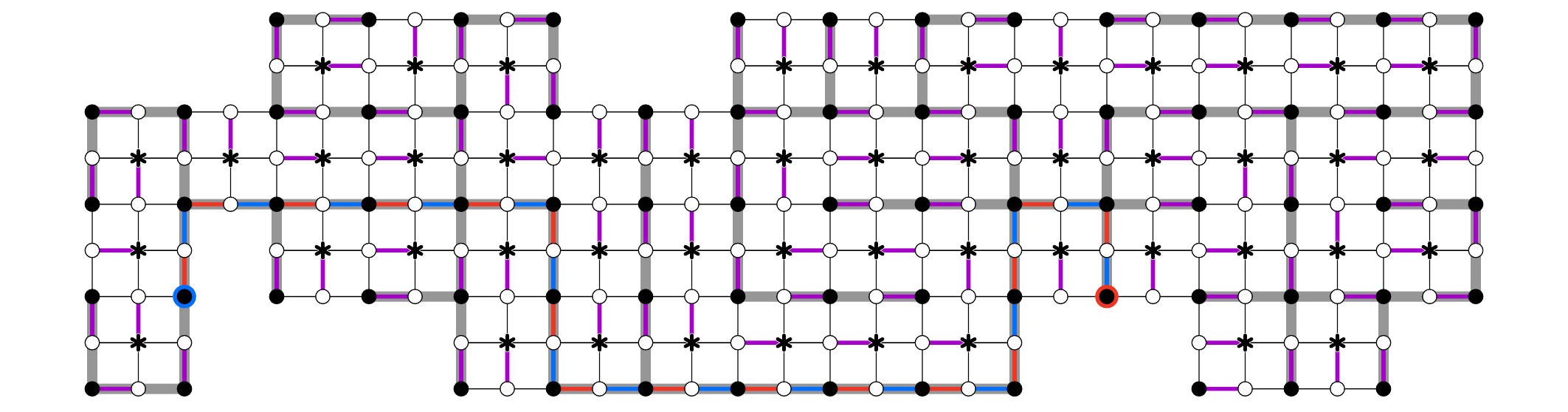}
		\put(13.5,5){$\tilde\sink_n$}
		\put(68,4.2){$\sink_n$}
	\end{overpic}
	\caption{A tree on $\Gra_\black^n$ and the dimer configurations on $\Gra_{\protectsmallsink}^n$ and $\tilde\Gra_{\protectsmallsink}^n$ associated to it through the Temperley bijection.
		In purple the dimers that are open both in $\Gra_{\protectsmallsink}^n$ and $\tilde\Gra_{\protectsmallsink}^n$.}
	\label{fig-coupling}
\end{figure}

\begin{proof}
The Temperley bijection provides couplings between uniform spanning trees on $\Gra_\black^n$, and uniform dimers on $\Gra_{\smallsink}^n$ and $\tilde\Gra_{\smallsink}^n$ as shown in Figure \ref{fig-coupling}.
Let $\tau_n$ and $\tilde\tau_n$ be the bijections between the set $\mathcal{T}_\black^n$ of spanning trees on $\Gra_\black^n$, and $\Dim{\Gra_{\smallsink}^n}$ and $\Dim{\tilde\Gra_{\smallsink}^n}$ respectively.
In turn, those couplings provide a coupling between $\Dim{\Gra_{\smallsink}^n}$ and $\Dim{\tilde\Gra_{\smallsink}^n}$ as follows:
Let $b,b'\in\ZBlack$ and $b^*_1,b^*_2\in\ZGray$ be the $\Z^2$-nearest neighbours of $w$.
Consider the event $A_n\subset\mathcal{T}_\black^n$ that the branch from $\sink_n$ to $\tilde\sink_n$ contains the edge $\{b,b'\}$.
For a tree $T\notin A_n$, the dimer $\set{w,b}$ has the same state simultaneously ---open or closed--- in $\tau_n(T)$ and $\tilde\tau_n(T)$ --- see Figure \ref{fig-coupling}.
Then, if one proves the probability of $A_n$ to converge to $0$ the proof is complete.
This can be accomplished using Wilson's algorithm to generate uniform spanning trees and the Beurling estimate:
Consider the coupling (bijection) $\phi$ between $\mathcal{T}_\black^n$ and the set $\mathcal{T}_\gray^n$ of spanning trees on $\Gra_\gray^n$ wired at the boundary $\bdry\Ver_\gray^n$ --- see Figure \ref{Beurling}. Then, $\phi(A_n)$ is contained in the event $B_n\subset\mathcal{T}_\gray^n$ that the branch of $b^*_1$ and the branch of $b^*_2$ do not overlap --- see Figure \ref{Beurling}.
Using Wilson's algorithm, the probability of $B_n$ can be bounded from above by the probability of the event $C_n$ that a random walk on $\Gra_\gray^n$ started at $b^*_2$ hits the boundary $\bdry\Ver_\gray^n$ before hitting a branch connecting $b^*_1$ to $\bdry\Ver_\gray^n$.
The Beurling estimate asserts there exists a constant $C>0$ such that
$$
\Prob[C_n]\leq C\,\bigg(\frac{1}{\min_{g\in\bdry\Ver^n_\gray}\Vert {g}_2-g\Vert}\bigg)^{\frac{1}{2}}\,,
$$
which converges to $0$ as $n\to\infty$ by $\Ver^n\uparrow\Z^2$.
Tracing back the couplings, this implies $\Prob[A_n]\longrightarrow 0$ as $n\to\infty$.
\end{proof}

\begin{figure}[h!]
	\centering
	\begin{overpic}[scale=0.75, tics=10]{./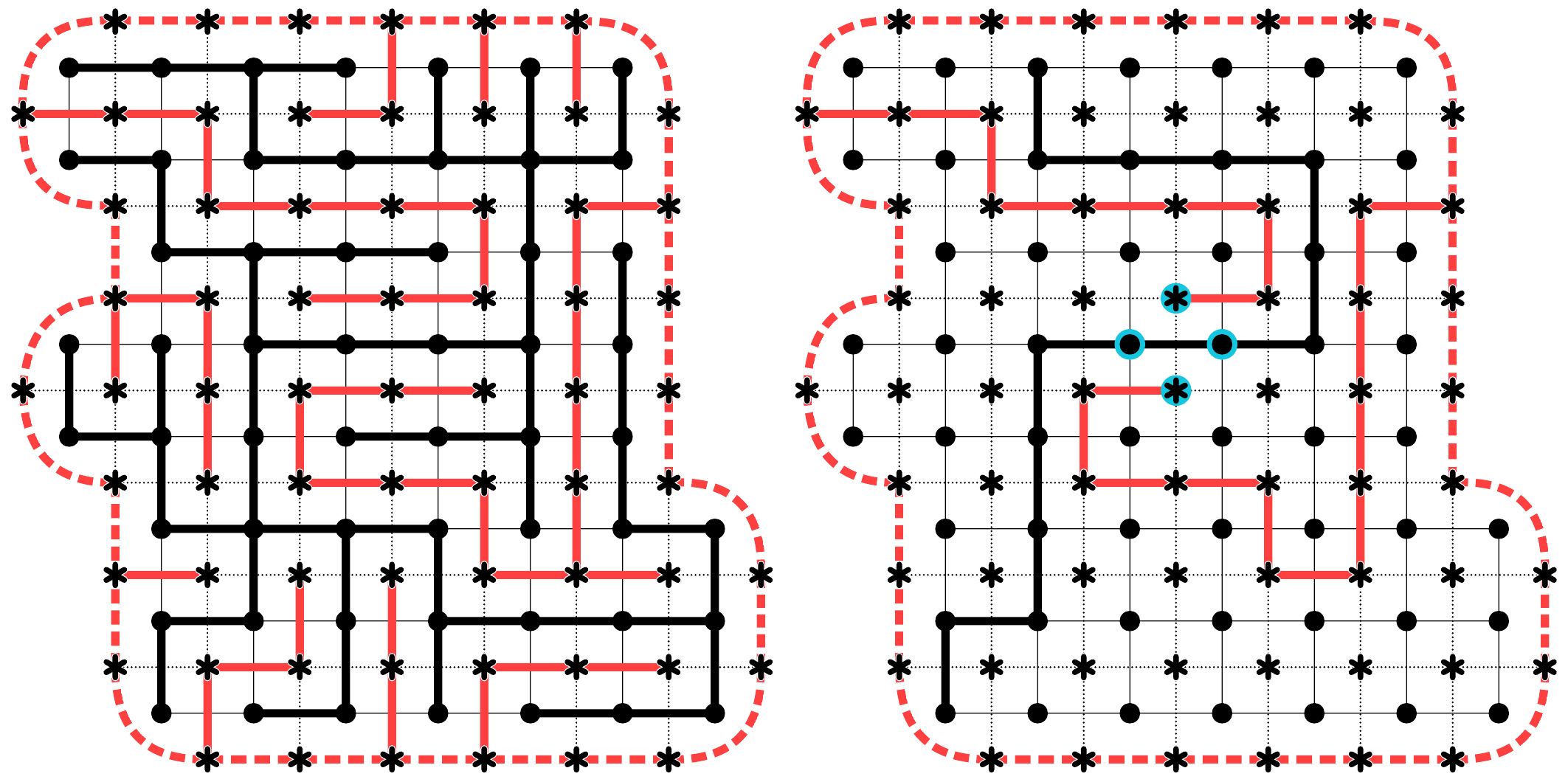}
	\end{overpic}
	\caption{Left: The coupling between $\mathcal{T}_{\black}^n$ and $\mathcal{T}_{\gray}^n$.
		Right: An instance of $\phi(A_n)\subset B_n$. 
		\\
		The vertices $b,b'\in\Black^n$ and $b^*_1,b^*_2\in\Ver^n_\gray$ highlighted in blue.}
	\label{Beurling}
\end{figure}

Note, again, that the following proposition differs from Proposition \ref{Neumann-close_sink} by a few tildes.

\begin{prop}
	Fix two even black vertices $b_1,b_2\in\ZBlack$ satisfying $\Vert b_1-b_2\Vert=2$.
	Then, for any $b\in\ZBlack$,
	$$
	\Green_\black^{(n)}(b_1,b)-\Green_\black^{(n)}(b_2,b)
	\ \ \xrightarrow{\ \ \ n\to\infty\ \ \ }\ \ \ 
	\Green(b-b_1)-\Green(b-b_2).
	$$
\end{prop}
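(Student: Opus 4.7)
Fix $b \in \ZBlack$ and let $\tilde\sink_n \in \Black^n$ be any black boundary vertex closest to $b$. Set $F_n(z) \coloneqq \Green_\black^{(n)}(b_1,z) - \Green_\black^{(n)}(b_2,z)$ and define $\tilde F_n$ analogously, using $\tilde\sink_n$ in place of $\sink_n$. The plan is to reduce the statement to Proposition \ref{Neumann-close_sink} by a calibration argument. Both $F_n$ and $\tilde F_n$ solve $\Delta^\black F = \delta_{b_2} - \delta_{b_1}$ on $\Black^n$ away from their respective sinks; since $\sum_{z \in \Black^n} \Delta^\black F(z) = 0$ for any $F$ (by pairwise cancellation along edges) and $\sum_z(\delta_{b_2} - \delta_{b_1})(z) = 0$, the equation automatically extends to the sinks themselves, so $F_n$ and $\tilde F_n$ are both solutions of the \emph{same} discrete Poisson equation on all of $\Black^n$. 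Since $\ker \Delta^\black$ on the connected graph $\Gra_\black^n$ is spanned by constants, $F_n - \tilde F_n$ is constant, and $F_n(\sink_n) = 0$ identifies this constant as $-\tilde F_n(\sink_n)$, giving
\[
F_n(b) = \tilde F_n(b) - \tilde F_n(\sink_n).
\]
Proposition \ref{Neumann-close_sink} yields $\tilde F_n(b) \to \Green(b-b_1) - \Green(b-b_2)$, which reduces the task to showing $\tilde F_n(\sink_n) \to 0$.

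Decompose $\tilde F_n(\sink_n) = F_\black(\sink_n) + h_\black^{(n)}(\sink_n)$ where $F_\black(z) \coloneqq \Green(z-b_1) - \Green(z-b_2)$ and $h_\black^{(n)} \coloneqq \tilde F_n - F_\black$. Since $\sink_n \in \bdry\Ver^n$ and $\Ver^n \uparrow \Z^2$ force $|\sink_n| \to \infty$, the asymptotics of $\Green$ give $F_\black(\sink_n) = O(1/|\sink_n|) \to 0$. For $h_\black^{(n)}(\sink_n)$ I would repeat the harmonic-conjugate argument from the proof of Proposition \ref{Neumann-close_sink}: $h_\black^{(n)}$ is discrete harmonic in the bulk of $\Ver^n_\black$, it already satisfies $h_\black^{(n)}(\tilde\sink_n) = -F_\black(\tilde\sink_n) \to 0$, and its odd-lattice harmonic conjugate $h_\gray^{(n)} = f_\gray^{(n)} - F_\gray$ (see Remark \ref{harm_conj-finite}) has Dirichlet data on $\bdry\Gray^n$ bounded by $\max_{b^* \in \bdry\Gray^n}|F_\gray(b^*)| = o(1)$ thanks to Proposition \ref{converg-dirichlet}. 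Choosing a path $\tilde\sink_n = c_0, c_1, \ldots, c_m = \sink_n$ in $\Ver^n_\black$, telescoping the increments $h_\black^{(n)}(c_i) - h_\black^{(n)}(c_{i-1})$, converting each one into a difference of $h_\gray^{(n)}$ on the dual odd edge via the harmonic conjugacy, and Harnack-bounding each such difference by $C \cdot \max_{\bdry\Gray^n}|F_\gray| / \dist(c_i, \bdry\Gray^n)$ as in the proof of Proposition \ref{Neumann-close_sink} should give $|h_\black^{(n)}(\sink_n) - h_\black^{(n)}(\tilde\sink_n)| \to 0$.

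The main difficulty is choosing the path so that this telescoping sum stays controllable. In Proposition \ref{Neumann-close_sink} the target $\tilde\sink_n$ is the nearest boundary vertex to $b$, so a short straight path suffices; here, however, both endpoints lie on the possibly irregular $\bdry\Ver^n$ and may be separated by a large distance. I would route the path through the bulk of $\Ver^n$, whose inradius diverges as $\Ver^n \uparrow \Z^2$ so that $\dist(c_i, \bdry\Gray^n)$ is large for most intermediate $c_i$, and handle the short end-segments adjacent to $\bdry\Ver^n$ via a Beurling-type escape estimate analogous to the one used in the proof of Lemma \ref{coupling}. Once $h_\black^{(n)}(\sink_n) \to 0$ is established, combining with the reduction in the first paragraph yields the claimed limit.
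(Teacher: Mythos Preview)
Your calibration step is correct and clean: $F_n-\tilde F_n$ is indeed a global constant equal to $-\tilde F_n(\sink_n)$, so everything reduces to showing $\tilde F_n(\sink_n)\to 0$. The gap is in how you propose to prove this. Telescoping $h_\black^{(n)}$ along a path from $\tilde\sink_n$ to $\sink_n$ and bounding each increment by $C\,\max_{\partial\Gray^n}|F_\gray|/\dist(c_i,\partial\Gray^n)$ does not give $o(1)$ in general. Take $\Ver^n\approx[-n,n^2]\times[-n,n]$ with $\sink_n$ at the far right end: the inradius is $\sim n$, so $\max_{\partial}|F_\gray|\sim 1/n$, but any path from $\tilde\sink_n$ to $\sink_n$ has length $\sim n^2$ with most steps at distance $\sim n$ from the boundary, and the sum is $\sim n^2\cdot(1/n)\cdot(1/n)=O(1)$. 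Your ``Beurling-type'' suggestion for the end-segments does not address this bulk contribution, and in any case it is not clear what Beurling would bound here, since $h_\black^{(n)}$ lives on the Neumann side and has no Dirichlet representation on $\partial\Black^n$.

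The paper avoids boundary-to-boundary paths entirely. It uses Lemma~\ref{coupling} (whose proof is where the Temperley bijection and Beurling estimate actually enter) together with Corollary~\ref{probabilistic} and Proposition~\ref{2pt-Green} to get convergence of $F_n$ at the single \emph{bulk} point $b_1$: since $\Prob_{\Gra^n_{\tinysink}}[\{w,b_1\}\text{ open}]\to 1/4$ for arbitrary $\sink_n$, one obtains $F_n(b_1)\to\Green(0)-\Green(b_1-b_2)=1/4=F_\black(b_1)$ directly. Then $F_n(b)-F_n(b_1)$ is written via harmonic conjugacy as a finite sum of $f_\gray^{(n)}$-increments along a \emph{fixed} $\ZBlack$-path from $b$ to $b_1$; each of these converges by Proposition~\ref{converg-dirichlet}, and the limit is $F_\black(b)-F_\black(b_1)$ by Remark~\ref{harm_conj-infinite}. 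Your calibration can be completed along the same lines: since the constant $-\tilde F_n(\sink_n)$ also equals $F_n(b_1)-\tilde F_n(b_1)$, and both terms tend to $F_\black(b_1)$ (the first by Lemma~\ref{coupling} as above, the second by the argument of Proposition~\ref{Neumann-close_sink} applied at $b_1$), you get $\tilde F_n(\sink_n)\to 0$ without ever walking between the two sinks.
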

\begin{proof}
The arguments have the same flavour as in the proof of Proposition \ref{Neumann-close_sink}, but now taking a fixed path from $b^*$ to $b^*_1$ and using Corollary \ref{probabilistic}, Lemma \ref{coupling} and the values of the full-plane Green's function $\Green(0)=0$ and $\Green(z)=-1/4$ for $z=\pm1,\pm\ii$.
\end{proof}

\newpage
\section{Local fields and current modes}
\label{sec: fields-currents}

\renewcommand{\black}{\bullet}
\renewcommand{\white}{\circ}

In what follows, the attention is brought onto discrete symplectic fermions on arbitrary domains of $\Z^2$ with no holes --- see Section \ref{sec: infinite}.
For the rest of the text, the Kasteleyn matrix on any such domain is fixed to be $\partial$ --- see Remark \ref{square_Kast}.
\epar

In particular, in this section, the construction of the space of local fields of symplectic fermions on the square lattice is presented.
That one is the space that later on will be proven to carry a representation of the Virasoro algebra.
\epar

In the continuum \cite{Kausch2}, the fermions $\eta$ and $\xi$ can be put in an equal footing by defining the two-component fermion field $\chi=(\chi^+,\chi^-)$, where $\chi^+(z)=\dee\xi(z)$ and $\chi^-(z)=\eta(z)$.
The algebraic content of symplectic fermions in the Vertex Operator Algebra (VOA) sense is encoded in the so called \emph{current modes} of $\chi$
i.e. the coefficients of the formal series $\chi^\alpha(z) = \sum_{k\in\Z}\,\chi^\alpha_k\,z^{-k-1}$.
Although no such construction is intended to be translated to the discrete, there are two tools of discrete complex analysis on $\Z^2$ {introduced in \cite{HKV}} that allow one to define such current modes:
a bilinear notion of discrete integration,
and a family of functions that mimic the properties of the complex Laurent monomials $\C\ni z\mapsto z^n$ for $n\in\Z$.
Let us review them here.

For the rest of the section, let $\Vert\cdot\Vert$ denote the Manhattan norm, i.e. $\Vert z\Vert=\vert\re z\vert+\vert\im z\vert$,
and let $\Ball(x;r)$ denote the ball of radius $r\geq0$ centered at $x\in\Z^2$ with respect to the Manhattan distance.
\vspace{-10pt}

\subsection{Preliminaries: Discrete integration and discrete monomials.}
Discrete integration is performed along dual contours.
A \emph{dual contour} $\gamma=(p_0,\ldots,p_n)$ with $p_0=p_n$ is a sequence of consecutively nearest \textit{plaquette centers} that does not intersect itself,
i.e. $p_i\in(\Z^2)^*=(\Z+\frac{1}{2})^2\subset\C$ for $i=0,\ldots, n$ and $|p_i-p_{i-1}|=1$ for $i=1,\ldots,n$ and such that $p_1,\ldots,p_n$ are all distinct.
Then, $\interior_\black\gamma\subset\ZBlack$ and $\interior_\white\gamma\subset\ZWhite$ denote, respectively, the set of black and white vertices enclosed by $\gamma$ ---see Figure \ref{contour}---, and $\interior{\gamma}\coloneqq\interior_\black{\gamma}\cup\interior_\white{\gamma}$.
A dual contour is said to be \emph{positively oriented} if it turns counterclockwise around its interior.
\epar

\begin{figure}[h!]
	\centering
	\begin{overpic}[scale=0.75, tics=10]{./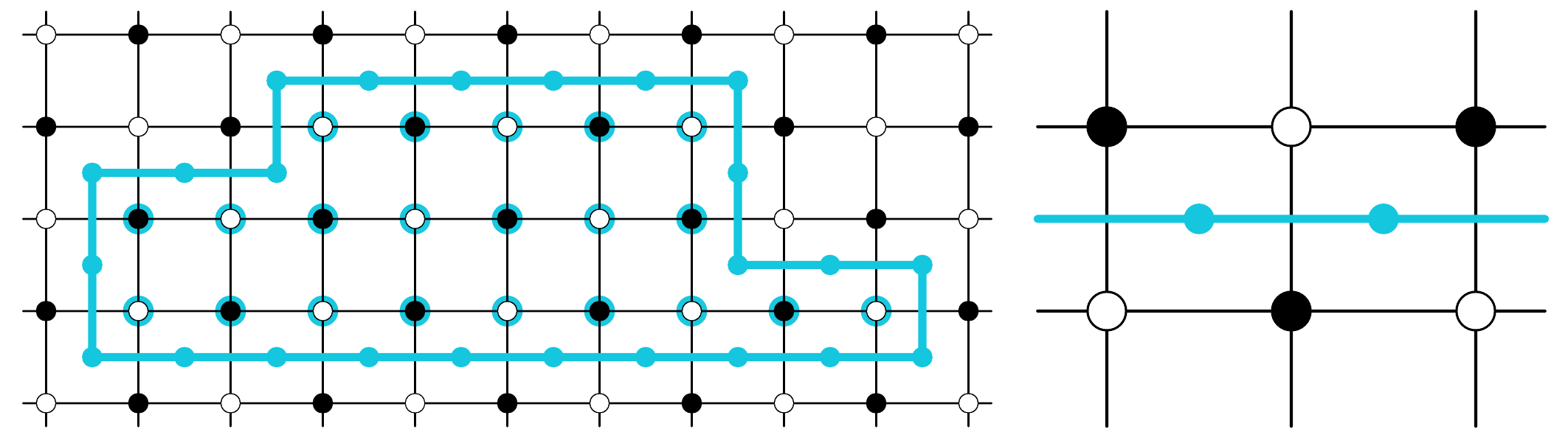}
		\put(83.5,21.2){$w_k$}
		\put(79.3,4.8){$b_k$}
		\put(89,11){$p_k$}
		\put(73.5,11){$p_{k-1}$}
	\end{overpic}
	\caption{A dual contour and the vertices in its interior highlighted in blue.}
	\label{contour}
\end{figure}
\newpage

For any vector space $\textnormal{V}$ over $\C$ and any pair of functions $f:\ZWhite\longrightarrow\C$ and $g:\ZBlack\longrightarrow \textnormal{V}$, one defines the integral 
$$
\sqint_\gamma f(z_\white)g(z_\black)\dd{z}
\coloneqq
\sum_{k=1}^n (p_k-p_{k-1})f(w_k)g(b_k)\,,
$$
where $w_k\in\ZWhite$ and $b_k\in\ZBlack$ are the white and black vertices across the dual edge $\{p_k,p_{k-1}\}$ with respect to each other --- see Figure \ref{contour}.
The definition is identical when $f$ is $\textnormal{V}$-valued and $g$ is $\C$-valued.
\epar

This notion of integration is closely related to the notion of discrete differentiation given by the derivatives
\begin{align}\label{derivatives}
\dee f(z)\coloneqq\sum_{\underset{\vert z-w\vert =1}{w\in\Z^2}}\frac{f(w)}{w-z}
\mspace{50mu}
\textnormal{and}
\mspace{50mu}
\deebar f(z)\coloneqq\sum_{\underset{\vert z-w\vert =1}{w\in\Z^2}}\frac{f(w)}{\cconj{w}-\cconj{z}}
\end{align}
through the \emph{discrete Stoke's formula}:
Let $\gamma$ be a positively-oriented dual contour, then
$$
\sqint_\gamma f(z_\white)g(z_\black)\dd{z}=\ii \sum_{b\in\interior_\black\gamma}\deebar f(b)g(b)+\ii \sum_{w\in\interior_\white\gamma}f(w)\deebar g(w)\,.
$$

A function $f$ is said to be \emph{discrete holomorphic} at $z\in\Z^2$ if $\deebar f(z)=0$. 
Then, Stokes' formula implies that, for two positively-oriented dual contours $\gamma_1,\gamma_2$ satisfying that $f$ and $g$ are discrete holomorphic on the symmetric differences $\interior_\black\gamma_1\SymDiff\interior_\black\gamma_2$ and $\interior_\white\gamma_1\SymDiff\interior_\white\gamma_2$ respectively,
one has $\sqint_{\gamma_1} f(z_\white)g(z_\black)\dd{z}=
\sqint_{\gamma_2} f(z_\white)g(z_\black)\dd{z}$.

Moreover one also has \emph{discrete integration by parts}: if $f,g:\ZWhite\longmapsto\C$ are discrete holomorphic on a discrete neighborhood of $\gamma$ it follows that
$$
\sqint_\gamma \dee f(z_\black)g(z_\white)\dd{z}
=
-\sqint_\gamma f(z_\white)\dee g(z_\black)\dd{z}\,.
$$

As for the discrete Laurent monomials, a modified version of the ones constructed in {\cite{HKV}} shall be considered\footnote{For the results presented here, the modification is not essential --- one could reproduce the rest of the results in this paper with the original choice of Laurent monomials.
Nevertheless, the monomials in {\cite{HKV}} do not allow one to build both the holomorphic and antiholomorphic sectors simultaneously with the expected commutativity among them.}.
In particular, one should distribute the discrete pole in the black sub\-lattice of $\Z^2$ among the four $\ZBlack$-nearest neighbours of the origin --- see the fifth property in the following proposition.

\begin{prop}[Proposition $2.1$ in \cite{HKV}]\label{monomials}
There exists a unique family of functions $\{z\mapsto z^{[n]}\}_{n\in\Z}$ on $\Z^2$ that satisfies the following properties:
\begin{enumerate}
	\item For all $n\in\Z$, $z\mapsto z^{[n]}$ has the same ${\pi/2}$ rotational symmetry around the origin as $z\mapsto z^n$ on $\C$.
	\item For all $z\in\Z^2$, $z^{[0]}=1$.
	\item For any $z\in\Z^2$, there exists $N\in\N$ such that $z^{[n]}=0$ for all $n\geq N$.
	\item For $n<0$, $z^{[n]}\longrightarrow 0$ as $\norm{z}\rightarrow \infty$.
	\item The first negative-power monomial satisfies
	$$
	\frac{1}{2\pi}\deebar z^{[-1]}=\frac{1}{2}\delta_{z,0}+\frac{1}{4}\sum_{\Vert w\Vert=1}\delta_{z,w}+\frac{1}{8}\sum_{w=\pm1\pm\ii}\delta_{z,w}\,.
	$$
	\item For $n\geq 0$, for all $z\in\Z^2$ $\deebar z^{[n]}=0$. For $n<0$, there exists $R>0$ such that $\deebar z^{[n]}=0$ if $\norm{z}>R$.
	\item For any $n,m\in\Z$,
	$$
	\sqint_\gamma z^{[n]}_\black z^{[m]}_\white \dd z = 2\pi\ii \delta_{n+m+1}\,,
	$$
	for any large enough positively-oriented dual contour $\gamma$ that encircles the origin.
\end{enumerate}
\end{prop}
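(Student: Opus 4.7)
The plan is to construct the family $\{z^{[n]}\}_{n\in\Z}$ in two stages — first the negative-index monomials starting from $z^{[-1]}$ via the full-plane Green's function, then the nonnegative-index ones by inductive antidifferentiation — and afterwards verify the orthogonality relation (7). For $z^{[-1]}$, the prescription in property 5 says that $\deebar z^{[-1]}$ must equal an explicit compactly-supported source smeared over the origin and its nearest and next-nearest neighbours. Since by Remark \ref{d-dbar-lap} the operator $\dee\deebar$ acts as the graph Laplacian on each of the two black sublattices, and since the discrete full-plane Green's function $\Green$ of Section \ref{sec: infinite} inverts that Laplacian with the prescribed logarithmic asymptotics, one can assemble translates and differences of $\Green$ (and its counterpart on the odd sublattice) to produce a function satisfying property 5 with $\pi/2$ rotational symmetry and decay at infinity. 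Uniqueness is then a discrete Liouville argument: the difference of two candidates would be discrete holomorphic on all of $\Z^2$ and would tend to zero at infinity, hence vanish identically.

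For $n\leq -2$ I would proceed by discrete differentiation: having constructed $z^{[n+1]}$, I set $z^{[n]}\coloneqq \tfrac{1}{n+1}\,\dee z^{[n+1]}$. Decay at infinity (property 4), the $\pi/2$ rotational symmetry (property 1), and the eventual discrete holomorphicity (property 6) all transfer from $z^{[n+1]}$ to $z^{[n]}$ immediately, since $\dee$ preserves these. For $n \geq 0$, starting from $z^{[0]}=1$, I define $z^{[n]}$ by discrete path integration of $n\cdot z^{[n-1]}$ from the origin, normalised so that $z^{[n]}(0)=0$ for $n\geq 1$; since $z^{[n-1]}$ is discrete holomorphic on all of $\Z^2$ by induction, discrete Stokes' formula makes the antiderivative path-independent and hence single-valued. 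The rotational symmetry of $z^{[n-1]}$ propagates upward under this prescription, and the locally polynomial growth that yields property 3 follows by the same induction.

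For the orthogonality relation (7) I would take a large positively-oriented dual contour $\gamma$ enclosing the origin (and thus all sources of $z^{[n]}$ and $z^{[m]}$) and apply discrete Stokes' formula. When $n,m\geq 0$ both factors are discrete holomorphic everywhere, so the contour can be collapsed and the integral vanishes. More generally, discrete integration by parts together with the relation $\dee z^{[n]}=n\,z^{[n-1]}$ for $n\geq 1$ and its analogue for $n\leq -1$ coming from Stage 2 allows one to shift derivatives between the two factors and reduce any case with $n+m+1\neq 0$ to an integral of a discrete holomorphic function against an entire one, which vanishes. When $n+m+1=0$ the pairing is genuinely residue-type, and the smeared source in property 5 — chosen precisely so that both the holomorphic and antiholomorphic sectors pair cleanly — yields exactly $2\pi\ii$. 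The main obstacle I anticipate lies in this last step: ensuring that the specific weights $(\tfrac{1}{2},\tfrac{1}{4},\tfrac{1}{8})$ in the smeared source are compatible with the recursive normalisations adopted in Stages 1 and 2, so that the residue pairing delivers exactly $2\pi\ii\delta_{n+m+1}$ without lattice-dependent corrections, and in particular that the modified (as opposed to the HKV original) source does not introduce extra terms into the Wick-type contractions used in later sections.
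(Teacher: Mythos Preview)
The paper does not prove this proposition. It is stated as Proposition~2.1 of \cite{HKV} (with a footnoted modification to property~5) and is cited without proof; the paper merely uses its consequences. There is therefore no ``paper's own proof'' to compare against.

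That said, your sketch is broadly in the spirit of the construction in the cited reference: negative-index monomials are built from the full-plane Green's function and then differentiated downward, nonnegative-index ones are obtained by successive discrete antidifferentiation starting from $z^{[0]}=1$, and the orthogonality relation~(7) is verified last via Stokes' formula and integration by parts. Two points you gloss over would need real work in a full proof. First, property~3 --- that for each fixed $z$ the positive-index monomials eventually vanish --- is not a ``locally polynomial growth'' statement but a genuinely discrete phenomenon: each antidifferentiation step, normalised to vanish at the origin, pushes the support outward, and this needs to be tracked explicitly. Second, the compatibility of the specific smearing weights $(\tfrac12,\tfrac14,\tfrac18)$ with the residue normalisation in~(7), which you correctly flag as the crux, is precisely the content of the modification footnoted in the paper and must be checked by direct computation on the base cases rather than inferred from the recursion.
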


Define then, for $n\in\Znn$, the \emph{null radius} $\NullRad{n}$ of $z\mapsto z^{[n]}$ as the largest radius $r\in\Znn$ that satisfies the condition
$$
\norm{z}\leq r
\ \ 
\Rightarrow
\ \ 
z^{[n]}=0\,.
$$
Define also, for $n\in\Z$, the \emph{singular radius} $\SingRad{n}$ of $z\mapsto z^{[n]}$ as the smallest radius $r\in\Znn$ that satisfies the condition
$$
\norm{z}>r
\ \ \ 
\Rightarrow
\ \ 
\deebar z^{[n]}=0\,.
$$
Naturally $\SingRad{n}=0$ for $n\geq0$.

\subsection{Local fields and null fields.}
From a CFT perspective, a field $F$ is an object that can be evaluated at any point $z$ on the domain of the model in question to produce a meaningful quantity $F(z)$.
Such field is said to be local if $F(z)$ depends only on a neighbourhood of $z$, and, if $w$ is another point,
$F(z)$ and $F(w)$ have the same dependence on their respective neighbourhoods.
In other words, a local field $F$ is completely determined by a translation invariant rule and its value at a distinguished point, for example, the origin $F(0)$.
\epar

As an example, in our model of symplectic fermions in the square lattice, one could interpret $\eta(0)\xi(1)$ and $\eta(0+2\ii)\xi(1+2\ii)$ as the same local field evaluated at the points $0$ and $2\ii$.
Note, however, that one should specify the domain $\Gra$ so as to know which precise objects the above are, and therefore, to make sense of such objects independently of $\Gra$, one needs to consider a more abstract construction.
\epar

With this preamble in mind, one defines a \emph{local field} of the discrete symplectic fermions on $\Z^2$ as an element of the polynomial ring 
$$
\LocFi\coloneqq\C\big[\,\hat\varphi(z)\hat\varphi(w)
\ \big\vert\ 
w,z\in\Z^2\,\big]\,,
$$
which should be heuristically interpreted as the value of such field at $0$. Moreover, one wants to think of $\hat\varphi$ as $\hat\eta$ on white vertices and $\hat\xi$ on black vertices.
For that reason, the notation $\hat\eta(w)\hat\xi(b)$ is used interchangeably with $\hat\varphi(w)\hat\varphi(b)$ when the colors of the vertices $w$ and $b$ are known to be white and black respectively; 
and similarly for pairs $\hat\xi\hat\eta$, $\hat\eta\hat\eta$ and $\hat\xi\hat\xi$.
The product on $\LocFi$ is stressed by a dot $\cdot$ whenever convenient.
\epar

\begin{ex}
	For $z,w\in\ZWhite$, the linear combinations
	$$
	\hat{\chi}^-(w)\hat{\chi}^+(z)
	\coloneqq
	\hat{\eta}(w)\dee\hat{\xi}(z)\,,
	\mspace{150mu}
	\hat{\chi}^+(w)\hat{\chi}^-(z)
	\coloneqq
	\dee\hat{\xi}(w)\hat{\eta}(z)\,,
	$$
	$$
	\hat{\chi}^+(w)\hat{\chi}^+(z)
	\coloneqq
	\dee\hat{\xi}(w)\dee\hat{\xi}(z)\,,
	\mspace{50mu}
	\textnormal{ and }
	\mspace{60mu}
	\hat{\chi}^-(w)\hat{\chi}^-(z)
	\coloneqq
	\hat{\eta}(w)\hat{\eta}(z)\,.\ 
	$$
	are local fields, where $\dee$ is as defined in Equation \eqref{derivatives}.
	\hfill$\diamond$
\end{ex}

Informally, a field involves the product and linear combination of $\varphi(z)$ for some $z\in\Z^2$ in a neighbourhood of $0$.
More precisely,  the \emph{support} of a local field $F\in\LocFi$ is defined as the smallest subset $S\subset\Z^2$ such that $F\in\C[\,\hat\varphi(z)\hat\varphi(w)\,\vert\, w,z\in S\,]\subset\LocFi$, and it is denoted by $\supp\, F$.

Given a domain $\Ver\subset\Z^2$ with no holes that induces a dimerable graph $\Gra$, the way a local field associates a random variable on $\Dimm{\Gra}$ to a point $z\in\Ver$ is through the map
$\ev_z^\Gra$, which is defined on monomials as follows:
Let $z_1,\ldots,z_{n+m}\in\Z^2$ be $n+m\in2\N$ vertices such that $n$ of them are white and $m$ of them are black, and consider the monomial $M=\hat\varphi(z_1)\hat\varphi(z_2)\cdots\hat\varphi(z_{n+m-1})\hat\varphi(z_{n+m})$.
Then, if $n\neq m$ or $z_i+z\notin\Ver$ for some $1\leq i\leq n+m$, set $\ev_z^\Gra(M)\equiv 0$.
Otherwise, let $\sigma\in\SymmGrp_{2n}$ be the permutation satisfying
$(z_{\sigma(1)},z_{\sigma(2)},\ldots,z_{\sigma(2n-1)},z_{\sigma(2n)})=(w_1,b_1,\ldots,w_n,b_n)$ for $w_1,\ldots,w_n\in\ZWhite$ and $b_1,\ldots,b_n\in\ZBlack$.
Define
$$
\ev_z^\Gra(M)\coloneqq
\begin{cases} 
\mspace{48mu}
\sign\sigma\ \eta(w_1+z)\xi(b_1+z)\cdots\eta(w_n+z)\xi(b_n+z)
\mspace{30mu}
&\textnormal{if }
z\in\ZWhite \\
(-1)^n\sign\sigma\ \eta(b_1+z)\xi(w_1+z)\cdots\eta(b_n+z)\xi(w_n+z)
\mspace{30mu}
&\textnormal{if }
z\in\ZBlack \\ 
\end{cases}\,.
$$
Note that choosing a different ordering of the vertices $w_i,b_i$ leads to the same random variable --- see its definition in Subsection \ref{paths-cycles}. 
The definition of $\ev_z^\Gra$ is completed by setting $\ev_z^{\Gra}(1)\equiv 1$ and extending linearly to the whole $\LocFi$.
\epar

\begin{figure}[h!]
	\centering
	\begin{overpic}[scale=0.787, tics=10]{./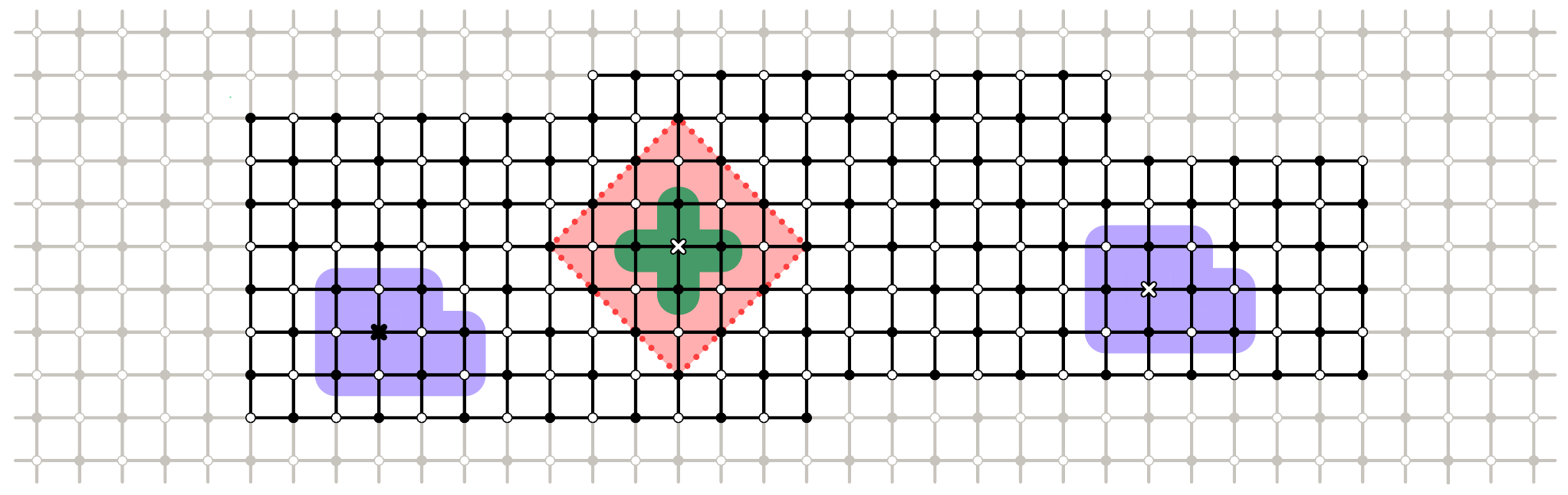}
			\put(71.1,13.7){\small $w$}
			\put(22.4,10.9){\small $b$}
			\put(13.7,24.4){{\large$\Gra$}}
	\end{overpic}
	\caption{A field (purple) evaluated on $\Gra$ at $w$ and $b$ and \\ a null field (green) with a radius of nullity (red) thereof.}
	\label{fields}
\end{figure}

From the point of view of CFT, the relevant quantities of a model are the correlation functions of fields evaluated at macroscopically separated points.
In our construction, those quantities are expectation values of local fields evaluated at such points, and for that reason, one should identify any two local fields that lead to the same expectation values when tested against local fields at a large enough distance.
This motivates the following definition of null fields.
\epar

A local field $F\in\LocFi$ is said to be \emph{null} if there exists $R>0$ such that
$$
\Expect_\Gra\big[\ev^\Gra_z\big(\,F\cdot\hat\varphi(z_1)\cdots\hat\varphi(z_{2n})\,\big)\big]=0
$$
for any domain $\Gra=(\Ver,\Edg)$ of $\Z^2$ with no holes, any $z_1,\ldots,z_{2n}\in\Z^2\setminus\Ball(0;R)$
and any $z\in\Ver$ that satisfy $\Ball(z;\rad\,F)\subset\Ver$ and $z+z_i\in\Ver$ for $1\leq i\leq 2n$.
Such an $R$ is called a \emph{radius of nullity} of $F$.

\begin{ex}\label{fields-anticommute}
	Note the evaluation map is built to encode the anticommutativity of the fermions: the local field $\hat\eta(w)\hat\xi(b)+\hat\xi(b)\hat\eta(w)$ is null for all $w\in\ZWhite$ and $b\in\ZBlack$.
	\hfill$\diamond$
\end{ex}

\begin{ex}
	By Corollary \ref{null-fermions}, local fields of the form $\hat\eta(w_1)\hat\xi(b_1)\cdots \hat\eta(w_n)\hat\xi(b_n)\in\LocFi$ with $w_i=w_j$ or $b_i=b_j$ for some $1\leq i\neq j\leq n$ are null.
	\hfill$\diamond$
\end{ex}

\begin{ex}\label{field-derivatives}
	For $w,w_0\in\ZWhite$, the field $\hat{\eta}(w_0)\deebar\hat{\xi}(w)$, where $\deebar$ is as in Equation \eqref{derivatives}, is null if and only if $w\neq w_0$ by Wick's theorem and Theorem \ref{discrete_holomorphicity}.
	Similarly, for $b,b_0\in\ZBlack$, the field $\deebar\hat{\eta}(b)\hat{\xi}(b_0)$ is null if and only if $b\neq b_0$.
	\hfill$\diamond$
\end{ex}
The set of null fields is denoted by $\NuFi\subset\LocFi$.
Note it is a vector subspace, but it is not a subalgebra:
Take the null fields $F_1=\hat{\eta}(w)\deebar\hat{\xi}(0)$ and $F_2=\hat{\eta}(0)\deebar\hat{\xi}(w)$ for some $w\in\ZWhite\setminus\{0\}$.
Then $F_1\cdot F_2$ is not null by Wick's theorem and Theorem \ref{discrete_holomorphicity}.

\begin{rmk}\label{null-rmk}
	Let $F\in\LocFi$ be a local field.
	By virtue of Wick's theorem and Theorem \ref{discrete_holomorphicity}, the fields defined in Example \ref{field-derivatives} satisfy
	$$
	\hat{\eta}(w_0)\deebar\hat{\xi}(w)\cdot F=\delta_{w,w_0}F+\NuFi
	\ \ \ \ \ \ 
	\textnormal{ and }
	\ \ \ \ \ \ 
	\deebar\hat{\eta}(b)\hat{\xi}(b_0)\cdot F= -\delta_{b,b_0}F+\NuFi
	$$
	for $w,w_0\in\ZWhite\setminus\supp_\white F$ and $b,b_0\in\ZBlack\setminus\supp_\black F$.
	Furthermore, $\deebar\hat\chi^\alpha(z)\hat\chi^\beta(w)\cdot F$ and $\hat\chi^\alpha(z)\deebar\hat\chi^\beta(w)\cdot F$ ---for $z,w$ in the right sublattice of $\Z^2$ depending on $\alpha,\beta$ in each case--- are null as long as $z\neq w$, and $z,w$ are at least at distance $2$ from $\supp F$.
	\hfill $\diamond$
\end{rmk}

\subsection{Fermionic current modes.}
Let $\gamma,\gamma^+$ be a pair of positively-oriented dual contours.
Define then, for $n,m\in\Z$ and $\alpha,\beta\in\{+,-\}$, the local field
\begin{align*}
(\hat\chi^\alpha_m\hat\chi^\beta_n)_{\gamma,\gamma^+}
\coloneqq\
\frac{1}{2\pi}\dcoint{\gamma^+}\dd w \dcoint{\gamma}\dd z \,w^{[m]}_\black\,z^{[n]}_\black\hat\chi^\alpha(w_\white^{})\hat\chi^\beta(z_\white^{})
\,.
\end{align*}
For this local field to produce something meaningful, the contours $\gamma$ and $\gamma^+$ need to be away from each other.
To that end, given two dual contours $\gamma=(p_0,\ldots,p_n)$ and $\tilde\gamma=(\tilde p_0,\ldots,\tilde p_m)$, define $\dist(\gamma,\tilde\gamma)\coloneqq \min_{0\leq i\leq n\,;\,0\leq j\leq m}\vert p_i - \tilde p_j\vert$, and for a set $S\subset\Z^2$ define $\dist(\gamma,S)\coloneqq \min_{0\leq i\leq n\,;\,x\in S}\vert p_i - x\vert$.

\begin{lemma}\label{indepe-of-path}
Let $F\in\LocFi$ be a local field and
let $\gamma_1,\gamma_1^+$ and $\gamma_2,\gamma_2^+$ be pairs of positively-oriented dual contours satisfying $\supp F\cup\Ball\big(0;\SingRad{n}\lor\SingRad{m}\big)\subset\interior\gamma_i\subset\interior\gamma_i^+$ and $\dist(\gamma_i^{},\gamma_i^+)>1$ and $\dist(\gamma_i^{},\supp F)>1$ for $i=1,2$.
Then, for $\alpha,\beta\in\{+,-\}$,
$$
(\hat\chi^\alpha_m\hat\chi^\beta_n)_{\gamma^{}_1,\gamma_1^+}\cdot F-(\hat\chi^\alpha_m\hat\chi^\beta_n)_{\gamma^{}_2,\gamma_2^+}\cdot F
\,\in\NuFi\,.
$$
\end{lemma}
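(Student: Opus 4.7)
My approach is a contour-deformation argument resting on discrete Stokes' formula together with Proposition \ref{monomials} (discrete holomorphicity of $z\mapsto z^{[n]}$ outside the ball $\Ball(0;\SingRad{n})$) and Remark \ref{null-rmk} (nullity of $\bar\partial$ of bilinears in fermion fields acting on $F$, away from $\supp F$). The goal is to show that the equivalence class of $(\hat\chi^\alpha_m\hat\chi^\beta_n)_{\gamma,\gamma^+}\cdot F$ modulo $\NuFi$ does not depend on the admissible contour pair $(\gamma,\gamma^+)$.

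First I would reduce the problem to an elementary move: adding or removing a single plaquette from $\interior\gamma$ or from $\interior\gamma^+$ while preserving all admissibility conditions. Any two admissible pairs can be connected by a finite chain of such moves --- for instance by first inflating $\gamma_1^+$ to a very large common dual contour $\gamma^+_\infty$ enclosing both $\gamma_2$ and $\gamma_2^+$, then deforming $\gamma_1$ into $\gamma_2$ with the outer contour fixed at $\gamma^+_\infty$, and finally contracting $\gamma^+_\infty$ back down to $\gamma_2^+$ --- so that the inequalities $\dist(\gamma,\gamma^+)>1$ and $\dist(\gamma,\supp F)>1$ hold throughout.

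For an elementary move modifying $\gamma$ at a plaquette $p$, I would invoke discrete Stokes' formula on the inner $z$-integral. The difference between the old and new values equals a small sum localized at the four corners of $p$, splitting into terms of the form $b^{[n]}\,\bar\partial\hat\chi^\beta(b)$ at the two black corners and $\hat\chi^\beta(w')\,\bar\partial(\cdot)^{[n]}(w')$ at the two white corners. The condition $p\notin\Ball(0;\SingRad{n})$ annihilates the white-corner terms by the very definition of $\SingRad{n}$. The black-corner terms, multiplied by $\hat\chi^\alpha(w)\cdot F$ and integrated along $\gamma^+$ against $w^{[m]}$, become null by Remark \ref{null-rmk}: because $b$ lies in the immediate vicinity of $\gamma$ and $w$ in the immediate vicinity of $\gamma^+$, the hypotheses $\dist(\gamma,\gamma^+)>1$ and $\dist(\gamma,\supp F)>1$ force $b\neq w$ and keep both at the lattice separation from $\supp F$ required by the remark. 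An entirely symmetric argument, with Stokes applied in the $w$-variable, treats elementary moves of $\gamma^+$; there the outer monomial $w^{[m]}$ is already holomorphic since $\gamma^+$ encloses $\Ball(0;\SingRad{m})$, and the fermionic factor is again controlled by Remark \ref{null-rmk}.

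The main obstacle is the discrete-geometric bookkeeping: ensuring that the strict inequality $\dist(\cdot,\cdot)>1$ in the hypothesis translates, at every Stokes step of the chain, into the distance-$2$ lattice separation required by Remark \ref{null-rmk}. In the generic case this is immediate, but when the chain of elementary moves brings $\gamma$ or $\gamma^+$ near its constraint boundary one may need to insert preliminary moves that widen the gap before the Stokes step. Once that is checked, concatenating all elementary-move contributions gives a total difference lying in $\NuFi$, establishing the claim.
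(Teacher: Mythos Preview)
Your approach is correct and rests on the same two ingredients as the paper's proof: discrete Stokes' formula and Remark~\ref{null-rmk}. The three-step decomposition through a large common outer contour $\gamma^+_\infty$ is exactly what the paper does (there called simply $\gamma^+$).

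The one genuine difference is that you propose to implement each of the three steps via a chain of elementary plaquette moves, whereas the paper applies Stokes' formula \emph{once} on the entire annular region between two contours. For instance, to pass from $\gamma_i^+$ to the common $\gamma^+$, the paper writes
\[
\bigg[\dcoint{\gamma^+}-\dcoint{\gamma_i^+}\bigg]\dd w\, w^{[m]}_\black\,\hat\chi^\alpha(w_\white)\hat\chi^\beta(z_\white)\cdot F
= \ii\sum_{w_\black\in\interior_\black\gamma^+\setminus\interior_\black\gamma_i^+} w^{[m]}_\black\,\deebar\hat\chi^\alpha(w_\black)\hat\chi^\beta(z_\white)\cdot F \in \NuFi,
\]
which is immediately null by Remark~\ref{null-rmk} since every $w_\black$ in the annulus is at distance $>1$ from both $z_\white$ (on $\gamma_i$) and $\supp F$. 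This one-shot application completely sidesteps the ``discrete-geometric bookkeeping'' you identify as the main obstacle: there is no intermediate chain of contours, hence no need to maintain the distance constraints along a deformation path. Your plaquette-by-plaquette reduction is valid but strictly more work than necessary.
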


\begin{figure}[b!]
	\centering
	\begin{overpic}[scale=0.757, tics=10]{./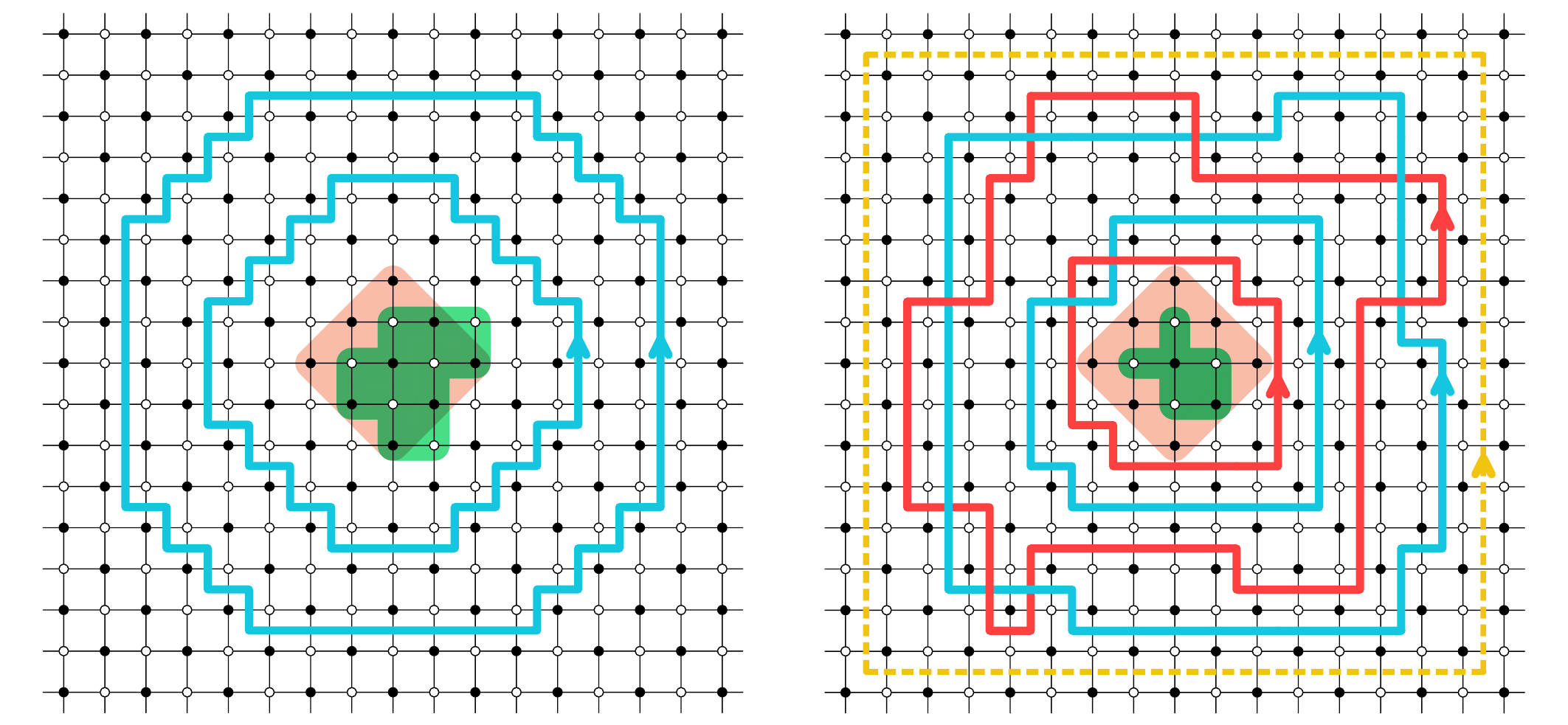}
	\end{overpic}
	\caption{\centering
		Left: Definition of $\hat\chi^\alpha_m\hat\chi^\beta_n(F+\NuFi)$.
		The support of $F$ in green, the ball $\Ball(0;\SingRad{n}\lor\SingRad{m})$ in orange,
		the contours $\gamma,\gamma^+$ in blue.
		Right: Proof of Lemma \ref{indepe-of-path}.\newline
		The contours $\gamma_1,\gamma_1^+$ in red, $\gamma_2,\gamma^+_2$ in blue and $\gamma^+$ in dashed yellow.}
	\label{fermion-modes}
\end{figure}

\begin{proof}
Take a positively-oriented dual contour $\gamma^+$ satisfying that $\gamma^+,\gamma_i^{+}$ are non-overlapping and $\interior{\gamma_i^{+}}\subset\interior{\gamma^+}$ for $i=1,2$ --- see Figure \ref{fermion-modes}.
Then, using Stokes' formula, by Remark \ref{null-rmk} and the fact that $\gamma_i$ is sufficiently away from $\supp F$ and $\gamma^+_i$,
\begin{align*}
\Big[\big(\hat\chi^\alpha_m\hat\chi^\beta_n\big)_{\gamma_i^{},\gamma_{}^+}-  \big(\hat\chi^\alpha_m\hat\chi^\beta_n\big)_{\gamma_i^{},\gamma^+_{i}}\Big]\cdot F
\,= & \ 
\frac{1}{2\pi}\dcoint{\gamma_i}\dd z\,z^{[n]}_\black
\bigg[
\dcoint{\gamma^+_{}}-\dcoint{\gamma^+_{i}}
\bigg
]\dd w \,w^{[m]}_\black\,\hat\chi^\alpha(w_\white^{})\hat\chi^\beta(z_\white^{})\cdot F
\\
= & \ 
\frac{\ii}{2\pi}\dcoint{\gamma_i}\dd z\,z^{[n]}_\black
\sum_{\underset{w_\black\notin\interior_\black\gamma^+_i}{w_\black\in\interior_\black\gamma_{}^+}}
w^{[m]}_\black\,\deebar\hat\chi^\alpha(w_\black^{})\hat\chi^\beta(z_\white^{})\cdot F
\\
= & \  0+\NuFi\,.
\end{align*}

Similar arguments prove $(\hat\chi^\alpha_m\hat\chi^\beta_n)_{\gamma_1^{},\gamma^+}\cdot F-(\hat\chi^\alpha_m\hat\chi^\beta_n)_{\gamma_2^{},\gamma^+}\cdot F$ to be null.
Writing
\begin{align*}
(\hat\chi^\alpha_m\hat\chi^\beta_n)_{\gamma_1^{},\gamma_1^+} \cdot & F
-
(\hat\chi^\alpha_m\hat\chi^\beta_n)_{\gamma_2^{},\gamma^+_2}\cdot F
\ = \  
\Big[
(\hat\chi^\alpha_m\hat\chi^\beta_n)_{\gamma_1^{},\gamma_1^+}\cdot F-(\hat\chi^\alpha_m\hat\chi^\beta_n)_{\gamma_1^{},\gamma^+}\cdot F\,
\Big]
\\
\ + & \ 
\Big[
(\hat\chi^\alpha_m\hat\chi^\beta_n)_{\gamma_1^{},\gamma^+}\cdot F-(\hat\chi^\alpha_m\hat\chi^\beta_n)_{\gamma_2^{},\gamma^+}\cdot F\,
\Big]
+ 
\Big[
(\hat\chi^\alpha_m\hat\chi^\beta_n)_{\gamma_2^{},\gamma^+}\cdot F-(\hat\chi^\alpha_m\hat\chi^\beta_n)_{\gamma_2^{},\gamma_2^+}\cdot F\,
\Big]\,,
\end{align*}
it becomes clear that the claim holds true.
\end{proof}

\begin{lemma}\label{null-to-null}
	Let $F\in\NuFi$ be a null field and
	let $\gamma,\gamma^+$ be a pair of positively-oriented dual contours satisfying $\supp F\cup\Ball(0; \SingRad{n}\,\lor\SingRad{m})\subset\interior\gamma\subset\interior\gamma^+$ and $\dist(\gamma,\gamma^+)>1$ and $\dist(\gamma,\supp F)>1$.
	Then, $(\hat\chi^\alpha_m\hat\chi^\beta_n)_{\gamma,\gamma^+}\cdot F$ is null too for any $\alpha,\beta\in\{+,-\}$.
\end{lemma}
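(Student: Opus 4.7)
The plan is to use Lemma \ref{indepe-of-path} to enlarge the contours so that they lie in a region where the null property of $F$ can be invoked directly. Fix a radius of nullity $R_F$ of $F$; after enlarging it if necessary, we may assume $\supp F \cup \Ball(0; \SingRad{n}\lor\SingRad{m}) \subset \Ball(0; R_F)$.

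First I would choose auxiliary positively-oriented dual contours $\tilde\gamma,\tilde\gamma^+$ with $\Ball(0;R_F+2) \subset \interior\tilde\gamma \subset \interior\tilde\gamma^+$, $\dist(\tilde\gamma,\tilde\gamma^+)>1$ and $\dist(\tilde\gamma,\supp F)>1$. Both $(\gamma,\gamma^+)$ and $(\tilde\gamma,\tilde\gamma^+)$ satisfy the hypotheses of Lemma \ref{indepe-of-path}, which gives
\begin{equation*}
(\hat\chi^\alpha_m\hat\chi^\beta_n)_{\gamma,\gamma^+}\cdot F
\ \equiv\
(\hat\chi^\alpha_m\hat\chi^\beta_n)_{\tilde\gamma,\tilde\gamma^+}\cdot F \pmod{\NuFi}\,.
\end{equation*}
It therefore suffices to show that the right-hand side is null.

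For this step I would verify the null property directly from the definition. Pick $R'$ with $\interior\tilde\gamma^+\subset\Ball(0;R')$ and $R'\geq R_F$. For any test monomial $H = \hat\varphi(z_1)\cdots\hat\varphi(z_{2k})$ with $z_i\notin\Ball(0;R')$, any dimerable domain $\Gra$ and any evaluation point $z$ satisfying the required inclusions, exchanging expectation with the finite contour sums yields
\begin{align*}
&\Expect_\Gra\bigl[\ev^\Gra_z\bigl((\hat\chi^\alpha_m\hat\chi^\beta_n)_{\tilde\gamma,\tilde\gamma^+}\cdot F\cdot H\bigr)\bigr] \\
&\qquad = \frac{1}{2\pi}\dcoint{\tilde\gamma^+}\dd w \dcoint{\tilde\gamma}\dd z'\, w^{[m]}_\black z'^{[n]}_\black\, \Expect_\Gra\bigl[\ev^\Gra_z\bigl(F\cdot\hat\chi^\alpha(w_\white)\hat\chi^\beta(z'_\white)\cdot H\bigr)\bigr]\,.
\end{align*}
Since each $\hat\chi^\pm$ is a linear combination of $\hat\varphi$-insertions at vertices within one lattice step of its argument, and the white vertices of $\tilde\gamma,\tilde\gamma^+$ lie outside $\Ball(0;R_F+1)$, every monomial appearing in $\hat\chi^\alpha(w_\white)\hat\chi^\beta(z'_\white)\cdot H$ has the form $\hat\varphi(u_1)\cdots\hat\varphi(u_{2k+2})$ with all $u_j$ outside $\Ball(0;R_F)$. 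The null property of $F$ then forces the integrand to vanish pointwise.

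Combining the two steps gives $(\hat\chi^\alpha_m\hat\chi^\beta_n)_{\gamma,\gamma^+}\cdot F\in\NuFi$. No conceptual obstacle is expected: the only routine care needed is in choosing the auxiliary contours far enough out that, after accounting for the one-step spread of $\hat\chi^+=\dee\hat\xi$, every resulting $\hat\varphi$-vertex really does sit outside $\Ball(0;R_F)$, which is automatic once $\tilde\gamma$ is taken a couple of lattice steps clear of $\Ball(0;R_F)$.
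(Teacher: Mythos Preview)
Your proof is correct and follows essentially the same strategy as the paper's: enlarge the contours via Lemma~\ref{indepe-of-path} so that the $\hat\chi$-insertions fall outside $\Ball(0;R_F)$, then invoke the null property of $F$ directly. The paper's version is terser (it simply declares that $R$ is a radius of nullity of the auxiliary field and cites Lemma~\ref{indepe-of-path}), whereas you spell out the verification of nullity by expanding $\hat\chi^\pm$ into $\hat\varphi$'s and tracking the one-step spread of $\dee\hat\xi$, which is a worthwhile clarification of what the paper leaves implicit.
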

\begin{proof}
	Let $R_F$ be a radius of nullity of $F$, and take any $R>0$ big enough such that there exists a pair $\tilde{\gamma},\tilde{\gamma}^+$ of positively-oriented dual contours satisfying
	$$
	\Ball(0;R_F+2)\subset\interior\tilde\gamma\subset\interior\tilde\gamma^+\subset\Ball(0;R-2)\,,
	$$
	and $\dist(\tilde\gamma,\tilde\gamma^+)>1$.
	By nullity of $F$, it is clear that $R$ is a radius of nullity of $(\hat\chi^\alpha_m\hat\chi^\beta_n)_{\tilde\gamma^+,\tilde\gamma}\cdot F$.
	By Lemma \ref{indepe-of-path},
	$(\hat\chi^\alpha_m\hat\chi^\beta_n)_{\gamma^+,\gamma}\cdot F-(\hat\chi^\alpha_m\hat\chi^\beta_n)_{\tilde\gamma^+,\tilde\gamma}\cdot F$ is null too, and so the claim follows.
\end{proof}

The two lemmas above are all one needs so as to check the well-definedness of the following linear operators on the space of correlation-equivalent local fields 
$$
\Fields\coloneqq\LocFi/\NuFi\,.
$$
Let $F\in\LocFi$ be a local field.
For $n,m\in\Z$, define
$$
\hat\chi^\alpha_m\hat\chi^\beta_n(F+\NuFi)
\coloneqq
(\hat\chi^\alpha_m\hat\chi^\beta_n)_{\gamma,\gamma^+}\cdot F+\NuFi\,,
$$
where $\gamma,\gamma^+$ are positively-oriented dual contours satisfying ---see Figure \ref{fermion-modes}---
$$
\supp F\cup\Ball(0;\SingRad{n}\,\lor\SingRad{m})\subset\interior\gamma\subset\interior\gamma^+\,,
$$
and $\dist(\gamma,\gamma^+)>1$.
These operators are referred to as the \emph{current modes} of the fermions.

\subsection{Anticommutation relations.} Take the standard definition of the anticommutators of the current modes:
$$
\big\{\hat\chi^\alpha_n,\hat\chi^\beta_m\big\}
\coloneqq
\hat\chi^\alpha_n\hat\chi^\beta_m+\hat\chi^\beta_m\hat\chi^\alpha_n\,,
$$
$$
\hat\chi^\gamma_k\big\{\hat\chi^\alpha_{n\phantom{l}},\hat\chi^\beta_{m\phantom{l}}\big\}\hat\chi^\delta_l
\coloneqq
\hat\chi^\gamma_k\hat\chi^\alpha_{n\phantom{l}}\circ\hat\chi^\beta_{m\phantom{l}}\hat\chi^\delta_l+\hat\chi^\gamma_k\hat\chi^\beta_{m\phantom{l}}\circ\hat\chi^\alpha_{n\phantom{l}}\hat\chi^\delta_l\,,
$$
which are again linear operators on $\Fields$.
For $x\in\Z$, let $\delta_x\coloneqq\delta_{x,0}$.
Let $\ud^{\alpha\beta}$ be an antisymmetric symbol with $\ud^{-+}=1$.
\begin{prop}\label{anticommutation}
	The fermion modes satisfy, for all $n,m\in\Z$,
	$$
	\big\{\hat\chi^\alpha_n,\hat\chi^\beta_m\big\}
	=n\,\delta_{n+m}\ud^{\alpha\beta}\id_{\Fields}\,,
	$$
	and $\hat\chi^\gamma_k\{\hat\chi^\alpha_n,\hat\chi^\beta_m\}\hat\chi^\delta_l
	=n\,\delta_{n+m}\ud^{\alpha\beta}\hat\chi^\gamma_k\hat\chi^\delta_l$.
\end{prop}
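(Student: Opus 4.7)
The plan is to adapt the standard CFT residue argument to the discrete setting, using Lemma~\ref{indepe-of-path} for contour freedom, the discrete Stokes' formula to collapse a difference of contours, Remark~\ref{null-rmk} to localise the resulting bulk sum, and property~7 of Proposition~\ref{monomials} to extract the final scalar. Fix $F\in\LocFi$ and pick three nested positively-oriented dual contours $\gamma^{\textnormal{in}}\subset\gamma_w\subset\gamma^{\textnormal{out}}$, all encircling $\supp F\cup\Ball(0;\SingRad{n}\lor\SingRad{m})$ with consecutive pairs at distance strictly greater than one. Using the field anti-commutativity $\hat{\chi}^\alpha(w)\hat{\chi}^\beta(z)+\hat{\chi}^\beta(z)\hat{\chi}^\alpha(w)\in\NuFi$ (extending Example~\ref{fields-anticommute}) together with Lemma~\ref{indepe-of-path}, the anticommutator rewrites as
\begin{align*}
\{\hat{\chi}^\alpha_n,\hat{\chi}^\beta_m\}(F+\NuFi)=\frac{1}{2\pi}\dcoint{\gamma_w}\dd w\,w^{[n]}\Big[\dcoint{\gamma^{\textnormal{in}}}-\dcoint{\gamma^{\textnormal{out}}}\Big]\dd z\,z^{[m]}\,\hat{\chi}^\alpha(w_\white^{})\hat{\chi}^\beta(z_\white^{})\cdot F+\NuFi.
\end{align*}

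The bracketed difference of $z$-integrals is evaluated by discrete Stokes' formula on the annulus bounded by $\gamma^{\textnormal{in}}$ and $\gamma^{\textnormal{out}}$. Since $\Ball(0;\SingRad{m})$ is enclosed by $\gamma^{\textnormal{in}}$, the $\deebar z^{[m]}$-terms of Stokes vanish identically on the annulus, leaving only a bulk sum of the shape $\ii\sum_b z^{[m]}(b)\,\hat{\chi}^\alpha(w)\deebar\hat{\chi}^\beta(b)\cdot F$ with $b$ ranging over black vertices in the annulus. By Remark~\ref{null-rmk}, each summand is null whenever $b$ is at distance $\geq 2$ from both $\supp F$ and $w$; since $\supp F$ sits well inside $\gamma^{\textnormal{in}}$, only the black vertices within distance one of the moving white point $w\in\gamma_w$ contribute. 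In the diagonal case $\alpha=\beta$ the surviving terms are themselves null, because the product $\hat{\eta}(w)\hat{\eta}(w')\cdot F$ (respectively $\hat{\xi}(b)\hat{\xi}(b')\cdot F$) has mismatched numbers of white and black generators and therefore evaluates to zero on every double-dimer configuration; in the crossed case $\alpha\neq\beta$, Remark~\ref{null-rmk} furnishes an explicit local identity $\hat{\chi}^\alpha(w)\deebar\hat{\chi}^\beta(b)\cdot F=c^{\alpha\beta}_{b,w}F+\NuFi$ supported on $b\sim w$.

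Substituting this local reduction collapses the double integral to a single $w$-integral whose integrand combines $w^{[n]}$ with the values of $z^{[m]}$ at the nearest black neighbours of $w$. A discrete integration by parts together with property~7 of Proposition~\ref{monomials}, which provides the orthogonality $\sqint w^{[k]}_\black w^{[l]}_\white\,\dd w=2\pi\ii\,\delta_{k+l+1}$, evaluates the resulting $w$-integral to $n\,\delta_{n+m}\,\ud^{\alpha\beta}\,\id_{\Fields}$. The sandwiched identity $\hat{\chi}^\gamma_k\{\hat{\chi}^\alpha_n,\hat{\chi}^\beta_m\}\hat{\chi}^\delta_l=n\,\delta_{n+m}\,\ud^{\alpha\beta}\,\hat{\chi}^\gamma_k\hat{\chi}^\delta_l$ then follows by the same contour-deformation argument: by Lemma~\ref{indepe-of-path} the $k$-contour can be pushed outside $\gamma^{\textnormal{out}}$ and the $l$-contour inside $\gamma^{\textnormal{in}}$, so the OPE-type localisation between the $n$- and $m$-variables is entirely unaffected by the spectator modes $\hat{\chi}^\gamma_k$ and $\hat{\chi}^\delta_l$. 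The main obstacle is the explicit discrete residue computation: in the continuum this is a one-line Cauchy calculation, while here one must verify that the delta-type contributions from Remark~\ref{null-rmk} at the four black neighbours of $w$, combined with the integration by parts and property~7, conspire to produce exactly the integer factor $n$ with no parasitic correction stemming from the discrete structure of $\dee,\deebar$ encoded in Proposition~\ref{monomials}.
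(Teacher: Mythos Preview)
Your approach is essentially the same as the paper's: three nested contours, field anticommutativity to rewrite the anticommutator as a difference of $z$-contours around a fixed $w$-contour, Stokes' formula to collapse the annulus, localisation via Remark~\ref{null-rmk}, and property~7 of Proposition~\ref{monomials} to extract the scalar. The difference is the order of operations, and it matters for exactly the step you flag as the ``main obstacle''.

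The paper applies discrete integration by parts \emph{before} Stokes: it moves the $\dee$ from $\hat\chi^+=\dee\hat\xi$ onto the monomial, producing the factor $m$ explicitly and leaving $z^{[m-1]}_\white\,\hat\eta(w_\white)\hat\xi(z_\black)$. Stokes then yields a bulk sum over \emph{white} $z$ of $\hat\eta(w)\deebar\hat\xi(z)\cdot F$, to which the first identity of Remark~\ref{null-rmk} applies verbatim: $\hat\eta(w)\deebar\hat\xi(z)\cdot F=\delta_{z,w}F+\NuFi$. The double integral collapses to $\frac{\ii m}{2\pi}\sqint_\gamma w^{[n]}_\black w^{[m-1]}_\white\,\dd w\cdot F$, and property~7 gives $-m\,\delta_{n+m}=n\,\delta_{n+m}$ immediately, with no residual lattice combinatorics to check.

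By contrast, you apply Stokes first and land on $\hat\chi^\alpha(w)\deebar\hat\chi^\beta(b)\cdot F$ at \emph{black} $b$. Remark~\ref{null-rmk} does not directly furnish the coefficients $c^{\alpha\beta}_{b,w}$ you invoke; for $\beta=+$ you must unfold $\deebar\dee\hat\xi$ and then apply the white-vertex identity of Remark~\ref{null-rmk} neighbour by neighbour. The result is $\sum_{b\sim w}\tfrac{1}{w-b}\,z^{[m]}(b)=-\dee z^{[m]}(w)$, so your ``integration by parts together with property~7'' is really the same step the paper does up front, just deferred. This works, but it relies on the derivative relation $\dee z^{[m]}=m\,z^{[m-1]}$ for the discrete monomials, which is not among the properties listed in Proposition~\ref{monomials} (though it holds for the construction in~\cite{HKV}). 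Your proposal is therefore correct in spirit but leaves precisely the computation that the paper's ordering renders automatic; reorganising as the paper does closes the gap without further work.
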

\begin{proof}
Let us consider only the case $\{\hat\chi^-_n,\hat\chi^+_m\}$ --- the rest of cases are proven with similar computations.
Fix $F\in\LocFi$ and take three non-intersecting positively-oriented dual contours $\gamma^-,\gamma,\gamma^+$ satisfying $\supp F\cup\Ball(0;\SingRad{n}\lor\SingRad{m})\subset\interior\gamma^-\subset\interior\gamma\subset\interior\gamma^+$, and $\dist(\supp F,\gamma^-)>1$ and $\dist(\gamma,\gamma^\pm)>1$.
Recall from Example \ref{fields-anticommute} that $\hat\eta(w)\hat\xi(b)\cdot F=-\hat\xi(b)\hat\eta(w)\cdot F+\NuFi$ for any $F\in\LocFi$.
Then,
the following chain of equalities holds modulo null fields,
by Stokes' formula, the integration by parts formula, Remark \ref{null-rmk}, and the integral properties of the discrete monomials ---Proposition \ref{monomials}---:
\begin{align*}
\Big[
\big(\hat\chi^-_n\hat\chi^+_m\big)_{\gamma_-,\gamma} +\big(\hat\chi^+_m\hat\chi^-_n\big)_{\gamma,\gamma_+}
\Big]\cdot F
\ = & \ 
\frac{1}{2\pi}\sqint_\gamma\dd w \sqint_{\gamma_-}\dd z\, w^{[n]}_\black\,z^{[m]}_\black\,
\hat\eta(w_\white^{})\dee\hat\xi(z_\white^{})\cdot F
\\
\ \ & \mspace{40mu} +
\frac{1}{2\pi}\sqint_{\gamma_+}\dd z \sqint_{\gamma}\dd w\, w^{[n]}_\black\,z^{[m]}_\black\,
\dee\hat\xi(z_\white^{})\hat\eta(w_\white^{})\cdot F
\\
\ = & \ 
\frac{1}{2\pi}
\sqint_\gamma\dd w\,w^{[n]}_\black
\bigg[\sqint_{\gamma^-}-\sqint_{\gamma^+}\bigg]\dd z
\,z^{[m]}_\black\,\hat\eta(w_\white^{})\dee\hat\xi(z_\white^{})\cdot F
\\
\ = & \ -
\frac{m}{2\pi}
\sqint_\gamma\dd w\,w^{[n]}_\black
\bigg[\sqint_{\gamma^-}-\sqint_{\gamma^+}\bigg]\dd z
\,z^{[m-1]}_\white\,\hat\eta(w_\white^{})\hat\xi(z_\black^{})\cdot F
\\
\ = & \ 
\frac{\ii m}{2\pi}
\sqint_\gamma\dd w\,w^{[n]}_\black
\sum_{\underset{z_\white\notin\interior_\white\gamma^-}{z_\white\in\interior_\white\gamma^+}}
\,z^{[m-1]}_\white\,\hat\eta(w_\white^{})\deebar\hat\xi(z_\white^{})\cdot F
\\
\ = & \ 
\frac{\ii m}{2\pi}
\sqint_\gamma\dd w\,w^{[n]}_\black
\,w^{[m-1]}_\white\,F
\\
\ = & \ 
-m\delta_{n+m}\,F
\\
\ = & \ 
n\delta_{n+m}\,F\,,
\end{align*}
which completes the proof for $\alpha=-$ and $\beta=+$.
\end{proof}


\newpage
\section{Virasoro representation on local fields}
\label{sec: virasoro}

The CFT of symplectic fermions \cite{Kausch2} suggests that, from the fermion current modes, one can build operators satisfying the Virasoro relations --- see Section \ref{sec:intro}.
Doing so, the space of local fields $\Fields$ can be rendered a representation of the Virasoro algebra.
In particular, the Virasoro modes are built through a \emph{Sugawara construction}, i.e. as an  infinite sum that is quadratic on the fermion modes. 
In this context, the Sugawara construction is justified by the following truncation property.
 
\begin{lemma}\label{truncation}
	Let $F\in\LocFi$ be a local field.
	There exists $N\in\N$ such that
	$$
	\hat\chi^\alpha_m\hat\chi^\beta_n(F+\NuFi)=0+\NuFi
	$$
	for all $n\geq N$ and all $m\in\Z$ and $\alpha,\beta\in\{+,-\}$.
\end{lemma}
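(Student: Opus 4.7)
The plan is to apply the discrete Stokes formula to the inner $z$-integral in $(\hat\chi^\alpha_m\hat\chi^\beta_n)_{\gamma,\gamma^+}\cdot F$ and then kill each resulting term via Remark~\ref{null-rmk}. By Lemma~\ref{indepe-of-path} the image in $\Fields$ does not depend on the chosen contours, so it suffices to exhibit one admissible pair $\gamma,\gamma^+$ for which the Stokes-transformed expression visibly lies in $\NuFi$.

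First I would fix $R_F$ with $\supp F\subset\Ball(0;R_F)$ and apply property~$(3)$ of Proposition~\ref{monomials} to the finite set $\Ball(0;R_F+3)\cap\Z^2$: this yields an integer $N\geq 0$ such that $z^{[n]}=0$ for every $\norm{z}\leq R_F+3$ and every $n\geq N$, i.e.\ $\NullRad{n}\geq R_F+3$ for $n\geq N$. Now fix $n\geq N$, $m\in\Z$, $\alpha,\beta\in\{+,-\}$ and any pair of contours $\gamma,\gamma^+$ admissible in the definition of $\hat\chi^\alpha_m\hat\chi^\beta_n$. Because $n\geq 0$, property~$(6)$ gives $\deebar z^{[n]}\equiv 0$ on $\Z^2$, so the discrete Stokes formula applied to the inner integral---with the $\LocFi$-valued function $z_\white\mapsto\hat\chi^\alpha(w_\white)\hat\chi^\beta(z_\white)\cdot F$ and the scalar function $z_\black\mapsto z^{[n]}_\black$---produces
\begin{align*}
(\hat\chi^\alpha_m\hat\chi^\beta_n)_{\gamma,\gamma^+}\cdot F
\ =\
\frac{\ii}{2\pi}\sqint_{\gamma^+}\dd{w}\,w^{[m]}_\black
\sum_{b\in\interior_\black\gamma} b^{[n]}\,\hat\chi^\alpha(w_\white)\deebar\hat\chi^\beta(b)\cdot F\,,
\end{align*}
the $z$-independent factor $\hat\chi^\alpha(w_\white)$ being pulled through the $z$-derivative by linearity.

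Next I would argue that every term of this finite double sum is null. Indeed, those with $\norm{b}\leq\NullRad{n}$ vanish outright since $b^{[n]}=0$. For the remaining terms, $\norm{b}>\NullRad{n}\geq R_F+3$ places $b$ at Manhattan distance at least $3$ from $\supp F$; likewise $w\in\gamma^+$ sits at distance at least $3$ from $\supp F$ by $\dist(\gamma,\supp F)>1$ and $\dist(\gamma,\gamma^+)>1$. Since $w$ is white and $b$ is black one has $w\neq b$ automatically, and Remark~\ref{null-rmk} certifies that $\hat\chi^\alpha(w_\white)\deebar\hat\chi^\beta(b)\cdot F\in\NuFi$. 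A finite linear combination of null fields is null, so $(\hat\chi^\alpha_m\hat\chi^\beta_n)_{\gamma,\gamma^+}\cdot F\in\NuFi$, establishing the lemma.

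The only subtle point---and the reason the argument works at all---is that $N$ can be chosen independently of $m$: even if $\SingRad{m}$ is enormous and forces $\gamma,\gamma^+$ to enclose a huge ball, the terms carrying nonzero weight $b^{[n]}$ all satisfy $\norm{b}>\NullRad{n}\geq R_F+3$ and are thereby far from $\supp F$, so they are null by Remark~\ref{null-rmk} no matter how far out $\gamma$ extends. The potentially dangerous $b$'s sitting close to $\supp F$ are precisely the ones killed by the vanishing of $b^{[n]}$ on $\Ball(0;\NullRad{n})$. Thus the truncation threshold $N$ depends only on $F$.
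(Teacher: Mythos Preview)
Your argument is correct, but the paper's proof is considerably shorter. Rather than applying Stokes and then splitting the resulting sum over $b\in\interior_\black\gamma$ into the ``$b^{[n]}=0$'' terms and the ``null by Remark~\ref{null-rmk}'' terms, the paper simply chooses the inner contour $\gamma$ small: once $\NullRad{N}$ exceeds the radius of $\supp F$, one can take a positively-oriented $\gamma$ with $\supp F\subset\interior\gamma\subset\Ball(0;\NullRad{N}-1)$, and then every black vertex touched by the inner integral has $z^{[n]}=0$, so $(\hat\chi^\alpha_m\hat\chi^\beta_n)_{\gamma,\gamma^+}\cdot F$ vanishes \emph{identically} as a local field, not merely modulo $\NuFi$. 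No Stokes, no Remark~\ref{null-rmk}.

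What your route buys is that it works directly for \emph{any} admissible pair $\gamma,\gamma^+$, so the $m$-independence is manifest: even when $\SingRad{m}$ is huge and forces $\gamma$ to be large, the only $b$'s carrying nonzero weight sit outside $\Ball(0;\NullRad{n})$ and hence far from $\supp F$. The paper's small $\gamma$ may fail the admissibility constraint $\Ball(0;\SingRad{m})\subset\interior\gamma$ for very negative $m$; the paper glosses over this, implicitly relying on the fact that the inner contour can be deformed using only $\SingRad{n}=0$ (an easy variant of Lemma~\ref{indepe-of-path}). So your argument is a bit heavier but more self-contained on exactly the point you flag in your final paragraph. One small cleanup: since you invoke Lemma~\ref{indepe-of-path} at the outset, you may as well \emph{choose} a convenient admissible pair with generous separations rather than fixing ``any'' pair; this makes the distance claims about $w$ on $\gamma^+$ (and $b$ not being adjacent to $w$, needed for Remark~\ref{null-rmk}) immediate without squeezing constants.
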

\begin{proof}
	Fix $m\in\Z$.
	Take $N\in\N$ large enough such that there exists a positively-oriented dual contour $\gamma$ satisfying $\supp F\subset\interior\gamma\subset\Ball(0;\NullRad{N}-1)$.
	Such a contour always exists since $\NullRad{n}$ grows with $n$ --- see Proposition \ref{monomials}.
	Then, clearly $(\hat\chi^\alpha_m\hat\chi^\beta_n)_{\gamma,\gamma^+}\cdot F=0$ for any $\gamma^+$ since the in\-te\-grand $z\mapsto z^{[n]}$ vanishes everywhere along $\gamma$.
\end{proof}

\subsection{Virasoro modes.}
For $n\in\Z$, the \emph{Virasoro modes} are defined as the linear operators
$$
\mathbf{L}_n\coloneqq\sum_{k\in\Z}\,\NormOrd{0}{\hat\chi^+_{n-k}\hat\chi^-_k}\,
$$
on $\Fields$, where the \emph{normally ordered} fermion modes are defined as
$$
\NormOrd{k}{\hat\chi^+_m\hat\chi^-_n}\,
\coloneqq
\begin{cases} 
\mspace{17mu}
{\hat\chi^+}_m {\hat\chi^-}_n
\mspace{20mu}
&
\textnormal{if}
\mspace{10mu}
n-m\geq k
\\
\,-{\hat\chi^-}_n {\hat\chi^+}_m 
\mspace{20mu}
&
\textnormal{if}
\mspace{10mu}
n-m< k
\end{cases}\,.
$$
Note that, for all $n\in\Z$, the operator $\mathbf{L}_n$ on $\Fields$ is well-defined since, by Lemma \ref{truncation}, only finitely many terms yield a non-zero field when $\mathbf{L}_n$ acts on a given field.
\epar

The following remark is used later in some proofs.

\begin{rmk}\label{normal-ord}
	For $k<l$,
	$$
	\NormOrd{k}{\hat\chi^+_m\hat\chi^-_n}-\NormOrd{l}{\hat\chi^+_m\hat\chi^-_n}
	=
	\begin{cases} 
	\big\{\hat\chi^+_m,\hat\chi^-_n\big\}
	\mspace{20mu}
	&
	\textnormal{if}
	\mspace{10mu}
	k\leq n-m<l
	\\
	\mspace{30mu}
	0
	\mspace{20mu}
	&
	\textnormal{otherwise}
	\end{cases}\,,
	$$
	for all $n,m\in\Z$.\hfill$\diamond$
\end{rmk}

\subsection{Central charge $-2$.}
The main result in this section states that the $\mathbf{L}_n$ satisfy the Virasoro commutation relations with central charge $-2$.
Let $[\,\cdot\,,\cdot\,]$ denote the usual commutator of operators, i.e. $[A\,,B]\coloneqq A\circ B-B\circ A$.

\begin{thm}\label{Virasoro}
	For $n,m\in\Z$
	$$
	\big[\mathbf L_n,\mathbf L_m\big]
	=
	(n-m)\mathbf L_{n+m}+\frac{c}{12}(n^3-n)\delta_{n+m}\id_{\Fields}
	$$
	with $c=-2$.
\end{thm}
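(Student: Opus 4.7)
The strategy is to derive the Virasoro relations from the fermion anticommutation relations of Proposition \ref{anticommutation}, applying both sides to a fixed representative $F+\NuFi\in\Fields$ so that Lemma \ref{truncation} reduces every sum to a finite one.

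The first step is to establish the \emph{weight-one primary relation}
\begin{equation*}
\big[\mathbf{L}_n,\hat\chi^\alpha_k\big] \;=\; -k\,\hat\chi^\alpha_{n+k}
\end{equation*}
for all $n,k\in\Z$ and $\alpha\in\{+,-\}$. Expanding $\mathbf{L}_n=\sum_j\NormOrd{0}{\hat\chi^+_{n-j}\hat\chi^-_j}$, which is a finite sum on any fixed local field by Lemma \ref{truncation}, I apply the fermionic identity $[BC,D]=B\{C,D\}-\{B,D\}C$ to each summand (both the straight and the swapped normal orderings yield the same result) and invoke Proposition \ref{anticommutation}: since $\{\hat\chi^\alpha,\hat\chi^\alpha\}=0$, exactly one of the two anticommutator terms survives and produces a single delta that collapses the $j$-sum to $-k\,\hat\chi^\alpha_{n+k}$. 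The scalar discrepancy between the normally-ordered and straight products dictated by Remark \ref{normal-ord} has zero commutator with $\hat\chi^\alpha_k$, so normal ordering is inert at this step.

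Next, I expand $\mathbf{L}_m=\sum_k\NormOrd{0}{\hat\chi^+_{m-k}\hat\chi^-_k}$ and compute $[\mathbf{L}_n,\NormOrd{0}{\hat\chi^+_{m-k}\hat\chi^-_k}]$ via the operator Leibniz rule $[\mathbf{L}_n,AB]=[\mathbf{L}_n,A]B+A[\mathbf{L}_n,B]$ together with the primary relation, obtaining
\begin{equation*}
\big[\mathbf{L}_n,\NormOrd{0}{\hat\chi^+_{m-k}\hat\chi^-_k}\big] \;=\; -(m-k)\,\hat\chi^+_{n+m-k}\hat\chi^-_k \;-\; k\,\hat\chi^+_{m-k}\hat\chi^-_{n+k}
\end{equation*}
in either normal-ordering case. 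Reindexing the second sum by $k\mapsto k-n$ and combining, the operator parts add up to $(n-m)\sum_j\hat\chi^+_{n+m-j}\hat\chi^-_j$. For $n+m\neq 0$, the anticommutator $\{\hat\chi^+_{n+m-j},\hat\chi^-_j\}=-(n+m-j)\delta_{n+m}$ vanishes, so each product equals its normal order (Remark \ref{normal-ord}) and $[\mathbf{L}_n,\mathbf{L}_m]=(n-m)\mathbf{L}_{n+m}$ with no anomaly.

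The central extension appears only for $n+m=0$. Setting $m=-n$ with $n>0$ (the case $n<0$ follows by antisymmetry of the commutator), I split each $f(k):=[\mathbf{L}_n,\NormOrd{0}{\hat\chi^+_{-n-k}\hat\chi^-_k}]$ into a normally-ordered operator part plus a scalar, sub-case by sub-case according to the two cutoffs $k<-n/2$ vs.\ $k\geq -n/2$ and $k<-n$ vs.\ $k\geq -n$. A direct case check shows that in every sub-case the operator part is $(n+k)\NormOrd{0}{\hat\chi^+_{-k}\hat\chi^-_k}-k\NormOrd{0}{\hat\chi^+_{-n-k}\hat\chi^-_{n+k}}$, which after reindexing sums to $2n\,\mathbf{L}_0=(n-m)\mathbf{L}_{n+m}$; while the scalar part equals $k(n+k)$ for $-n\leq k\leq -1$ and vanishes otherwise, contributing
\begin{equation*}
\sum_{k=-n}^{-1}k(n+k) \;=\; -\sum_{j=1}^{n}j(n-j) \;=\; -\frac{n^3-n}{6},
\end{equation*}
which matches $\tfrac{c}{12}(n^3-n)$ precisely when $c=-2$.

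The main obstacle is the bookkeeping in this last step: naively splitting the combined sum into its two reindexed pieces produces scalar anomalies that are individually divergent, so one must compute the scalar part of each $f(k)$ as a whole, exhausting the four sub-cases of the normal-ordering prescription. Fixing $F\in\LocFi$ and invoking Lemma \ref{truncation} restricts everything to a finite range of $k$ on which the scalar coefficient of $F$ is unambiguously defined, after which the elementary identity $\sum_{j=1}^{n}j(n-j)=(n^3-n)/6$ closes the argument and yields the central charge $c=-2$.
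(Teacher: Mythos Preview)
Your computation of the central term is correct, and your overall strategy is the standard Sugawara argument. There is, however, a genuine framework issue in your first step. In this paper the individual modes $\hat\chi^\alpha_k$ are \emph{not} defined as operators on $\Fields$: the space $\LocFi$ is generated by \emph{pairs} $\hat\varphi(z)\hat\varphi(w)$, and accordingly only the bilinear combinations $\hat\chi^\alpha_m\hat\chi^\beta_n$ (and the four-fold products needed for the nested anticommutators) are given meaning as maps $\Fields\to\Fields$. Your ``weight-one primary relation'' $[\mathbf L_n,\hat\chi^\alpha_k]=-k\,\hat\chi^\alpha_{n+k}$ therefore has no interpretation here, and neither does the intermediate expression $\hat\chi^+_{n-j}\{\hat\chi^-_j,\hat\chi^\alpha_k\}$ that your identity $[BC,D]=B\{C,D\}-\{B,D\}C$ would produce (a scalar times a single mode). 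The Leibniz decomposition $[\mathbf L_n,AB]=[\mathbf L_n,A]B+A[\mathbf L_n,B]$ is then not available either.

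The paper avoids this by computing $[\mathbf L_n,\NormOrd{m}{\hat\chi^+_l\hat\chi^-_k}]$ in one stroke (Lemma~\ref{comm-1step}), using the identity $[AB,CD]=A\{B,C\}D+\{A,C\}DB-C\{A,D\}B-AC\{B,D\}$, which stays entirely within pairs and the quadruple anticommutators supplied by Proposition~\ref{anticommutation}. The output is exactly what your two-step Leibniz argument would give, namely $-k\,\NormOrd{-(l+k)}{\hat\chi^+_l\hat\chi^-_{n+k}}-l\,\NormOrd{l+k}{\hat\chi^+_{n+l}\hat\chi^-_k}$, but with the normal orderings on the right tracked explicitly. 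From there the paper packages the shift-of-normal-ordering scalars into two auxiliary lemmas (Lemmas~\ref{trick1} and~\ref{trick2}) rather than a four-subcase analysis; your case-by-case extraction of the scalar $k(n+k)$ on $-n\le k\le -1$ and the evaluation $\sum_{j=1}^{n}j(n-j)=(n^3-n)/6$ are equivalent to that bookkeeping and are correct. So the fix is localized: replace the primary relation and Leibniz step by the direct pair computation of Lemma~\ref{comm-1step}, and the rest of your argument goes through.
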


The proof of this theorem is broken down into three separate lemmas.

\begin{lemma}\label{comm-1step}
	For any $n,m,l,k\in\Z$, 
	$$
	\big[\mathbf L_n,\NormOrd{m}{\hat\chi^+_l\hat\chi^-_k}\big]
	=
	-k\,\NormOrd{-(l+k)\phantom{-}}{\hat\chi^+_l\hat\chi^-_{n+k}}-l\,\NormOrd{l+k}{\hat\chi^+_{n+l}\hat\chi^-_{k}}\,.
	$$
\end{lemma}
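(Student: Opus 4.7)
The strategy is to reduce the commutator to one between bare quadratic monomials in the fermion modes, apply the fermionic Leibniz identity, and invoke Proposition \ref{anticommutation} to evaluate the resulting anticommutators; the only nontrivial point is then to reconcile the specific normal-ordering cutoffs on the two sides.

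First I would observe that, by Remark \ref{normal-ord} combined with Proposition \ref{anticommutation}, any two normal orderings $\NormOrd{m}{\hat\chi^+_a\hat\chi^-_b}$ and $\NormOrd{m'}{\hat\chi^+_a\hat\chi^-_b}$ differ by a scalar multiple of $\id_\Fields$. The same applies to each summand $\NormOrd{0}{\hat\chi^+_{n-j}\hat\chi^-_j}$ of $\mathbf{L}_n$. Since scalars are central, I may drop the normal ordering on both operators of the commutator and reduce to computing
\[
\big[\mathbf{L}_n,\NormOrd{m}{\hat\chi^+_l\hat\chi^-_k}\big]
\,=\,
\sum_{j\in\Z}\big[\hat\chi^+_{n-j}\hat\chi^-_j,\,\hat\chi^+_l\hat\chi^-_k\big],
\]
the sum being interpreted term-by-term since only finitely many summands act non-trivially on a given field, by Lemma \ref{truncation}.

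Next I would compute each bracket $[\hat\chi^+_{n-j}\hat\chi^-_j,\,\hat\chi^+_l\hat\chi^-_k]$. The Leibniz rule for commutators, together with the vanishing of $\{\hat\chi^+,\hat\chi^+\}$ and $\{\hat\chi^-,\hat\chi^-\}$, yields
\[
\hat\chi^+_{n-j}\{\hat\chi^-_j,\hat\chi^+_l\}\hat\chi^-_k\,-\,\hat\chi^+_l\{\hat\chi^+_{n-j},\hat\chi^-_k\}\hat\chi^-_j.
\]
Invoking the sandwich form of Proposition \ref{anticommutation} (with $d^{-+}=1$, $d^{+-}=-1$) turns this into $j\delta_{j+l}\,\hat\chi^+_{n-j}\hat\chi^-_k+(n-j)\delta_{n-j+k}\,\hat\chi^+_l\hat\chi^-_j$. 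Summing over $j$ collapses the two Kronecker deltas to the single contributions at $j=-l$ and $j=n+k$, producing
\[
-l\,\hat\chi^+_{n+l}\hat\chi^-_k\,-\,k\,\hat\chi^+_l\hat\chi^-_{n+k}.
\]

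Finally I would match this with the right-hand side of the lemma. The cutoffs $-(l+k)$ and $l+k$ are not innocuous: expanding each $\NormOrd{\cdot}{\cdot}$ via its definition and the relation $\hat\chi^+_a\hat\chi^-_b-\NormOrd{r}{\hat\chi^+_a\hat\chi^-_b}=-a\,\delta_{a+b}\,\id_\Fields$ when $b-a<r$, and zero otherwise, the scalar discrepancy between the two sides reduces to a sum of two terms, each proportional to $\delta_{l+n+k}$ and to the indicator that the corresponding cutoff triggers a sign flip. A short case check shows that, on the support $l+n+k=0$, both indicators are equivalent to the single condition $k<-n/2$, and the two scalar coefficients combine into $l(k+n+l)$, which vanishes since $l+n+k=0$. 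This delicate cancellation --- engineered precisely by the choices $\mp(l+k)$ of the cutoffs --- is the main (and only) obstacle in the argument; everything else is routine algebra.
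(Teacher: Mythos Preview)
Your argument is correct and arrives at the same identity by the same core mechanism (the fermionic Leibniz rule plus Proposition~\ref{anticommutation}), but your bookkeeping for the normal ordering differs from the paper's. You strip all normal ordering at the outset---using that scalars are central in commutators---compute the bare result $-l\,\hat\chi^+_{n+l}\hat\chi^-_k - k\,\hat\chi^+_l\hat\chi^-_{n+k}$, and then verify \emph{a posteriori} that reinserting the specific cutoffs $\pm(l+k)$ contributes a scalar that cancels. The paper instead keeps the normal ordering in $\mathbf L_n$, splits the sum at $j=n/2$, and applies the four-term identity $[AB,CD]=A\{B,C\}D+\{A,C\}DB-C\{A,D\}B-AC\{B,D\}$; the Kronecker deltas then pick out $j=-l$ and $j=n+k$, and whether these fall in the range $j\geq n/2$ or $j<n/2$ \emph{produces} the cutoffs $l+k$ and $-(l+k)$ directly. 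Your route is marginally slicker in the commutator computation but leaves the particular cutoffs looking mysterious until the final cancellation check; the paper's route explains where those cutoffs come from. One small point: your justification ``only finitely many summands act non-trivially \dots\ by Lemma~\ref{truncation}'' is better phrased as first invoking Lemma~\ref{truncation} for the \emph{normal-ordered} sum defining $\mathbf L_n$, and then using that each summand's commutator is unchanged upon dropping the scalar---the un-normal-ordered sum $\sum_j \hat\chi^+_{n-j}\hat\chi^-_j$ itself is not covered by that lemma.
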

\begin{proof}
	Note that,
	in the following chain of equalities,
	all infinite sums of operators produce a well-defined operator by Lemma \ref{truncation}.
	In particular, this property justifies pulling the infinite sum out of commutators.
	Then, one just needs to use the identity $[AB,CD]=A\{B,C\}D+\{A,C\}\circ DB-C\{A,D\}B-AC\circ\{B,D\}$:
	\begin{align*}
	\big[\mathbf L_n,\hat\chi^+_l\hat\chi^-_k\big]= &
	\sum_{j\in\Z}\big[\NormOrd{0}{\hat\chi^+_{n-j}\hat\chi^-_j},\hat\chi^+_l\hat\chi^-_k\big]
	\\
	= &
	\sum_{j\geq n/2} \big[{\hat\chi^+_{n-j}\hat\chi^-_j},\hat\chi^+_l\hat\chi^-_k\big]
	-
	\sum_{j< n/2} \big[{\hat\chi^-_j\hat\chi^+_{n-j}},\hat\chi^+_l\hat\chi^-_k\big]
	\\
	= & \sum_{j\geq n/2} \Big(\hat\chi^+_{n-j}\{\hat\chi^-_j,\hat\chi^+_{l}\}\hat\chi^-_k
	-
	\hat\chi^+_{l}\{\hat\chi^-_k,\hat\chi^+_{n-j}\}\hat\chi^-_j
	\Big)
	\\
	& \mspace{80mu} -
	\sum_{j< n/2} \Big(\{\hat\chi^-_j,\hat\chi^+_{l}\}\circ\hat\chi^-_k\hat\chi^+_{n-j}
	- 
	\hat\chi^-_j\hat\chi^+_{l}\circ\{\hat\chi^-_k,\hat\chi^+_{n-j}\}
	\Big)
	\\
	= &\  -l\,\NormOrd{l+k}{\hat\chi^+_{n+l}\hat\chi^-_k}-k\,\NormOrd{-(l+k)\phantom{-}}{\hat\chi^+_l\hat\chi^-_{n+k}}\,,
	\end{align*}
	where in the last step one needs to observe that, for fixed values of $l,k\in\Z$, two and only two of the four terms produce a non-zero operator.
	A similar computation yields $[\mathbf{L}_n,\hat\chi^-_k\hat\chi^+_l]=-[\mathbf{L}_n,\hat\chi^+_l\hat\chi^-_k]$, and the claim follows.
\end{proof}

Define $\odd,\even:\Zpos\longrightarrow\{0,1\}$ by $\odd(n)=1$ if $n$ is odd and $0$ otherwise, and $\even(n)\coloneqq1-\odd(n)$.
Define also $\Theta:\Z\longrightarrow\{0,1\}$ by $\Theta(n)=1$ if $n\geq 0$ and $0$ otherwise.

\begin{lemma}\label{trick1}
	For any $n\in\Z$ and $m\in\Z\setminus\{0\}$,
	$$
	\sum_{k\in\Z}\NormOrd{m}{\hat\chi^+_{n+m-k}\hat\chi^-_k}
	=
	\mathbf{L}_{n+m}
	-
	\delta_{n+m}\bigg(
	\frac{|m|}{2}\even(|m|)\Theta(-m)
	+
	\sum_{0< k<\frac{|m|}{2}}k
	\bigg)\id_{\Fields}
	\,.
	$$
\end{lemma}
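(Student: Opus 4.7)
The plan is to compare the sum on the left-hand side to $\mathbf{L}_{n+m}=\sum_{k\in\Z}\NormOrd{0}{\hat\chi^+_{n+m-k}\hat\chi^-_k}$ term by term, using Remark \ref{normal-ord} to rewrite each discrepancy $\NormOrd{m}{\hat\chi^+_{n+m-k}\hat\chi^-_k}-\NormOrd{0}{\hat\chi^+_{n+m-k}\hat\chi^-_k}$ as $\pm\{\hat\chi^+_{n+m-k},\hat\chi^-_k\}$ over an explicit range of indices $k$, and then evaluating each anticommutator by Proposition \ref{anticommutation}.

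Concretely, setting $a\coloneqq n+m-k$ and $b\coloneqq k$ so that $b-a=2k-(n+m)$, Remark \ref{normal-ord} gives
\begin{align*}
\NormOrd{m}{\hat\chi^+_a\hat\chi^-_b}-\NormOrd{0}{\hat\chi^+_a\hat\chi^-_b}
=
\begin{cases}
-\{\hat\chi^+_a,\hat\chi^-_b\} & \text{if } m>0 \text{ and } 0\leq b-a<m,\\
+\{\hat\chi^+_a,\hat\chi^-_b\} & \text{if } m<0 \text{ and } m\leq b-a<0,\\
0 & \text{otherwise.}
\end{cases}
\end{align*}
By Proposition \ref{anticommutation} with $\ud^{+-}=-1$, the anticommutator $\{\hat\chi^+_a,\hat\chi^-_b\}$ equals $-a\,\delta_{a+b}\id_{\Fields}=-(n+m-k)\delta_{n+m}\id_{\Fields}$. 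In particular every term vanishes unless $n+m=0$, confirming the prefactor $\delta_{n+m}$ on the right-hand side.

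Assuming now $n+m=0$, so $a=-k$ and the anticommutator is $k\,\id_{\Fields}$, the discrepancy reduces to an explicit finite arithmetic sum. For $m>0$ the range is $0\leq k<m/2$, yielding a contribution $-\sum_{0\leq k<m/2}k\,\id_{\Fields}$, which equals $-\sum_{0<k<m/2}k\,\id_{\Fields}$ since the $k=0$ term is zero; and with $\Theta(-m)=0$ this matches the stated formula. For $m<0$, writing $|m|=-m$, the range is $-|m|/2\leq k<0$, and reindexing by $k\mapsto -k$ gives a contribution $-\sum_{0<k\leq |m|/2}k\,\id_{\Fields}$. Splitting off the boundary term $k=|m|/2$ (which occurs exactly when $|m|$ is even) from the strict sum recovers the combination $\tfrac{|m|}{2}\even(|m|)+\sum_{0<k<|m|/2}k$, and $\Theta(-m)=1$ in this case. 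In both cases the identity follows.

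The only real technicality is the case analysis keeping the index ranges straight and the boundary term $k=|m|/2$ that appears exactly when $|m|$ is even and $m<0$; I expect this to be where a careful bookkeeping is needed but not conceptually hard. The underlying ingredients — the fermionic anticommutation relation of Proposition \ref{anticommutation}, the normal ordering identity of Remark \ref{normal-ord}, and the well-definedness of the infinite sums guaranteed by Lemma \ref{truncation} — do all the work.
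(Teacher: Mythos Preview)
Your proposal is correct and follows essentially the same approach as the paper's proof: compare the two normal orderings term by term via Remark~\ref{normal-ord}, evaluate the resulting anticommutators with Proposition~\ref{anticommutation}, and reduce to the explicit finite sums $-\sum_{0\le k<m/2}k$ for $m>0$ and $\sum_{m/2\le k<0}k$ for $m<0$. Your write-up simply spells out the final bookkeeping (the reindexing and the boundary term $k=|m|/2$) that the paper leaves implicit.
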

\begin{proof}
	Again, the infinite sums are well-defined by Lemma \ref{truncation}.
	From Remark \ref{normal-ord} and Proposition \ref{anticommutation} it follows
	$$
	\sum_{k\in\Z}\NormOrd{m}{\hat\chi^+_{n+m-k}\hat\chi^-_k}
	=
	\sum_{k\in\Z}\NormOrd{0}{\hat\chi^+_{n+m-k}\hat\chi^-_k}
	+\delta_{n+m}\id_{\Fields}\cdot
	\begin{cases}
	\mspace{17mu}
	\sum_{\frac{m}{2}\leq k<0}k
	\mspace{10mu}
	&
	\textnormal{if}
	\mspace{10mu}
	m<0
	\\
	- 
	\sum_{0\leq k<\frac{m}{2}}k
	\mspace{10mu}
	&
	\textnormal{if}
	\mspace{10mu}
	m>0
	\end{cases}\,,
	$$
	from which the claim follows immediately.
\end{proof}

\begin{lemma}\label{trick2}
	For any $n\in\Z$ and $m\in\Z\setminus\{0\}$,
	$$
	\sum_{k\in\Z}k\,\bigg(\NormOrd{m}{\hat\chi^+_{n+m-k}\hat\chi^-_k}-\NormOrd{-m\phantom{-}}{\hat\chi^+_{n+m-k}\hat\chi^-_k}\bigg)
	=
	-\bigg(\frac{m^3-m}{12}+\frac{m}{4}\even(|m|)\bigg)\delta_{n+m}\id_{\Fields}\,.
	$$
\end{lemma}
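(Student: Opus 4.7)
The plan is to reduce the infinite sum on the left to a finite arithmetic one by using Remark \ref{normal-ord} to localise the difference of normal orderings, and then applying Proposition \ref{anticommutation} to evaluate the resulting anticommutators.

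First I would apply Remark \ref{normal-ord} with the two thresholds $p = \min(m,-m)$ and $q = \max(m,-m)$. Writing $l = n+m-k$ and noting that $k - l = 2k - (n+m)$, one sees that the summand $\NormOrd{m}{\hat\chi^+_{n+m-k}\hat\chi^-_k} - \NormOrd{-m\phantom{-}}{\hat\chi^+_{n+m-k}\hat\chi^-_k}$ vanishes outside a finite window of $|m|$ consecutive integer values of $k$, and inside that window it equals $-\sgn(m)\,\{\hat\chi^+_{n+m-k},\hat\chi^-_k\}$. Explicitly, for $m > 0$ the window is $\lceil -m/2\rceil \le k < \lceil m/2\rceil$, and for $m < 0$ it is $\lceil m/2\rceil \le k < \lceil -m/2\rceil$.

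Next I would invoke Proposition \ref{anticommutation} to write $\{\hat\chi^+_{n+m-k},\hat\chi^-_k\} = -(n+m-k)\delta_{n+m}\id_{\Fields}$. The factor $\delta_{n+m}$ annihilates the sum unless $n+m = 0$, matching the right-hand side, and reduces the whole analysis to that single case. When $n = -m$, the anticommutator becomes $k\,\id_{\Fields}$, so the sum collapses to
\begin{equation*}
\sum_{k \in \Z}k\bigl(\NormOrd{m}{\hat\chi^+_{-k}\hat\chi^-_k}-\NormOrd{-m\phantom{-}}{\hat\chi^+_{-k}\hat\chi^-_k}\bigr) \;=\; -\sgn(m)\sum_{k}k^2\,\id_{\Fields}\,,
\end{equation*}
where $k$ runs over the finite window described above.

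Finally I would evaluate this sum of squares using the elementary identities
\begin{equation*}
\sum_{k=-(N-1)/2}^{(N-1)/2}\!\!k^2 \;=\; \tfrac{N(N^2-1)}{12} \quad (N \text{ odd}), \qquad \sum_{k=-N/2}^{N/2-1}\!\!k^2 \;=\; \tfrac{N(N^2+2)}{12} \quad (N \text{ even}),
\end{equation*}
applied with $N = |m|$. Multiplying by $-\sgn(m)$ and rewriting $|m|(m^2\pm *) = \pm\sgn(m)\,m\,(m^2\pm *)$ produces the term $-\tfrac{m^3-m}{12}$ in both parities; the additional $\tfrac{3m}{12} = \tfrac{m}{4}$ discrepancy in the even case provides exactly the correction $-\tfrac{m}{4}\even(|m|)$ on the right-hand side. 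The main obstacle is purely bookkeeping: tracking the sign $\sgn(m)$, the parity of $|m|$, and the endpoint $\mp|m|/2$ that makes the even window asymmetric. None of this is subtle, but a single sign slip here would change the central charge computed in Theorem \ref{Virasoro}, so all four combinations of $\sgn(m)$ and parity of $|m|$ should be verified explicitly.
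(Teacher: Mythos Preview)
Your proposal is correct and follows essentially the same route as the paper: apply Remark~\ref{normal-ord} to localise the difference of normal orderings to the window $-|m|/2 \le k < |m|/2$ (with the sign $-\sgn(m)$), use Proposition~\ref{anticommutation} to turn each anticommutator into $k\,\delta_{n+m}\id_{\Fields}$, and then evaluate the resulting finite sum $-\sgn(m)\sum k^2$. The paper's proof records exactly this in a two-case display and leaves the arithmetic implicit; your version spells out the parity-dependent evaluation of $\sum k^2$, which is the only place any real care is needed.
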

\begin{proof}
	Again, the infinite sums are well-defined by Lemma \ref{truncation}.
	From Remark \ref{normal-ord} and Proposition \ref{anticommutation} it follows
	$$
	\sum_{k\in\Z}k\,\bigg(\NormOrd{m}{\hat\chi^+_{n+m-k}\hat\chi^-_k}-\NormOrd{-m\phantom{-}}{\hat\chi^+_{n+m-k}\hat\chi^-_k}\bigg)
	=
	\delta_{n+m}\id_{\Fields}\cdot
	\begin{cases}
	\mspace{17mu}
	\sum_{\frac{m}{2}\leq k<-\frac{m}{2}}k^2
	\mspace{10mu}
	&
	\textnormal{if}
	\mspace{10mu}
	m<0
	\\
	- 
	\sum_{-\frac{m}{2}\leq k<\frac{m}{2}}k^2
	\mspace{10mu}
	&
	\textnormal{if}
	\mspace{10mu}
	m>0
	\end{cases}\,,
	$$
	from which the claim follows immediately.
	\end{proof}

\begin{proof}[Proof of Theorem \ref{Virasoro}]
Using Lemma \ref{comm-1step},
\begin{align*}
\big[\mathbf L_n,\mathbf L_m\big]
& =
\sum_{k\in\Z}\big[\mathbf L_n,\NormOrd{0}{\hat\chi^+_{m-k}\hat\chi^-_k}\big]
=
-\sum_{k\in\Z}\bigg(k\,\NormOrd{-m\phantom{-}}{\hat\chi^+_{m-k}\hat\chi^-_{n+k}}+(m-k)\NormOrd{m}{\hat\chi^+_{n+m-k}\hat\chi^-_{k}}\bigg)
\\
& = -
\sum_{k\in\Z}\bigg((k-n)\NormOrd{-m\phantom{-}}{\hat\chi^+_{m+n-k}\hat\chi^-_{k}}+(m-k)\NormOrd{m}{\hat\chi^+_{n+m-k}\hat\chi^-_{k}}\bigg)
\\
& = 
n\,\sum_{k\in\Z}\NormOrd{-m\phantom{-}}{\hat\chi^+_{m+n-k}\hat\chi^-_{k}}
-
m\,\sum_{k\in\Z}\NormOrd{m}{\hat\chi^+_{n+m-k}\hat\chi^-_{k}}
+
\sum_{k\in\Z}k\,\bigg(\NormOrd{m}{\hat\chi^+_{n+m-k}\hat\chi^-_{k}}-\NormOrd{-m\phantom{-}}{\hat\chi^+_{m+n-k}\hat\chi^-_{k}}\bigg)\,.
\end{align*}
At this point, it is clear that for $m=0$, $[\mathbf L_n,\mathbf L_0]=n\mathbf{L}_{n}$.
For $m\neq0$, using Lemmas \ref{trick1} and \ref{trick2} on the above expression,
\begin{align*}
\big[\mathbf L_n,\mathbf L_m\big]
& =
(n-m)\mathbf{L}_{n+m}
-
n\,\delta_{n+m}\bigg(
\frac{|m|}{2}\even(|m|)\Theta(m)
+
\sum_{0< k<\frac{|m|}{2}}k
\bigg)\id_{\Fields}
\\
& \mspace{80mu}
+m\,
\delta_{n+m}\bigg(
\frac{|m|}{2}\even(|m|)\Theta(-m)
+
\sum_{0< k<\frac{|m|}{2}}k
\bigg)\id_{\Fields}
\\
& \mspace{80mu}
-\delta_{n+m}\bigg(\frac{m^3}{12}+\frac{m}{6}-\frac{m}{4}\odd(m)\bigg)\id_{\Fields}\,.
\end{align*}
The proof is finally complete using $\sum_{0< k<\frac{|m|}{2}}k=(m^2-2|m|\even(|m|)-\odd(|m|))/8$ and $\Theta(m)+\Theta(-m)=1$ for $m\neq0$.
\end{proof}

\newpage
\titleformat{\section}
{\normalfont\Large\bfseries}{\thesection}{0pt}{}


\begin{thebibliography}{99999999}


\bibitem[BW22]{BaiWan}
\textsc{T. Bai, and Y. Wan}.
\textit{On the crossing estimates for simple conformal loop ensembles}.
International Mathematics Research Notices, rnac173 (2022).
\\
\url{https://doi.org/10.1093/imrn/rnac173}


\bibitem[BC21]{BasChe}
\textsc{M. Basok, and D. Chelkak}.
\textit{Tau-functions \`a la Dub\'edat and probabilities of cylindrical events for double-dimers and CLE(4)}.
Journal of the European Mathematical Society \textbf{23}(8), 2787--2832 (2021).
\\
\url{https://doi.org/10.4171/jems/1072}

	
\bibitem[BPZ84a]{BPZ1}
\textsc{A.A. Belavin, A.M. Polyakov, and A.B. Zamolodchikov}.
\textit{Infinite conformal symmetry in two-dimensional quantum field theory}.
{Nuclear Physics B} \textbf{241}(2), 333--380 (1984).
\\
\url{https://doi.org/10.1016/0550-3213(84)90052-X}


\bibitem[BPZ84b]{BPZ2}
\textsc{A.A. Belavin, A.M. Polyakov, and A.B. Zamolodchikov}.
\textit{Infinite conformal symmetry of critical fluctuations in two dimensions}.
{Journal of Statistical Physics} \textbf{34}, 763--774 (1984).
\\
\url{https://doi.org/10.1007/BF01009438}


\bibitem[BLR20]{Berestycki}
\textsc{N. Berestycki, B. Laslier, and G. Ray}.
\textit{Dimers and imaginary geometry}.
{Annals of Probability} \textbf{48}(1), 1--52 (2020).
\\
\url{https://doi.org/10.1214/18-AOP1326}


\bibitem[BG19]{Bufetov}
\textsc{A. Bufetov, and V. Gorin}.
\textit{Fourier transform on high-dimensional unitary groups with applications to random tilings}.
{Duke Mathematical Journal} \textbf{168}(13), 2559--2649 (2019).
\\
\url{https://doi.org/10.1215/00127094-2019-0023}


\bibitem[CSS13]{CSS}
\textsc{S. Caracciolo, A.D. Sokal, and A. Sportiello}.
\textit{Algebraic/combinatorial proofs of Cayley-type identities for derivatives of determinants and pfaffians}.
Advances in Applied Mathematics \textbf{50}(4), 474--594 (2013).
\\
\url{https://doi.org/10.1016/j.aam.2012.12.001}


\bibitem[Dub19]{Dubedat}
\textsc{J. Dub\'edat}.
\textit{Double dimers, conformal loop ensembles and isomonodromic deformations}.
Journal of the European Mathematical Society \textbf{21}(1), 1--54 (2019).
\\
\url{https://doi.org/10.4171/jems/830}


\bibitem[FT10]{Fisher}
\textsc{M.E. Fisher, and H.N.V. Temperley}.
\textit{Dimer problem in statistical mechanics --- An exact result}.
{The Philosophical Magazine} \textbf{6}(68), 1061--1063 (2010).
\\
\url{https://doi.org/10.1080/14786436108243366}

\newpage

\bibitem[FR37]{Fowler}
\textsc{R.H. Fowler, and G.S. Rushbrooke}
\textit{An attempt to extend the statistical theory of perfect solutions}.
Transactions of the Faraday Society \textbf{33}, 1272--1294 (1937). 
\\
\url{https://doi.org/10.1039/TF9373301272}


\bibitem[GK99]{GaKa}
\textsc{M.R. Gaberdiel, and H.G. Kausch.}
\textit{A local logarithmic conformal field theory}.
Nuclear Physics B \textbf{538}(3), 631--658 (1999). 
\\
\url{https://doi.org/10.1016/S0550-3213(98)00701-9}


\bibitem[Gur93]{Guraire}
\textsc{V. Guraire}.
\textit{Logarithmic operators in Conformal Field Theory}.
Nuclear Physics B \textbf{410}(3), 535--549 (1993).
\\
\url{https://doi.org/10.1016/0550-3213(93)90528-W}


\bibitem[HKV22]{HKV}
\textsc{C. Hongler, K. Kyt\"ol\"a, and F. Viklund}.
\textit{Conformal Field Theory at the lattice level: Discrete complex analysis and Virasoro structure}.
{Communications in Mathematical Physics} \textbf{395}, 1--58 (2022).
\\
\url{https://doi.org/10.1007/s00220-022-04475-x}


\bibitem[IPRH05]{IPRH}
\textsc{N.S. Izmailian, V.B. Priezzhev, P. Ruelle, and C.K. Hu}.
\textit{Logarithmic Conformal Field Theory and boundary effects in the dimer model}.
{Physical Review Letters} \textbf{95}(26), 260602 (2005).
\\
\url{https://doi.org/10.1103/PhysRevLett.95.260602}


\bibitem[Kas61]{Kast}
\textsc{P.W. Kasteleyn}.
\textit{The statistics of dimers on a lattice: I. The number of dimer arrangements on a quadratic lattice}.
{Physica} \textbf{27}(12), 1209--1225 (1961).
\\
\url{https://doi.org/10.1016/0031-8914(61)90063-5}


\bibitem[Kau95]{Kausch1}
\textsc{H.G. Kausch}.
\textit{Curiosities at $c=-2$}.
\url{arXiv:hep-th/9510149} (1995).


\bibitem[Kau00]{Kausch2}
\textsc{H.G. Kausch}.
\textit{Symplectic fermions}.
Nuclear Physics B \textbf{583}(3), 513--541 (2000).
\\
\url{https://doi.org/10.1016/S0550-3213(00)00295-9}


\bibitem[Ken00]{Kenyon1}
\textsc{R. Kenyon}.
\textit{Conformal invariance of domino tiling.}
{Annals of Probability} \textbf{28}(2), 759--795 (2000).
\\
\url{https://doi.org/10.1214/aop/1019160260}


\bibitem[Ken01]{Kenyon2}
\textsc{R. Kenyon}.
\textit{Dominos and the Gaussian Free Field.}
{Annals of Probability} \textbf{29}(3), 1128--1137 (2001).
\\
\url{https://doi.org/10.1214/aop/1015345599}


\bibitem[Ken14]{Kenyon3}
\textsc{R. Kenyon}.
\textit{Conformal invariance of loops in the double-dimer model.}
{Communications in Mathematical Physics} \textbf{326}, 477--497 (2014).
\\
\url{https://doi.org/10.1007/s00220-013-1881-0}


\bibitem[KR09]{KytRid}
\textsc{K. Kyt\"ol\"a, and D. Ridout}.
\textit{On staggered indecomposable Virasoro modules.}
Journal of Mathematical Physics \textbf{50}, 123503 (2009).
\\
\url{https://doi.org/10.1063/1.3191682}


\bibitem[LL10]{LawLim}
\textsc{G.F. Lawler, and V. Limic}.
\textit{Random walk: A modern introduction}.
{Cambridge University Press} (2010).
\\
\url{https://doi.org/10.1017/CBO9780511750854}


\bibitem[RS05]{RohSch}
\textsc{S. Rohde, and O. Schramm}.
\textit{Basic properties of SLE}.
Annals of Mathematics \textbf{161}(2), 883--924 (2005).
\\
\url{https://doi.org/10.4007/annals.2005.161.883}


\bibitem[Rus21]{Russkikh}
\textsc{M. Russkikh}.
\textit{Dominos on hedgehog domains}.
Annales de l'Institut Henri Poincar\'e D: Combinatorics, Physics and their interactions \textbf{8}(1), 1--33 (2021).
\\
\url{https://doi.org/10.4171/aihpd/96}






\end{thebibliography}
\end{document}